\onecolumn \linespread{1.05}
\renewcommand{\L}{\bm{L}}
\newcommand{\norm}[1]{\left\|#1\right\|}
\newtheorem{theorem}{Theorem}
\newtheorem{lem}[theorem]{Lemma}
\newtheorem{claim}[theorem]{Claim}
\newtheorem{corollary}[theorem]{Corollary}
\newtheorem{definition}[theorem]{Definition}
\newtheorem{remark}[theorem]{Remark}
\newcommand{\bi}{\begin{itemize}}
\newcommand{\ei}{\end{itemize}}
\newcommand{\ben}{\begin{enumerate}}
\newcommand{\een}{\end{enumerate}}
\newcommand{\bean}{\begin{eqnarray*} }
\newcommand{\eean}{\end{eqnarray*} }
\newcommand{\bea}{\begin{eqnarray} }
\newcommand{\eea}{\end{eqnarray} }
\newcommand{\ba}{\begin{align*} }
\newcommand{\ea}{\end{align*} }
\newcommand{\nn}{\nonumber}
\newcommand{\dif}{{\text{dif}}}
\newcommand{\xhat}{\bm{\hat{x}}}
\newcommand{\bl}{\begin{frame}}
\newcommand{\el} {\end{frame}}
\newcommand{\cred}{\color{red}} 
\newcommand{\svdeq}{\overset{\mathrm{SVD}}=}
\renewcommand\thetheorem{\arabic{section}.\arabic{theorem}}
\newcommand{\tmax}{d} 
\newcommand{\wt}{\bm{w}_t}
\newcommand{\xt}{\bm{x}_t}
\newcommand{\x}{\bm{x}}
\newcommand{\xhatt}{\hat{\bm{x}}_t}
\renewcommand{\l}{\bm{\ell}}
\newcommand{\lt}{\bm{\ell}_t}
\newcommand{\lhat}{\hat{\bm{\ell}}}
\newcommand{\lhatt}{\hat{\l}_t}   
\newcommand{\yt}{\bm{y}_t}
\newcommand{\y}{\bm{y}}
\newcommand{\tty}{\tilde{\bm{y}}}
\newcommand{\w}{\bm{w}}
\renewcommand{\v}{{\bm{\nu}}}
\newcommand{\vt}{\v_t}
\renewcommand{\a}{\bm{a}}
\newcommand{\e}{\bm{e}} 
\newcommand{\et}{\bm{e}_t}
\newcommand{\new}{\mathrm{new}}
\newcommand{\at}{\bm{a}_t}
\newcommand{\I}{\bm{I}}
\newcommand{\Lam}{\bm{\Lambda}}
\newcommand{\T}{\mathcal{T}}
\newcommand{\J}{\mathcal{J}}
\newcommand{\D}{\bm{D}}
\newcommand{\A}{\bm{A}}
\newcommand{\old}{\mathrm{old}}
\newcommand{\Lhat}{\hat{\bm{L}}}
\renewcommand{\P}{\bm{P}}
\newcommand{\V}{\bm{V}}
\newcommand{\R}{\bm{R}}
\newcommand{\ch}{{\mathrm{ch}}}
\newcommand{\fx}{{\mathrm{fix}}}
\renewcommand{\b}{\bm{b}}
\newcommand{\Phat}{\hat{\bm{P}}}
\newcommand{\Span}{\operatorname{span}} 
\newcommand{\basis}{\operatorname{basis}}
\newcommand{\E}{\mathbb{E}}
\newcommand{\train}{\mathrm{train}}
\newcommand{\That}{\hat{\mathcal{T}}}
\newcommand{\SE}{\sin\theta_{\max}}
\newcommand{\that}{{\hat{t}}}
\newcommand{\M}{\bm{M}}
\newcommand{\X}{\bm{X}}
\newcommand{\Y}{\bm{Y}}
\newcommand{\W}{\bm{W}}
\newcommand{\Z}{\bm{Z}}
\newcommand{\outfracrow}{\small{\text{max-outlier-frac-row}}}
\newcommand{\outfraccol}{\small{\text{max-outlier-frac-col}}}
\newcommand\diagmat[2]{\begin{bmatrix} #1 & \bm{0} \\ \bm{0} & #2\end{bmatrix}} 
\newcommand{\one}{\bm{1}}
\newcommand{\ed}{\mathrm{end}}
\newcommand{\SVD}{{SVD}}
\newcommand{\B}{\bm{B}}
\newcommand{\alphadel}{\alpha}
\renewcommand{\Re}{\mathbb{R}}
\newcolumntype{C}[1]{>{\centering\let\newline\\\arraybackslash\hspace{0pt}}m{#1}}
\newcommand{\init}{\mathrm{init}}
\newcommand{\offline}{\mathrm{offline}}
\newcommand{\bz}{b} 
\newcommand{\rmat}{r_{\mat}}
\newcommand{\semax}{\Delta_{\dist}}
\newcommand{\Aa}{\bm{Q}_1}
\newcommand{\Ba}{\bm{Q}_2}
\newcommand{\Ca}{\bm{Q}_3}
\newcommand{\subup}{\text{SubUpd}}
\newcommand{\st}{\bm{x}_t}
\newcommand{\Mt}{\bm{M}_t}
\newcommand{\mot}{\bm{M}_{1,t}}
\newcommand{\mtt}{\bm{M}_{2,t}}
\newcommand{\ep}{\mathbb{E}}
\newcommand{\atf}{\bm{a}_{t,\fx}}
\newcommand{\atr}{\bm{a}_{t,\ch}}
\newcommand{\zt}{\bm{Z}_t}
\newcommand{\rfix}{r}
\newcommand{\lfp}{\lambda^{+}}
\newcommand{\zz} {\varepsilon} 
\newcommand{\Tt}{\mathcal{T}_t}
\newcommand{\Thatt}{\hat{\mathcal{T}}_t}
\newcommand{\shatcs}{\xhat_{t,cs}}
\newcommand{\ezero}{\mathcal{E}_0}
\newcommand{\smin}{x_{\min}}
\newcommand{\pt}{\bm{P}}
\newcommand{\shatt}{\hat{\bm{x}}_t}
\newcommand{\itt}{\bm{I}_{\Tt}}
\newcommand{\bphi}{\bm{\Phi}}
\newcommand{\bpsi}{\bm{\Psi}}
\newcommand{\lthres}{\omega_{evals}} 
\newcommand{\tildej}{j}
\newcommand{\tikzmark}[1]{\tikz[overlay,remember picture] \node (#1) {};}
\newcommand*{\AddNote}[4]{%
    \begin{tikzpicture}[overlay, remember picture]
        \draw [decoration={brace,amplitude=0.5em},decorate,ultra thick,red]
            ($(#3)!(#1.north)!($(#3)-(0,1)$)$) --  
            ($(#3)!(#2.south)!($(#3)-(0,1)$)$)
                node [align=center, text width=2.5cm, pos=0.5, anchor=west] {#4};
    \end{tikzpicture}
}%
\newcommand*{\AddNoteOne}[4]{%
    \begin{tikzpicture}[overlay, remember picture]
        \draw [red]
            ($(#3)!(#1.north)!($(#3)-(0,1)$)$) --  
            ($(#3)!(#2.south)!($(#3)-(0,1)$)$)
                node [align=center, text width=2.5cm, pos=0.5, anchor=west] {#4};
    \end{tikzpicture}
}
\pgfplotsset{
        my stylecompare/.style={
			width=.4\textwidth,
            height=4.5cm,
            label style={font=\Large},
            title style={font=\Large},
            x tick label style={font =\small, /pgf/number format/1000 sep=},
	    axis lines=left,
	    major x tick style = transparent,
	    major y tick style = transparent,
	    every y tick label/.style={
 		   xshift=-.7cm, yshift=-2pt,anchor=south west,inner sep=0pt,font=\small, 
 		   scaled y ticks=false,
	    },
	    every x tick label/.style={font = \small},
            scaled x ticks=false,
        },
        my legend style compare/.style={
            legend entries={
            		GRASTA,
            		ORPCA,
            		s-ReProCS,	
            		NORST,	
            		Offline-NORST,
            		Alt Proj,
            		RPCA-GD,
            },
            legend style={
                at={(0.1,1.5)},
                anchor=north west,
            },
            legend columns=2,
	    legend style={font=\small},
        },
        cycle multi list={
        {blue, line width=0.6pt, mark=o,mark size=3pt}, 
        {black, line width=0.6pt, mark=square,mark size=2.5pt}, 
        {teal, line width=0.6pt, mark=oplus,mark size=2.5pt},
        {red, line width=0.6pt, mark=triangle,mark size=2.5pt},
        {red, solid, line width=0.5pt, mark=oplus,mark size=2.8pt}, 
        {olive, line width=0.6pt, mark=10-pointed star,mark size=2.5pt},  
        {cyan, line width=0.6pt, mark=Mercedes star,mark size=3pt}, 
        {teal, line width=0.6pt, mark=oplus,mark size=2.5pt}, 
		},
}
\def\MarkLt{4pt}
\def\MarkSep{2pt}
\tikzset{
  TwoMarks/.style={
    postaction={decorate,
      decoration={
        markings,
        mark=at position #1 with
          {
              \begin{scope}[xslant=0.2]
              \draw[line width=\MarkSep,white,-] (0pt,-\MarkLt) -- (0pt,\MarkLt) ;
              \draw[-] (-0.5*\MarkSep,-\MarkLt) -- (-0.5*\MarkSep,\MarkLt) ;
              \draw[-] (0.5*\MarkSep,-\MarkLt) -- (0.5*\MarkSep,\MarkLt) ;
              \end{scope}
          }
       }
    }
  },
  TwoMarks/.default={0.5},
}
\tikzstyle{block}  = [rectangle, draw, rounded corners, text width=5cm, text centered, minimum height=1em]
\tikzstyle{smallblock}  = [rectangle, draw, rounded corners,text width=1.8cm, text centered, minimum height=1em]
\tikzstyle{input}  = [rectangle, draw, text width=1.2cm, text centered, minimum height=1em]
\tikzstyle{output}  = [rectangle, draw, text width=1.2cm, text centered, minimum height=1em]
\tikzstyle{block1}  = [rectangle, draw, rounded corners,text width=5cm, text centered, minimum height=1em]
\tikzstyle{blockl1}  = [rectangle, draw, rounded corners,text width=4cm, text centered, minimum height=1em]
\renewcommand{\zt}{\bm{z}_t}
\renewcommand{\semax}{\Delta}
\newcommand{\chd}{{\mathrm{chd}}}
\newcommand{\xmint}{x_{\min}}
\renewcommand{\dif}{\Delta}
\renewcommand{\rmat}{r_{\scriptscriptstyle{L}}}
\renewcommand{\S}{\bm{X}}
\renewcommand{\SE}{\mathrm{SE}}
\renewcommand{\atr}{\bm{a}_{t, \chd}}
\begin{document}

\title{Nearly Optimal Robust Subspace Tracking}
\author{Praneeth Narayanamurthy and Namrata Vaswani \\
\texttt{\{pkurpadn, namrata\}@iastate.edu} \\
Department of Electrical and Computer Engineering, \\
Iowa State University, Ames, IA \thanks{A shorter version of this manuscript \cite{rrpcp_icmltemp} will be presented at ICML, 2018. Another small part, Corollary \ref{thm:rpca}, will appear in \cite{rrpcp_merop}.}}

\maketitle

\renewcommand\thetheorem{\arabic{section}.\arabic{theorem}}
%
%
%
%
%
%
%
%
%
%
%
%
%
%
%
%

\begin{abstract}
In this work, we study the robust subspace tracking (RST) problem and obtain one of the first two provable guarantees for it. The goal of RST is to track sequentially arriving data vectors that lie in a slowly changing low-dimensional subspace, while being robust to corruption by additive sparse outliers. It can also be interpreted as a dynamic (time-varying) extension of robust PCA (RPCA), with the minor difference that RST also requires a short tracking delay. We develop a recursive projected compressive sensing  algorithm that we call Nearly Optimal RST via ReProCS (ReProCS-NORST) because its tracking delay is nearly optimal.
We prove that NORST solves both the RST and the dynamic RPCA problems under weakened standard RPCA assumptions, two simple extra assumptions (slow subspace change and  most outlier magnitudes lower bounded), and a few minor assumptions.

Our guarantee shows that NORST enjoys a near optimal tracking delay of $O(r \log n \log(1/\epsilon))$. Its required delay between subspace change times is the same, and its memory complexity is $n$ times this value. Thus both these are also nearly optimal. Here $n$ is the ambient space dimension, $r$ is the subspaces' dimension, and $\epsilon$ is the tracking accuracy. NORST also has the best outlier tolerance compared with all previous RPCA or RST methods, both theoretically and empirically (including for real videos), without requiring any model on how the outlier support is generated. This is possible because of the extra assumptions it uses.
\end{abstract}


\renewcommand{\subsubsection}[1]{{\bf #1. }} 

\section{Introduction}
Principal Components Analysis (PCA) is one of the most widely used dimension reduction techniques. It finds a small number of orthogonal basis vectors, called principal components, along which most of the variability of the dataset lies.
According to its modern definition \cite{rpca}, {robust PCA (RPCA)} is the problem of decomposing a given data matrix into the sum of a low-rank matrix (true data) and a sparse matrix (outliers). The column space of the low-rank matrix then gives the desired principal subspace (PCA solution).
A common application of RPCA is in video analytics in separating a video into a slow-changing background image sequence (modeled as a low-rank matrix) and a foreground image sequence consisting of moving objects or people (sparse) \cite{rpca}. 
{\em Robust Subspace Tracking (RST)} can be simply interpreted as a time-varying extension of RPCA. It assumes that the true data lies in a low-dimensional subspace that can change with time, albeit slowly. The goal is to track this changing subspace over time in the presence of additive sparse outliers. The offline version of this problem can be called {\em dynamic (or time-varying) RPCA}. RST requires the tracking delay to be small, while dynamic RPCA does not.
Time-varying subspace is a more appropriate model for long data sequences, e.g., long surveillance videos, since if a single subspace model is used the resulting matrix may not be sufficiently low-rank. Moreover the RST problem setting (short tracking delay) is most relevant for applications where real-time or near real-time estimates are needed, e.g., video-based surveillance (object tracking) \cite{cv_app}, monitoring seismological activity \cite{sigproc_app}, or detection of anomalous behavior in dynamic social networks \cite{selin_reprocs}.

In recent years, RPCA has since been extensively studied. Many fast and provably correct approaches now exist: PCP introduced in \cite{rpca} and studied in \cite{rpca,rpca2,rpca_zhang}, AltProj \cite{robpca_nonconvex}, RPCA-GD \cite{rpca_gd} and NO-RMC \cite{rmc_gd}. There is much lesser work on provable dynamic RPCA and RST: original-ReProCS \cite{rrpcp_perf,rrpcp_isit15,rrpcp_aistats} for dynamic RPCA and simple-ReProCS \cite{rrpcp_dynrpca} for both. The subspace tracking (ST) problem (without outliers), and with or without missing data, has been studied for much longer in \cite{past,adaptivesigproc_book,petrels,local_conv_grouse}. However, all existing guarantees for it only consider the statistically stationary setting of data being generated from a {\em single unknown} subspace.
Of course, the most general nonstationary model that allows the subspace to change at each time is not even identifiable since at least $r$ data points are needed to compute an $r$-dimensional subspace even in the no noise or missing entries case. 

In this work, we make the subspace tracking problem identifiable by assuming a piecewise constant model on subspace change. We show that it is possible to track the changing subspace to within $\epsilon$ accuracy as long as the subspace remains constant for at least $O(r \log n \log (1/ \epsilon))$ time instants, and some other assumptions hold. This is more than $r$ by only log factors. Here $n$ is the ambient space dimension.%

\subsubsection{Notation}
We use the interval notation $[a, b]$ to refer to all integers between $a$ and $b$, inclusive, and we use $[a,b): = [a,b-1]$.  $\|.\|$ denotes the $l_2$ norm for vectors and induced $l_2$ norm for matrices unless specified otherwise, and $'$ denotes transpose. We use $\M_\T$ to denote a sub-matrix of $\M$ formed by its columns indexed by entries in the set $\T$.
For a matrix $\P$ we use $\P^{(i)}$ to denote its $i$-th row. In our algorithm statements, we  use $\hat{\L}_{t; \alpha} := [\lhat_{t-\alpha + 1}, \cdots, \lhatt]$ and $\SVD_r[\M]$ to refer to the matrix of top of $r$ left singular vectors of the matrix $\M$.
A matrix $\P$ with mutually orthonormal columns is referred to as a {\em basis matrix} and is used to represent the subspace spanned by its columns.
For basis matrices $\P_1,\P_2$, we use $\SE(\P_1,\P_2):=\|(\I - \P_1 \P_1{}')\P_2\|$ as a measure of Subspace Error (distance) between their respective subspaces. This is equal to the sine of the largest principal angle between the subspaces. It is also called ``projection distance'' \cite{chordal_dist}. If $\P_1$ and $\P_2$ are of the same dimension, $\SE(\P_1, \P_2) = \SE(\P_2, \P_1)$.

We reuse the letters $C,c$ to denote different numerical constants in each use.

\subsubsection{Robust Subspace Tracking (RST) and Dynamic RPCA Problem Setting}
At each time $t$, we get a data vector $\yt \in \Re^n$ that satisfies%
\bea
\yt := \lt + \x_t + \v_t, \text{ for } t = 1, 2, \dots, \tmax \nn
\label{orpca_eq}
\eea
where  $\v_t$ is small unstructured noise, $\xt$ is the sparse outlier vector, and $\lt$ is the true data vector that lies in a fixed or slowly changing low-dimensional subspace of $\Re^n$, i.e., $\lt = \P_{(t)} \a_t$ where $\P_{(t)}$ is an $n \times r$ basis matrix with $r \ll n$ and with $\|(\I - \P_{(t-1)}\P_{(t-1)}{}')\P_{(t)}\|$ small compared to $\|\P_{(t)}\|=1$. 
We use $\T_t$ to denote the support set of $\xt$.
%
Given an initial subspace estimate, $\Phat_{0}$, the goal is to track $\Span(\P_{(t)})$ and $\lt$ either immediately or within a short delay. A by-product is that $\lt$, $\x_t$, and $\T_t$ can also be tracked on-the-fly.
The initial subspace estimate, $\Phat_0$, can be computed  by applying any of the existing RPCA solutions, e.g., PCP or AltProj, for the first roughly $r$ data points, i.e., for $\Y_{[1,t_\train]}$, with $t_\train=Cr$.

{\em Dynamic RPCA} is the offline version of the above problem. Define matrices $\L,\S,\W,\Y$ with $\L = [\l_1,\l_2, \dots \l_{\tmax}]$ and $\Y,\S,\W$ similarly defined.  The goal is to recover the matrix $\L$ and its column space with $\epsilon$ error.
We use $\rmat$ to denote the rank of $\L$. The maximum fraction of nonzeros in any row (column) of the outlier matrix $\S$ is denoted by $\outfracrow$ ($\outfraccol$).


\subsubsection{Identifiability and other assumptions}
The above problem definition does not ensure identifiability since either of $\L$ or $\S$ can be both low-rank and sparse. Moreover, if the subspace changes at every time, it is impossible to correctly estimate all the subspaces.
One way to ensure that $\L$ is not sparse is by requiring that its left and right singular vectors are dense (non-sparse) or ``incoherent'' w.r.t. a sparse vector \cite{rpca,rpca_zhang,robpca_nonconvex}.
\begin{definition} 
An $n \times r$ basis matrix $\P$ is $\mu$-incoherent if
$
\max_{i=1,2,.., n} \|\P^{(i)}\|_2^2 \le {\mu r/ n}.
$
Here $\mu$ is called the coherence parameter. It quantifies the non-denseness  of $\P$.
\label{defmu}
\end{definition}
A simple way to ensure that $\S$ is not low-rank is by imposing upper bounds on $\outfracrow$ and $\outfraccol$ \cite{rpca_zhang,robpca_nonconvex}.
One way to ensure identifiability of the changing subspaces is to assume that they are piecewise constant:%
\[
\P_{(t)} = \P_j \text{ for all } t \in [t_j, t_{j+1}), \ j=1,2,\dots, J,
\]
and to lower bound $t_{j+1}-t_j$. Let $t_0=1$ and $t_{J+1}=\tmax$.  
With this model, $\rmat = r J$ in general (except if subspace directions are repeated).
The union of the column spans of all the $\P_j$'s is equal to the span of the left singular vectors of $\L$. Thus, assuming that the $\P_j$'s are $\mu$-incoherent implies their incoherence.
We also assume that the subspace coefficients $\at$ are mutually independent over time, have identical and diagonal covariance matrices denoted by $\Lam$, and are element-wise bounded. Element-wise bounded-ness of $\at$'s, along with the statistical assumptions, is similar to incoherence of right singular vectors of $\L$ (right incoherence); see Remark \ref{rem:bounded}.
Because tracking requires an online algorithm that processes data vectors one at a time or in mini-batches, we need these statistical assumptions on the $\at$'s.
For the same reason, we also need to re-define $\outfracrow$ as the maximum fraction of nonzeroes in any row of any $\alpha$-consecutive-column sub-matrix of $\S$. Here $\alpha$ is the mini-batch size used by the RST algorithm. We will refer to it as $\outfracrow^\alpha$ to indicate this difference.


\subsubsection{Contributions}
(1) We develop a recursive projected compressive sensing (ReProCS) algorithm for RST that we call Nearly Optimal RST via ReProCS (ReProCS-NORST) because its tracking delay is nearly optimal. We will refer to it as just ``NORST'' in the sequel. NORST has a significantly improved (and simpler) subspace update step compared to all previous ReProCS-based methods. Our most important contribution is one of the first two provable guarantees for RST, and the first that ensures near optimal tracking delay, needs a near optimal lower bound on how long a subspace should remain constant, and needs minimal assumptions on subspace change.
Moreover, our guarantee also shows that NORST is online (after initialization), fast (has the same complexity as vanilla $r$-SVD), and has memory complexity of order $nr\log n \log(1/\epsilon)$.
Here ``online'' means the following. After each subspace change, the algorithm detects the change in at most $2\alpha= C f^2 r\log n$ time instants, and  after that, it improves the subspace estimate every $\alpha$ time instants\footnote{The reason that just $O(r\log n)$ samples suffice for each update is because we assume that the $\at$'s are element-wise bounded, $\vt$ is very small and with effective dimension $r$ or smaller (see Theorem \ref{thm1}). These, along with the specific structure of the PCA problem we encounter (noise/error seen by the PCA step depends on the $\lt$'s and thus has ``effective dimension" $r$), is why so few samples suffice.}. The improvement after each step is exponential and thus, one can get an $\epsilon$-accurate estimate within $K=C \log(1/\epsilon)$ such steps. 
Offline NORST also provably solves dynamic RPCA.

(2) Our guarantees for both NORST and offline NORST (Theorem \ref{thm1}) essentially hold under ``weakened'' standard RPCA assumptions and two simple extra assumptions: (i) slow subspace change and (ii) a lower bound on most outlier magnitudes. (i) is a natural assumption for static camera videos (with no sudden scene changes) and (ii) is also easy because, by definition, an ``outlier'' is a large magnitude corruption. The small magnitude ones get classified as $\vt$. Besides these, we also need that $\at$'s are mutually independent, have identical and diagonal covariance matrix $\Lam$, and are element-wise bounded. Element-wise bounded-ness,  along with the statistical assumptions on $\at$, is similar to right incoherence of $\L$; see Remark \ref{rem:bounded}.
For the initial $C r$ samples, NORST needs the outlier fractions to be $O(1/r)$ (needed to apply AltProj).
As explained in Sec. \ref{sec:why_weakened}, the extra assumptions help ensure that, {\em after initialization, NORST can tolerate a constant maximum fraction of outliers per row in any $\alpha$-column-sub-matrix of the data matrix} without assuming any outlier support generation model. This statement assumes that the condition number of the covariance of $\at$ is a constant (with $n$). As is evident from Table \ref{compare_assu}, this is better than what all existing RPCA approaches can tolerate.
For the video application, this implies that NORST tolerates slow moving and occasionally static foreground objects much better than all other approaches.  This is also corroborated by our experiments on real videos, e.g., see Fig \ref{fig:video_res} and Sec. \ref{sec:sims}.


(3) Unlike simple-ReProCS \cite{rrpcp_dynrpca} or original-ReProCS \cite{rrpcp_perf,rrpcp_isit15,rrpcp_aistats}, NORST needs only a coarse initialization which can be computed using just $C\log r$ iterations of any batch RPCA method such as AltProj applied to $C r$ initial samples. In fact, if the outlier magnitudes were very large for an initial set of $O(r\log n \log r)$ time instants, or if the outliers were absent for this much time, even a random initialization would suffice. This simple fact has two important implications. First, NORST with the subspace change detection step removed also provides an online, fast, memory-efficient, and provably correct approach for static RPCA (our problem with $J=1$, i.e., with $\lt = \P \at$). The other online solution for such a problem is ORPCA which comes with only a partial guarantee \cite{xu_nips2013_1} (the guarantee requires intermediate algorithm estimates to be satisfying certain properties).
Moreover, a direct corollary of our result is a guarantee that a minor modification of NORST-random (NORST with random initialization) also solves the subspace tracking with missing data (ST-missing) and the dynamic matrix completion (MC) problems. All existing guarantees for ST-missing \cite{petrels,local_conv_grouse} hold only for the case of a {\em single unknown subspace} and are only partial guarantees. 
From the MC perspective, NORST-random does not assume {\em any}  model on the set of observed entries. However, the tradeoff is that it needs many more observed entries.
Both these results are given in Sec. \ref{sec:extensions}.

\begin{figure*}[t!]
\centering
\resizebox{.9\linewidth}{!}{
\begin{tabular}{@{}c@{}c@{}c@{}c@{}c@{}c@{}}
\tiny{Video Frame} & \tiny{\bf{NORST} ($\bm{16.5}$\bf{ms})} & \tiny{AltProj($26.0$ms)} & \tiny{RPCA-GD($29.5$ms)} & \tiny{GRASTA} ($2.5$ms)& \tiny{PCP} ($44.6$ms) 
\\
	\includegraphics[scale=0.19, trim={2.1cm, 0cm, 2.1cm, 0.5cm}, clip]{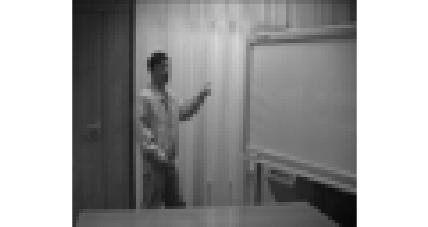}
&
	\includegraphics[scale=0.178, trim={2.1cm, 0cm, 2.1cm, 0.5cm}, clip]{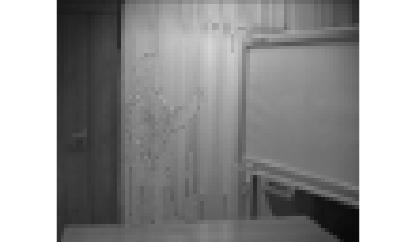}
&
	\includegraphics[scale=0.178, trim={2.1cm, 0cm, 2cm, 0.5cm}, clip]{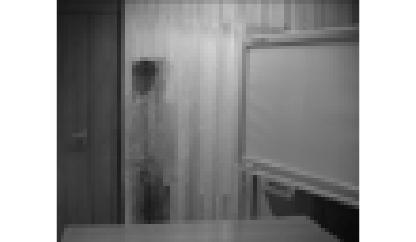}
&
	\includegraphics[scale=0.191, trim={2.5cm, 0cm, 2.1cm, 0.5cm}, clip]{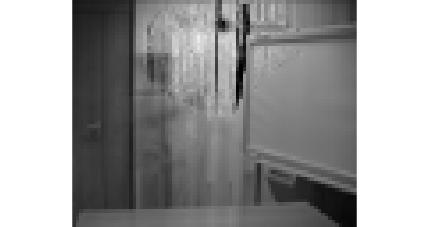}
&
	\includegraphics[scale=0.18, trim={2.1cm, 0cm, 2.1cm, 0.5cm}, clip]{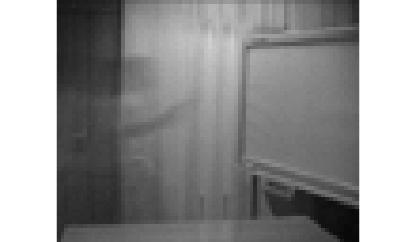}
&
	\includegraphics[scale=0.18, trim={2.1cm, 0cm, 2.1cm, 0.5cm}, clip]{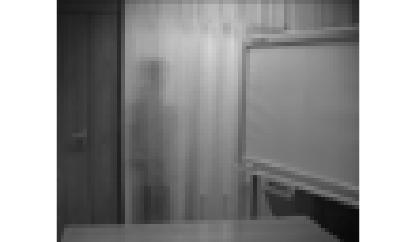}
\end{tabular}
}
\caption{\small{Background Recovery. NORST gives the best background estimate. All the other algorithms exhibit some artifacts. Only NORST does not contain the person or even his shadow. NORST is also faster than all except GRASTA (which does not work). The GRASTA output also slightly lags the actual frame. 
Time taken per frame is shown in parentheses.}}
\label{fig:video_res}
\end{figure*} 

\subsubsection{Paper Organization}
The rest of the paper is organized as follows. In Sec. \ref{sec:norst} we explain the main ideas of the NORST algorithm and present our main result for it (Theorem \ref{thm1}). We also discuss the implications of our guarantee, and provide detailed comparison with related work. In Sec. \ref{automatic_norost}, we give the complete NORST algorithm and carefully explain the subspace change detection approach. In Sec. \ref{sec:proof_outline}, we give the proof outline, the three main lemmas leading to the proof of Theorem \ref{thm1}, then also prove the lemmas.
In Sec. \ref{sec:extensions}, we provide useful corollaries for (a) Static RPCA, (b) Subspace Tracking with missing entries and (c) a simple extension to recover the guarantee of s-ReProCS from \cite{rrpcp_dynrpca}. Empirical evaluation on synthetic and real-world datasets is described in Sec. \ref{sec:sims}.
The complete proof of Theorem \ref{thm1}, of two auxiliary lemmas, and of the extensions is given in the Appendix.

\section{NORST Algorithm and Main Result}\label{sec:norst}

\subsection{NORST: Nearly-Optimal RST}
NORST starts with a ``good'' estimate of the initial subspace. This can be obtained  by $C \log r$ iterations\footnote{Using $C \log r$ iterations helps ensure that the initialization error is $O(1/\sqrt{r})$.} of AltProj applied to  $\Y_{[1,t_\train]}$ with $t_\train=Cr$.
It then iterates between (a) {\em Projected Compressive Sensing (CS) / Robust Regression\footnote{Robust Regression (with a sparsity model on the outliers) assumes that observed data vector $\y$ satisfies $\y = \Phat \a + \x + \b$ where $\Phat$ is a tall matrix (given), $\a$ is the vector of (unknown) regression coefficients, $\x$ is the (unknown) sparse outliers, $\b$ is (unknown) small noise/modeling error. An obvious way to solve this is by solving $\min_{\a,\x} \lambda \|\x\|_1+ \|\y - \Phat \a - \x\|^2$. In this, one can solve for $\a$ in closed form to get $\hat\a = \Phat'(\y-\x)$. Substituting this, the minimization simplifies to $\min_{\x} \lambda \|\x\|_1+ \|(\I - \Phat \Phat') (\y - \x)\|^2$. This is equivalent to the Lagrangian version of the projected CS problem that NORST solves (see line 7 of  Algorithm \ref{norst_basic}). 
}} in order to estimate the sparse outliers, $\xt$'s, and hence the $\lt$'s, and (b) {\em Subspace Update} to update the subspace estimate $\Phat_{(t)}$.
Projected CS proceeds as follows. At time $t$, if the previous subspace estimate, $\Phat_{(t-1)}$, is accurate enough, because of slow subspace change, projecting $\yt$ onto its orthogonal complement will nullify most of $\lt$. We compute $\tty_t:= \bpsi  \yt$ where $\bpsi := \I - \Phat_{(t-1)}\Phat_{(t-1)}{}'$. Thus, $\tty_t = \bpsi \xt + \bpsi (\lt+\vt)$ and $\| \bpsi  (\lt+\vt)\|$ is small due to slow subspace change and small $\vt$. Recovering $\xt$ from $\tty_t$ is now a CS / sparse recovery problem in small noise \cite{candes_rip}. 
We compute $\xhat_{t,cs}$ using noisy $l_1$ minimization followed by thresholding based support estimation to obtain $\That_t$. A Least Squares (LS) based debiasing step on $\That_t$ returns the final $\xhat_t$. We then estimate $\lt$ as $\lhat_t = \yt - \xhatt$.
The $\lhatt$'s are then used for the Subspace Update step which involves (i) detecting subspace change, and (ii) obtaining improved estimates of the new subspace by $K$ steps of $r$-SVD, each done with a new set of $\alpha$ samples of $\lhatt$. While this step is designed under the piecewise constant subspace assumption (needed for identifiability of $\P_{(t)}$'s), if the goal is only to get good estimates of $\lt$ or $\xt$, the method works even when this assumption may not hold, e.g., for real videos. 
For ease of understanding, we present a basic version of NORST in Algorithm \ref{norst_basic}. This assumes the change times $t_j$ are known. The actual algorithm, that we study and implement, detects these automatically. It is given as Algorithm \ref{algo:auto-reprocs-pca} in Sec. \ref{automatic_norost}.%


\subsection{Main Result}
Before stating the result, we precisely define $\outfraccol$ and $\outfracrow^\alpha$. Since NORST is an online approach that performs outlier support recovery one data vector at a time, it needs different bounds on both. Let $\outfraccol:=\max_t|\T_t|/n$.
We define $\outfracrow^\alpha$ as the maximum fraction of outliers (nonzeros) per row of any sub-matrix of $\X$ with $\alpha$ consecutive columns. To understand this precisely, for a time interval, $\J$, define
$
\gamma(\J): = \max_{i=1,2,\dots,n} \frac{1}{|\J|} \sum_{t \in \J} \one_{ \{i \in \T_t \} }
\label{def_gamma_J}
$
where $\one_{S}$ is the indicator function for statement $S$. Thus, $\sum_{t \in \J} \one_{ \{ i \in \T_t \} }$ counts the number of outliers (nonzeros) in row $i$ of $\X_\J$, and so $\gamma(\J)$ is the maximum outlier fraction in any row of the sub-matrix $\X_\J$ of $\X$. Let $\J^\alpha$ denote a time interval of duration $\alpha$. Then
$
\outfracrow^\alpha:= \max_{\J^\alpha \subseteq [1, \tmax]} \gamma(\J^\alpha).
\label{def_outfracrow}
$

We use $\that_j$ to denote the time instant at which the $j$-th subspace change time is detected by Algorithm \ref{algo:auto-reprocs-pca}.

\begin{theorem}
Consider Algorithm \ref{algo:auto-reprocs-pca} given in the next section. Let $\alpha := C f^2  r \log n$,
$\Lam:= \E[\a_1 \a_1{}']$, $\lambda^+:=\lambda_{\max}(\Lam)$, $\lambda^-:=\lambda_{\min}(\Lam)$, $f:=\lambda^+/\lambda^-$,
and let $\xmint:=\min_t \min_{i \in \T_t} (\xt)_i$ denote the minimum outlier magnitude.
Pick an $\zz \leq \min(0.01,0.03 \min_j \SE(\P_{j-1}, \P_j)^2/f)$. Let $K := C \log (1/\zz)$.
If
\ben
\item $\P_j$'s are $\mu$-incoherent; and $\at$'s are zero mean, mutually independent over time $t$, have identical covariance matrices, i.e. $\E[\at \at{}'] = \Lam$, are
element-wise uncorrelated ($\Lam$ is diagonal), are element-wise bounded (for a numerical constant $\eta$, $(\at)_i^2 \le \eta \lambda_i(\Lam))$, and are independent of all outlier supports $\T_t$;

\item  $\|\vt\|^2 \le c r \|\E[\vt \vt{}']\|$, $\|\E[\vt \vt{}']\| \le c \zz^2 \lambda^-$, $\vt$'s are zero mean, mutually independent, and independent of $\xt,\lt$;

\item $\outfraccol \le  {c_1}/{\mu r}$, $\outfracrow^\alpha \le b_0:= \frac{c_2}{f^2}$;

\item subspace change: let $\dif:=\max_j \SE(\P_{j-1}, \P_j)$, assume that
\ben
\item $t_{j+1}-t_j > (K+2)\alpha$, and
\item $\dif \le 0.8$ and $C_1 \sqrt{r \lambda^+} (\dif + 2\zz) \le \xmint$;
\een

\item initialization\footnote{This can be satisfied by applying $C \log r$ iterations of AltProj \cite{robpca_nonconvex} on the first $C r$ data samples and assuming that these have outlier fractions in any row or column bounded by $c/r$.}:
{$\SE(\Phat_0,\P_0) \le 0.25$, $C_1 \sqrt{r \lambda^+}  \SE(\Phat_0,\P_0) \le \xmint$;}%

\een
and (6) algorithm parameters are set as given in Algorithm \ref{algo:auto-reprocs-pca};
then, with probability (w.p.) at least $1 - 10 \tmax n^{-10} $, 
\[
\SE(\Phat_{(t)}, \P_{(t)}) \le
 \left\{
\begin{array}{ll}
(\zz + \dif) & \text{ if }  t \in [t_j, \that_j+\alpha), \\
 (0.3)^{k-1} (\zz + \dif) & \text{ if }  t \in [\that_j+(k-1)\alpha, \that_j+ k\alpha), \\
\zz   & \text{ if }  t \in [\that_j+ K\alpha+\alpha, t_{j+1}).
\end{array}
\right.
\]
Treating $f$ as a numerical constant, the memory complexity is $O(n \alpha) = O(n r \log n)$ and time complexity is $O(n \tmax r \log (1/\zz) )$.%
\label{thm1}
\end{theorem}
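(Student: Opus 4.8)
The plan is to prove Theorem~\ref{thm1} by a double induction: an outer induction over the subspace change index $j$ and, within each interval $[t_j, t_{j+1})$, an inner induction over the $r$-SVD update index $k = 1,\dots,K$. The argument is built on three lemmas. \emph{Lemma 1 (projected CS / robust regression):} on any time interval on which the current estimate satisfies $\SE(\Phat_{(t-1)}, \P_{(t)}) \le q_0$ for a suitable constant $q_0$ and $C_1\sqrt{r\lambda^+}\,\SE(\Phat_{(t-1)},\P_{(t)}) \le \xmint$, the projected-CS step recovers the support exactly, $\Thatt = \Tt$, and returns $\lhatt = \lt + \vt + \et$ with $\et := \xt - \xhatt$ supported on $\Tt$ and $\|\et\| \le C\big(\SE(\Phat_{(t-1)},\P_{(t)})\sqrt{r\lambda^+} + \|\vt\|\big)$. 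The ingredients are: (i) a restricted-isometry bound on $\bpsi := \I - \Phat_{(t-1)}\Phat_{(t-1)}{}'$ restricted to the columns indexed by $\Tt$, since $\bpsi_{\Tt}{}'\bpsi_{\Tt} = \I - \I_{\Tt}{}'\Phat_{(t-1)}\Phat_{(t-1)}{}'\I_{\Tt}$ and $\|\I_{\Tt}{}'\Phat_{(t-1)}\| \le \sqrt{\mu r|\Tt|/n} + \SE$, which is bounded away from $1$ by $\mu$-incoherence of $\P_{(t)}$ together with $\outfraccol \le c_1/\mu r$; (ii) the standard noisy-$\ell_1$ error bound, using that the ``noise'' $\bpsi(\lt + \vt)$ has norm at most $\SE\sqrt{\eta r\lambda^+} + \|\vt\|$ by slow subspace change and element-wise boundedness of $\at$; and (iii) the outlier-magnitude lower bound $\xmint$ with the chosen support threshold, which rules out missed and false detections, after which least-squares debiasing on $\Tt$ gives the stated $\et$.

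\emph{Lemma 2 (subspace update by $r$-SVD)} is the technical heart. Given $\alpha$ consecutive estimates $\lhatt = \lt + \vt + \et$ as in Lemma~1, over a window on which $\P_{(t)} \equiv \P$ and the pre-update estimate has $\SE \le q$, one shows $\SVD_r[\hat{\L}_{t;\alpha}]$ yields an estimate with $\SE \le 0.3\,q + c\zz$. The proof is a Davis--Kahan $\sin\theta$ argument comparing $\tfrac1\alpha\sum_t \lhatt\lhatt{}'$ to $\P\Lam\P'$: the perturbation splits into a signal--noise cross term $\tfrac1\alpha\sum_t \lt\et{}'$, a noise term $\tfrac1\alpha\sum_t \et\et{}'$, and small-$\vt$ terms. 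Two facts make it work. First, the \emph{data-dependent} cross term has expectation of norm $\lesssim q\,\lambda^+$ — proportional to the current subspace error — which is exactly what produces a contraction factor strictly below one. Second, $\|\tfrac1\alpha\sum_t \et\et{}'\|$ is controlled by $\outfracrow^\alpha \le c_2/f^2$ precisely because $\et$ is supported on $\Tt$, via a bound of the form $\|\tfrac1\alpha\sum_t \I_{\Tt}\I_{\Tt}{}'\M\I_{\Tt}\I_{\Tt}{}'\| \lesssim \outfracrow^\alpha \max_t\|\M\|$. All empirical sums concentrate around their expectations by matrix Bernstein; crucially, $(\at)_i^2 \le \eta\lambda_i$ gives $\|\lt\|^2 \le \eta r\lambda^+$, so the effective dimension in Bernstein is $r$, not $n$, which is why $\alpha = Cf^2 r\log n$ suffices (and the $\log n$ yields the $n^{-10}$ tail). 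Iterating $q \mapsto 0.3q + c\zz$ for $k=1,\dots,K$ from $q_0 = \zz + \dif$ gives $\SE \le (0.3)^{k-1}(\zz + \dif)$ and, after $K = C\log(1/\zz)$ rounds, $\SE \le \zz$.

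\emph{Lemma 3 (change detection):} the detection statistic compares the top eigenvalue of $(\I - \Phat_{j-1}\Phat_{j-1}{}')\big(\tfrac1\alpha\sum_t\lhatt\lhatt{}'\big)(\I - \Phat_{j-1}\Phat_{j-1}{}')$ to a threshold; when the true subspace is still $\P_{j-1}$ this eigenvalue is $O(f\zz^2\lambda^-)$, whereas once it has become $\P_j$ it is $\gtrsim \SE(\P_{j-1},\P_j)^2\lambda^-$, and the hypothesis $\zz \le 0.03\min_j \SE(\P_{j-1},\P_j)^2/f$ opens a gap between the two levels, giving no false detection for $t < t_j$ and detection within $2\alpha$ after $t_j$ (the batch straddling $t_j$ is handled by noting that after one extra batch there are at least $\alpha$ post-change samples). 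Hence $\that_j \in [t_j, t_j + 2\alpha)$. \emph{Assembly:} induct on $j$ with hypothesis ``$\SE(\Phat_{(t)},\P_{(t)}) \le \zz$ just before $t_j$'' (base case from the initialization bound $\SE(\Phat_0,\P_0)\le 0.25$, running $K$ update rounds on the first interval as below). At $t_j$ the subspace jumps so $\SE \le \zz + \dif < q_0$, and hypothesis $4(b)$ gives $C_1\sqrt{r\lambda^+}(\dif + 2\zz)\le\xmint$, so Lemma~1 applies on $[t_j, \that_j+\alpha)$; Lemma~3 gives $\that_j \le t_j + 2\alpha$; Lemma~2 gives the geometric decay over the next $K$ windows; and for $t \in [\that_j + K\alpha + \alpha, t_{j+1})$ the estimate stays at $\SE \le \zz$ — the slow-change bound $t_{j+1}-t_j > (K+2)\alpha$ leaves room for $\le 2\alpha$ detection plus $K$ update windows. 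A union bound over the $O(\tmax)$ concentration events (each failing with probability $\le c n^{-10}$) gives overall probability $1 - 10\tmax n^{-10}$. Complexity: the algorithm stores only an $n\times\alpha$ matrix at a time ($O(n\alpha) = O(nr\log n)$ memory, treating $f$ as constant) and performs $O(\tmax)$ projected-CS steps plus $O(J\log(1/\zz))$ rank-$r$ SVDs of $n\times\alpha$ matrices, i.e.\ $O(n\tmax r\log(1/\zz))$ time.

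\emph{Main obstacle.} The crux is Lemma~2: analyzing PCA when the noise $\et$ is \emph{correlated with the data} $\lt$ (so off-the-shelf subspace-perturbation bounds do not apply), while simultaneously exploiting the sparsity of $\et$ to tolerate a \emph{constant} $\outfracrow^\alpha$ and keeping the sample count at $\alpha = O(r\log n)$ rather than $O(n\log n)$ by tracking effective dimension. Getting the contraction factor strictly below one requires the bound on the signal--noise correlation to be proportional to the current subspace error, which in turn relies on the exact support recovery in Lemma~1 (hence on the outlier-magnitude lower bound). The change-detection gap argument of Lemma~3 — in particular correctly handling the batch that straddles $t_j$ — is the secondary difficulty.
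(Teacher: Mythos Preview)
Your overall architecture matches the paper's closely: the same three lemmas (projected CS with exact support recovery, $r$-SVD contraction via a PCA-in-data-dependent-noise argument, and a change-detection eigenvalue gap), chained by induction over $j$ and $k$, with events combined by a union bound. The projected-CS and detection sketches are essentially what the paper does.

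There is, however, a genuine gap in your Lemma~2 that is precisely the crux of the proof. You bound the signal--noise cross term by $\big\|\tfrac{1}{\alpha}\sum_t \E[\lt\et{}']\big\| \lesssim q\,\lambda^+$ and claim this ``produces a contraction factor strictly below one.'' It does not: feeding this into Davis--Kahan with eigengap $\lambda^-$ gives $\SE_{\mathrm{new}} \lesssim q\,\lambda^+/\lambda^- = q f$, and since $f \ge 1$ this is never a contraction. The bound $q\lambda^+$ is the correct \emph{instantaneous} bound on $\|\E[\lt\et{}']\|$, but for the time average you must exploit the sparsity of $\et$ a second time. Writing $\et = \I_{\Tt}\M_{1,t}\lt$ and applying the matrix Cauchy--Schwarz inequality $\big\|\tfrac{1}{\alpha}\sum_t X_t Y_t{}'\big\|^2 \le \big\|\tfrac{1}{\alpha}\sum_t X_tX_t{}'\big\|\,\big\|\tfrac{1}{\alpha}\sum_t Y_tY_t{}'\big\|$ with $Y_t = \I_{\Tt}$ yields the extra factor $\big\|\tfrac{1}{\alpha}\sum_t \I_{\Tt}\I_{\Tt}{}'\big\|^{1/2} = \sqrt{b_0}$, so that the time-averaged cross term is $\lesssim \sqrt{b_0}\,q\,\lambda^+$. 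Only then does the Davis--Kahan ratio become $\sqrt{b_0}\,q f \le \sqrt{c_2}\,q$, which \emph{is} a contraction because $b_0 = c_2/f^2$. You applied exactly this trick to the noise--noise term $\tfrac{1}{\alpha}\sum_t \et\et{}'$; the point is that it must \emph{also} be applied to the cross term, and this is where the assumption $\outfracrow^\alpha \le c_2/f^2$ actually does its work. The paper packages this as the PCA-SDDN theorem (Theorem~\ref{cor:pcasddn}) and explicitly flags, in its proof outline, that both the signal--noise correlation and the noise power pick up the $\sqrt{\outfracrow^\alpha}$ improvement under time averaging.
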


\begin{corollary}\label{cor:thm1}
Under Theorem \ref{thm1} assumptions, the following also hold:
\ben
\item {$\|\xhat_t-\xt\| = \|\lhat_t-\lt\| \le 1.2 (\SE(\Phat_{(t)}, \P_{(t)}) + \zz) \|\lt\|$ } with $\SE(\Phat_{(t)}, \P_{(t)})$ bounded as above,

\item  at all times, $t$, $\That_t = \T_t$,

\item $t_j \le \that_j \le t_j+2 \alpha$,

\item Offline-NORST (last few lines of Algorithm \ref{algo:auto-reprocs-pca}): $\SE(\Phat_{(t)}^{offline}, \P_{(t)})  \le \zz$, $\|\xhat_t^{offline}-\xt\| = \|\lhat_t^{offline}-\lt\| \le \zz \|\lt\|$ at all $t$. Its memory complexity is $O(K n \alpha) = O(n r \log n \log (1/\zz) )$.
\een
\end{corollary}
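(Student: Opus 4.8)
Corollary~\ref{cor:thm1} is essentially a by-product of proving Theorem~\ref{thm1}, so I would set up the proof of the theorem and indicate where each item of the corollary falls out. The plan is an induction over the mini-batches of size $\alpha$ that NORST processes, coupling its two components --- the projected compressive sensing (CS) / robust regression step and the subspace update step --- which realizes the three-lemma structure announced for Section~\ref{sec:proof_outline}: a projected-CS lemma, a subspace-update lemma, and a change-detection lemma. The induction hypothesis is that at the start of the $k$-th mini-batch of a given subspace interval (after detection of the change at $\that_j$) the current estimate obeys $\SE(\Phat_{(t-1)},\P_{(t)})\le q_k$, with $q_1=\zz+\dif$, $q_{k+1}\le 0.3\,q_k$, so $q_k\le\zz$ once $k\ge K+1$; during the pre-detection window $[t_j,\that_j+\alpha)$ the bound is frozen at $\zz+\dif$. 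The base case is the initialization assumption~(5) plus the detection guarantee; the inductive step is the content of the three lemmas, proved in the order below.

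\emph{Projected CS / robust regression.} Assume $\SE(\Phat_{(t-1)},\P_{(t)})\le q$. With $\bpsi:=\I-\Phat_{(t-1)}\Phat_{(t-1)}{}'$ we have $\tty_t=\bpsi\yt=\bpsi\xt+\bpsi(\lt+\vt)$; $\mu$-incoherence of the $\P_j$'s (hence, approximately, of $\Phat_{(t-1)}$) together with $\outfraccol\le c_1/\mu r$ forces $\sigma_{\min}(\bpsi_{\T_t})$ bounded below by a constant, i.e.\ $\bpsi$ is an approximate isometry on vectors supported on $\T_t$. Since $\|\bpsi(\lt+\vt)\|\le q\|\lt\|+\|\vt\|$ (slow subspace change; $\|\lt\|\le\sqrt{\eta r\lambda^+}$ from element-wise boundedness), noisy $\ell_1$/BPDN recovery gives $\|\xhat_{t,cs}-\xt\|\le C\sqrt{r\lambda^+}(q+\zz)$. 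Assumption~(4b), $C_1\sqrt{r\lambda^+}(\dif+2\zz)\le\xmint$, then makes thresholding recover the support exactly ($\That_t=\T_t$), and least-squares debiasing on $\That_t$ gives $\et:=\lhat_t-\lt=\vt-\itt\big(\bpsi_{\T_t}{}'\bpsi_{\T_t}\big)^{-1}\itt{}'\bpsi(\lt+\vt)$, supported on $\T_t$, with $\|\et\|\le C\sqrt{r\lambda^+}\,q+C\|\vt\|$ and $\|\E[\et\et{}']\|\le Cq^2\lambda^+ + C\|\E[\vt\vt{}']\|$. Converting the first bound to relative error yields Corollary~\ref{cor:thm1}(1), and exact support recovery is Corollary~\ref{cor:thm1}(2).

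\emph{Subspace update and detection.} The $r$-SVD acts on $\hat{\L}_{t;\alpha}$, whose columns are $\lhat_t=\lt+\et$, so its sample covariance decomposes as
\[
\tfrac1\alpha \hat{\L}_{t;\alpha}\hat{\L}_{t;\alpha}{}' \;=\; \tfrac1\alpha\sum_t\lt\lt{}' \;+\; \tfrac1\alpha\sum_t\big(\lt\et{}'+\et\lt{}'\big) \;+\; \tfrac1\alpha\sum_t\et\et{}'.
\]
I would apply a $\sin\theta$ perturbation bound (Davis--Kahan / Wedin), which controls $\SE(\Phat_{(t)},\P_{(t)})$ by the operator norm of the perturbation divided by $\lambda^-$ minus that norm, and matrix Bernstein to concentrate each average around its expectation --- $\alpha=Cf^2r\log n$ suffices because the error terms have effective dimension $O(r)$ and $\at,\vt$ are bounded. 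The crux is the signal--error cross term $\big\|\tfrac1\alpha\sum_t\lt\et{}'\big\|$: inserting the formula for $\et$, it reduces to terms like $\tfrac1\alpha\sum_t\P_j\at\at{}'\P_j{}'\bpsi\itt\M_t\itt{}'$ ($\M_t$ uniformly bounded, plus smaller $\vt$-terms), bounded via (i) $\|\bpsi\P_j\|\le q$, (ii) $\|\itt{}'\Phat_{(t-1)}\|,\|\itt{}'\P_j\|$ small by incoherence and $\outfraccol$, and, decisively, (iii) $\big\|\tfrac1\alpha\sum_t\itt\itt{}'\big\|\le\outfracrow^\alpha\le b_0=c_2/f^2$, capping how often any coordinate is an outlier within the mini-batch. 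The upshot is a cross-term norm of order $q\sqrt{b_0}\,\lambda^+ + \zz\lambda^-$, making the Davis--Kahan ratio a contraction with factor below $0.3$; iterating $K=C\log(1/\zz)$ times drives $q_k$ to $\le\zz$, giving the three regimes of the theorem. For detection, one shows the leading eigenvalue of the projected sample covariance stays below the threshold while the subspace is unchanged, and exceeds it within $2\alpha$ samples of a genuine change (using $\zz\le0.03\min_j\SE(\P_{j-1},\P_j)^2/f$), i.e.\ $t_j\le\that_j\le t_j+2\alpha$, which is Corollary~\ref{cor:thm1}(3). A union bound over the $O((\tmax/\alpha)K)$ concentration events, each failing with probability $\le cn^{-10}$, gives $1-10\tmax n^{-10}$; the complexity counts are immediate (one $r$-SVD per mini-batch over $n\alpha$ stored scalars; $\tmax/\alpha$ mini-batches; $K$ retained mini-batches for offline NORST, which re-runs the projection over each interval with the final $\zz$-accurate subspace --- Corollary~\ref{cor:thm1}(4)).

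\emph{Main obstacle.} The delicate step is the subspace-update lemma, and within it the operator-norm bound on the signal--error cross term $\tfrac1\alpha\sum_t\lt\et{}'$. Because $\et$ depends on $\lt$ (through $\bpsi\lt$) and on the random support $\T_t$, the summands are neither independent nor zero-mean in any naive way; the bound must exploit, simultaneously, the exact deterministic form $\et=\itt\M_t\itt{}'\bpsi(\lt+\vt)-\vt$ with $\M_t$ uniformly bounded, $\mu$-incoherence together with $\outfraccol$ to make $\|\itt{}'\P_j\|$ small, and the row-outlier bound $\outfracrow^\alpha\le b_0$ to control the coordinate-hitting matrices --- and then the constants must be tuned so the per-step contraction stays below $0.3$ while still permitting $\zz$ as small as $0.01$ and $b_0$ only as small as $c_2/f^2$.
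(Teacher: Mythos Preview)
Your proposal is essentially correct and follows the same architecture as the paper: the three-lemma induction (projected CS, subspace update, change detection) is exactly the paper's Lemmas~\ref{lem:reprocspcalemone}, \ref{lem:reprocspcalemk}, \ref{lem:sschangedet}, and items (1)--(3) of the corollary drop out precisely where you say. One presentational difference: the paper does not run Davis--Kahan plus matrix Bernstein from scratch for the subspace-update step but instead invokes a packaged ``PCA in sparse data-dependent noise'' result (Theorem~\ref{cor:pcasddn}, imported from \cite{pca_dd}); your direct approach is morally the same and would recover the same contraction.

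There is, however, a small gap in your treatment of item (4). You write that offline NORST ``re-runs the projection over each interval with the final $\zz$-accurate subspace,'' but a single $r$-dimensional estimate $\Phat_j$ will \emph{not} give $\SE\le\zz$ uniformly on $[\that_{j-1}+K\alpha,\that_j+K\alpha)$: the algorithm only knows $\that_j$, not $t_j$, so part of this window (namely $[\that_{j-1}+K\alpha,t_j)$) still has true subspace $\P_{j-1}$, and $\SE(\Phat_j,\P_{j-1})$ is of order $\dif$, not $\zz$. The paper's fix (line 28 of Algorithm~\ref{algo:auto-reprocs-pca}) is to set $\Phat_{(t)}^{\mathrm{offline}}=[\Phat_{j-1},\,(\I-\Phat_{j-1}\Phat_{j-1}{}')\Phat_j]$, i.e.\ a basis for the span of \emph{both} estimates; the offline proof then checks separately that this (at most $2r$)-dimensional subspace is $\zz$-close to $\P_{j-1}$ on the pre-change portion and $\zz$-close to $\P_j$ on the post-change portion (using that the span equals that of $[\Phat_j,(\I-\Phat_j\Phat_j{}')\Phat_{j-1}]$). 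Once you incorporate this concatenated estimate, your argument for item (4) goes through.
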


\begin{remark}[Relaxing outlier magnitudes lower bound]
The assumption on $\xmint$ (outlier magnitudes) required by Theorem \ref{thm1} can be significantly relaxed to the following which only requires that {most} outlier magnitudes are lower bounded. Assume that the outlier magnitudes are such that the following holds: $\xt$ can be split as $\xt = (\xt)_{small} + (\xt)_{large}$ with the two components being such that,
 in the $k$-th subspace update interval \footnote{$k$-th subspace update interval refers to $\J_k := [\that_j + (k-1)\alpha, \that_j + k\alpha)$ for $k>1$ and $\J_1 = [t_j, \that_j + \alpha)$ for $k=1$. The first interval also includes the subspace detection interval, $[t_j, \that_j)$, since the analysis of the projected CS step for this interval  is the same as for $[\that_j, \that_j+\alpha)$.},
 $\|(\xt)_{small}\|  \le 0.3^{k-1} (\zz+\dif)\sqrt{r \lambda^+} $ and the smallest nonzero entry of $(\xt)_{large}$ is larger than $C_1 \cdot 0.3^{k-1} (\zz+\dif)\sqrt{r \lambda^+} $. For the case of $j=0$, we need the bound to hold with $\Delta$ replaced by $\Delta_{\init} = \SE(\Phat_0, \P_0)$, and $\zz$ replaced by zero.

If there were a way to bound the element-wise error of the CS step (instead of the $l_2$ norm error), one could relax the above requirement even more. 
\label{rem:r_change}
\end{remark}

%

\setlength{\arraycolsep}{2pt}
\begin{algorithm}[t!]
\caption{\small{Basic-NORST (with $t_j$ known).  The actual algorithm that detects $t_j$ automatically is Algorithm \ref{algo:auto-reprocs-pca}.%
\\ Obtain $\Phat_0$ by $C (\log r)$ iterations of AltProj on $\Y_{[1,t_\train]}$ with $t_\train=C r$ followed by SVD on the output $\Lhat$.
}
}
\label{norst_basic}
\begin{algorithmic}[1]
\STATE \textbf{Input}:   $\yt$,  \textbf{Output}:  $\shatt$, $\lhatt$,  $\Phat_{(t)}$, $\That_t$
\STATE \textbf{Parameters:} $K \leftarrow C\log(1/\zz)$, $\alpha \leftarrow C f^2 r \log n$, $\omega_{supp} \leftarrow \xmint/2$, $\xi \leftarrow \xmint/15$,  $r$ 
\STATE {\bf Initialize: } $\tildej~\leftarrow~1$, $k~\leftarrow~1$ $\Phat_{(t_\train)} \leftarrow \hat{\pt}_{0}$
\FOR {$t > t_\train$}
\STATE $\bpsi \leftarrow \bm{I} - \hat{\pt}_{(t-1)}\hat{\pt}_{(t-1)}{}'$ \tikzmark{top} \hspace{7cm}\tikzmark{right} 
\STATE $\tty_t \leftarrow \bpsi \yt$.
\STATE $\xhat_{t,cs} \leftarrow \arg\min_{\tilde{\bm{x}}} \norm{\tilde{\bm{x}}}_1 \ \text{s.t.}\ \norm{\tilde{\bm{y}}_t - \bpsi \tilde{\bm{x}}} \leq \xi$.
\STATE $\That_t \leftarrow \{i:\ |\xhat_{t,cs}| > \omega_{supp} \}$.
\STATE $\xhat_t \leftarrow \I_{\That_t} ( \bpsi_{\That_t}{}' \bpsi_{\That_t} )^{-1} \bpsi_{\That_t}{}'\tty_t$. \tikzmark{bottom}
\STATE $\hat{\bm{\ell}}_t \leftarrow \yt - \hat{\bm{x}}_t$.  \hspace{8.6cm} \tikzmark{right2}
\IF {$t = t_j + k \alpha-1$}
\STATE $\Phat_{j, k} \leftarrow  \SVD_r[\Lhat_{t; \alpha}]$, $\Phat_{(t)} \leftarrow \Phat_{j,k}$, $k \leftarrow k + 1$.  \tikzmark{top2} 
\ELSE
\STATE $\Phat_{(t)} \leftarrow \Phat_{(t-1)}$.
\ENDIF
\IF{$t = t_j + K\alpha - 1$}
\STATE $\Phat_{j} \leftarrow \Phat_{(t)}$, $k \leftarrow 1$, $j \leftarrow j+1$\tikzmark{bottom2}
\ENDIF 
\ENDFOR
\end{algorithmic}
\AddNote{top}{bottom}{right}{Projected-CS \\ (Robust Regression).}
\AddNote{top2}{bottom2}{right2}{Subspace Update.}
\end{algorithm}

\subsubsection{Discussion}
This discussion assumes that $f$ is a constant (does not increase with $n$), i.e., it is $O(1)$.
Theorem \ref{thm1} shows that, with high probability (whp), when using NORST, the subspace change gets detected within a delay of at most $2\alpha = C f^2 (r \log n)$ time instants, and the subspace gets estimated to $\zz$ error within at most $(K+2) \alpha = C f^2 (r \log n)\log (1/\zz)$ time instants. The same is also true for the recovery error of $\xt$ and $\lt$.
Both the detection and tracking delay are within log factors of the optimal since $r$ is the minimum delay needed even in the noise-free, i..e, $\xt=\vt=0$, case. The fact that NORST can detect subspace change within a short delay can be an important feature for certain applications, e.g., this feature is used in \cite{selin_reprocs} to detect structural changes in a dynamic social network.
Moreover, if offline processing is allowed, we can guarantee recovery within normalized error $\zz$ at all time instants. This implies that offline-NORST solves the dynamic RPCA problem.

Observe that Theorem \ref{thm1} {allows a constant maximum fraction of outliers per row (after initialization), without making any assumption on how the outlier support is generated, as long as the extra assumptions discussed below hold.} We explain why this is possible in Sec. \ref{sec:why_weakened}. Of course, for the initial $C r$ samples, NORST needs $\outfracrow^{Cr} \in O(1/r)$ (needed to apply AltProj).
Also, the memory complexity guaranteed by Theorem \ref{thm1} is nearly $\tmax/r$ times better than that of all existing RPCA solutions; see Table \ref{compare_assu}. The time complexity is worse than that of only NO-RMC\footnote{NO-RMC is so fast because it is actually a robust matrix completion solution and it deliberately undersamples the entire data matrix $\Y$ to get a faster RPCA algorithm.}, but NO-RMC needs $\tmax \ge c n$ (unreasonable requirement for videos which often have much fewer frames $\tmax$ than the image size $n$). Finally, NORST also needs outlier fraction per column to be $O(1/r)$ instead of $O(1/\rmat)$. If $J$ is large, e.g. if $J=d/(r\log n)$, it is possible that $\rmat \gg r$.

We should clarify that NORST allows $\outfracrow^\alpha \in O(1)$ but this does not necessarily imply that the number of outliers in each row can be this high. The reason is it only allows the fraction per column to only be $O(1/r)$. Thus, for a matrix of size $n \times \alpha$, it allows the total number of outliers to be $O(\min(n\alpha, n\alpha/r)) = O(n\alpha/r)$. Thus the average fraction allowed is only $O(1/r)$.%

NORST needs the following extra assumptions. The main extra requirement is that $\xmint$ be lower bounded as given in the last two assumptions of Theorem \ref{thm1}, or as stated in Remark \ref{rem:r_change}. The lower bound on $\xmint$ is reasonable\footnote{requires $\xmint$ to be $C \sqrt{\lambda^+}$ or larger.} as long as the initial subspace estimate is accurate enough and the subspace changes slowly enough so that both $\dif$ and $\SE(\Phat_0,\P_0)$ are $O(1/\sqrt{r})$. This requirement may seem restrictive on first glance but actually is not. The reason is that $\SE(.)$ is only measuring the largest principal angle. This bound on $\SE$ still allows the chordal distance between the two subspaces to be $O(1)$. Chordal distance  \cite{chordal_dist}  is the $l_2$ norm of the vector containing the sine of all principal angles. 
The second related extra requirement is an upper bound on $\dif$ (slow subspace change) which depends on the value of $\xmint$. We discuss this point next. Other than these two, NORST only needs simple statistical assumptions on $\at$'s. The zero-mean assumption  is a minor one. The assumption that $\Lam$ be diagonal is also minor\footnote{It only implies that $\P_j$ is the matrix of principal components of $\E[\L_j \L_j']$ where $\L_j:=[\l_{t_j}, \l_{t_j+1},\dots, \l_{t_{j+1}-1}]$.}.
In the video setting, zero-mean can be ensured by subtracting the empirical average of the background images computing using the first $t_\train$ frames. Mutual independence of $\at$'s holds if the changes in each background image w.r.t. a ``mean'' background are independent, when conditioned on their subspace. This is valid, for example, if the background changes are due to illumination variations or due to moving curtains (see Fig. \ref{fig:mr_full}). Moreover, by using the approach of \cite{rrpcp_aistats}, it is possible to relax this to just requiring that the $\at$'s satisfy an autoregressive model over time. Element-wise boundedness, along with the above, is similar to right incoherence (see Remark \ref{rem:bounded}).

\begin{table*}[t!]
\caption{\small{Comparing RPCA and RST solutions. All algorithms also require left and right incoherence or left incoherence and $\at$'s element-wise bounded (this, along with the i.i.d. assumption on $\at$'s, is similar to right incoherence), and hence these are not compared. The incoherence parameter $\mu$ and the condition numbers are treated as constants in this table.
In general $\rmat=rJ$. {\cred \bf Strong or unrealistic assumptions are shown in red.}
PCP(C) \cite{rpca}, PCP(H) \cite{rpca2,rpca_zhang}, mod-PCP \cite{zhan_pcp_jp},  AltProj \cite{robpca_nonconvex}, RPCA-GD \cite{rpca_gd}, NO-RMC \cite{rmc_gd}, orig-ReProCS \cite{rrpcp_isit15,rrpcp_aistats}, s-ReProCS \cite{rrpcp_dynrpca}. 
}}
\begin{center}
\renewcommand*{\arraystretch}{1.3}
\resizebox{\linewidth}{!}{
\begin{tabular}{|l|l|l|l|c|}
\hline
Algorithm & Outlier tolerance & Other Assumptions  & Memory, Time, & \# params \\
\hline
PCP(C) \cite{rpca}  & $\outfracrow \in O(1)$& {\cred \bf outlier support: unif. random,} & Memory: $O(n \tmax)$   & zero \\
       & $\outfraccol \in O(1) $ &   $\rmat \leq {c\min(n,\tmax)}/{\log^2 n}$     &  Time: $O(n \tmax^2 \frac{1}{\epsilon})$   & \ \\  
\hline
PCP(H) \cite{rpca_zhang}   & $\outfracrow \in O\left(1/\rmat\right)$ & $\tmax \ge c \rmat $ & Memory: $O(n \tmax)$   & 1 \\
   & $\outfraccol \in O\left( 1/\rmat\right)$      & \    & Time: $O(n \tmax^2 \frac{1}{\epsilon})$     &   \\
\hline
AltProj \cite{robpca_nonconvex}   & $\outfracrow \in O\left(1/\rmat\right)$ & $\tmax \ge c \rmat $ & Memory: $O(n \tmax)$   & 2 \\
   & $\outfraccol \in O\left( 1/\rmat\right)$      & \    & Time: $O(n \tmax \rmat^2 \log \frac{1}{\epsilon})$   &   \\
\hline
RPCA-GD \cite{rpca_gd}  & $\outfracrow \in O(1/\rmat^{1.5})$     &   $\tmax \ge c \rmat$ & Memory: $O(n \tmax)$   & 5 \\
       & $\outfraccol \in O(1/\rmat^{1.5})$ & \ & Time: $O(n \tmax \rmat \log \frac{1}{\epsilon})$   &  \\
\hline
NO-RMC \cite{rmc_gd} & $\outfracrow \in O\left(1/\rmat\right)$ & {\cred \bf $\mathbf{c_2 n \ge \tmax \ge c n}$}  & Memory: $O(n \tmax)$   & 4 \\
      & $\outfraccol  \in O(1/\rmat)$            &  & Time: $O(n \rmat^3 \log^2 n \log^2 \frac{1}{\epsilon})$   &  \\
\hline
\hline
mod-PCP \cite{zhan_pcp_jp}  & $\outfracrow \in O(1)$ & {\cred \bf outlier support: unif. random,} & Memory: $O(nr \log^2 n)$   &  \\
                                        & $\outfraccol \in O(1)$ & {\cred \bf slow subspace change (unrealistic),} & Time: $O(n \tmax r \log^2 n \frac{1}{\epsilon} )$   &   \\
& \ & $\rmat \leq \frac{c\min(n,\tmax)}{\log^2 n}$ & Delay: $\infty$ & \ \\
\hline
%
{orig-ReProCS} \cite{rrpcp_isit15,rrpcp_aistats} &         {$\outfracrow^\alpha  \in O(1) $} &  {\cred \bf outlier support: moving object model,}      & {Memory: $O(nr^2/{\epsilon^2} )$}  & 5 \\
tracking delay  &         {$\outfraccol \in O(1/\rmat) $} & {\cred \bf unrealistic subspace change model,}    &  {Time: $O(n \tmax r \log \frac{1}{\epsilon})$ }  & \\
too large  &         \ &  {\cred \bf changed eigenvalues small for some time,}    &  \  & \ \\
  &         \ & outlier mag. lower bounded,      &  \ & \ \\
        &  & init data: AltProj assu's, &  \  & \ \\
\                           &   & $\at$'s independent, {\cred \bf $\tmax \ge C r^2/{\epsilon^2}$}    &   & \ \\
\hline
{s-ReProCS:} \cite{rrpcp_dynrpca} &        {$\outfracrow^\alpha  \in O(1) $} & {\cred \bf subspace change: only 1 direc at a time},      & {Memory: $O(nr \log n \log \frac{1}{\epsilon} )$}  & 4 \\
solves  &       {$\outfraccol \in O(1/r) $} & outlier mag. lower bounded,      & {Time: $O(n \tmax r \log \frac{1}{\epsilon})$}  &  \\
RST with  &  & $\at$'s i.i.d., $\tmax \ge C r \log n  \log \frac{1}{\epsilon}$.                                     &  \ & \ \\
sub-optimal delay                           &   &  init data: AltProj assumptions    & \  & \ \\
\hline
{\bf NORST} &        {$\outfracrow^\alpha  \in O(1) $} & subspace change: mild,       & {Memory: $O(nr \log n \log \frac{1}{\epsilon} )$}  & 4 \\
{\bf  (this work):} solves    &       {$\outfraccol \in O(1/r) $} & outlier mag. lower bounded,  & {Time: $O(n \tmax r \log \frac{1}{\epsilon})$}  &  \\
{\bf RST with}                              &       &   $\at$'s i.i.d., $\tmax \ge C r \log n  \log \frac{1}{\epsilon}$,        &  \ & \ \\
{\bf near-optimal delay}            &  & first $C r$ samples: AltProj assumptions    & \  & \ \\
\hline
\end{tabular}
}
\label{compare_assu}
\end{center}
\vspace{-0.15in}
\end{table*}


\subsubsection{Outlier v/s Subspace Assumptions}
When there are fewer outliers in the data or when outliers are easy to detect, one would expect to need weaker assumptions on the true data's subspace and/or on its rate of change. This is indeed true. The $\outfraccol$ bound relates $\outfraccol$ to $\mu$ (not-denseness parameter) and $r$ (subspace dimension). The upper bound on $\Delta$ implies that, if $\xmint$ is larger (outliers are easier to detect), a larger amount of subspace change $\Delta$ can be tolerated. The relation of $\outfracrow$ to rate of subspace change is not evident from the way the guarantee is stated above because we have assumed $\outfracrow \le b_0:=c/f^2$ with $c$ being a numerical constant, and used this to get a simple expression for $K$. If we did not do this, we would get $K = C \lceil \frac{1}{-\log (\sqrt{b_0} f )} \log (\frac{c \Delta}{0.8\zz}) \rceil$, see Remark \ref{K_long_rem}. Since we need $t_{j+1}-t_j \ge (K+2)\alpha$, a smaller $b_0$ means a larger $\Delta$ can be tolerated for the same delay, or vice versa.%


\subsubsection{Algorithm Parameters}
Algorithm \ref{norst} assumes knowledge of 4 model parameters: $r$, $\lambda^+$, $\lambda^-$ and $\xmint$ to set the algorithm parameters.
The initial dataset used for estimating $\Phat_0$ (using AltProj) can be used to get an accurate estimate of $r$, $\lambda^-$ and $\lambda^+$ using standard techniques.
Thus one really only needs to set $\xmint$. If continuity over time is assumed, we can let it be time-varying and set it as $\min_{i \in \That_{t-1}}|(\xhat_{t-1})_i|$ at $t$.

\subsubsection{Related Work}
For a summary of comparisons, see Table \ref{compare_assu}.
In terms of other solutions for provably correct RST or dynamic RPCA, there is very little work. For RST, there is only one other provable algorithm, simple-ReProCS (s-ReProCS) \cite{rrpcp_dynrpca}. 
This has the same tracking delay and memory complexity as NORST, however, it assumes that {\em only one} subspace direction can change at each change time. This is a more restrictive model than ours. Moreover, it implies that the tracking delay of s-ReProCS is $r$-times sub-optimal. Also, s-ReProCS uses a projection-SVD step for subspace update (as opposed to simple SVD in NORST). These two facts imply that it needs an $\epsilon$-accurate subspace initialization in order to ensure that the later changed subspaces can be tracked with $\epsilon$-accuracy. Thus, it does not provide a static RPCA or subspace tracking with missing data solution. 

For dynamic RPCA, the earliest result was a partial guarantee (a guarantee that depended on intermediate algorithm estimates satisfying certain assumptions) for the original reprocs approach (original-ReProCS) \cite{rrpcp_perf}. This was followed up by two complete guarantees for reprocs-based approaches with minor modifications \cite{rrpcp_isit15,rrpcp_aistats}. For simplicity we will still call these ``original-ReProCS''. These guarantees needed very strong assumptions and their tracking delay was $O(nr^2/\epsilon^2)$. Since $\epsilon$ can be very small, this factor can be quite large, and hence one cannot claim that original-ReProCS solves RST. Our work is a very significant improvement over all these works. (i) The guaranteed memory complexity, tracking delay, and required delay between subspace change times of NORST are all $r/\epsilon^2$ times lower than that of original-ReProCS.
%
(ii) All the original-ReProCS guarantees needed a very specific assumption on how the outlier support could change. They required an outlier support model inspired by a video moving object that moves in one direction for a long time; and whenever it moves, it must move by a fraction of $s:=\max_t|\T_t|$. This is very specific model with the requirement of moving by a fraction of $s$ being the most restrictive. Our result removes this model and replaces it with just a bound on $\outfracrow$. We explain in the last para of Sec. \ref{only_outline} why this is possible.
(iii) The subspace change model assumed in \cite{rrpcp_isit15, rrpcp_aistats} required a few new directions, that were orthogonal to $\P_{j-1}$, to be added at time $t_j$ and some others to be removed. This is an unrealistic model for slow subspace change, e.g., in 3D, it implies that the subspace needs to change from the x-y plane to the y-z plane. Moreover because of this model, their results needed the ``energy'' (eigenvalues) along the newly added directions to be small for a period of time after each subspace change. This is a strong (and not easy to interpret) requirement. Our result removes all these requirements and replaces them with a bound on $\SE(\P_{j-1},\P_j)$ which is much more realistic. Thus, in 3D, our result allows the x-y plane to change to a slightly tilted x-y plane.
%

An approach called modified-PCP (mod-PCP) was proposed to solve the problem of {\em RPCA with partial subspace knowledge} \cite{zhan_pcp_jp}. A corollary of its guarantee shows that it can also be used to solve dynamic RPCA \cite{zhan_pcp_jp}. However, since it adapted the PCP proof techniques from \cite{rpca}, its pros and cons are similar to those of PCP, e.g., it also needs a uniformly randomly generated outlier support.
As can be seen from Table \ref{compare_assu}, its pros and cons are similar to those of the PCP result by \cite{rpca} (PCP(C)) discussed below. 

We also provide a comparison with provably correct RPCA approaches in Table \ref{compare_assu}. In summary, NORST has significantly better memory complexity than all of them, all of which are batch; it has the best outlier tolerance (after initialization), and the second-best time complexity, as long as its extra assumptions hold. It can also detect subspace change quickly, which can be a useful feature. Consider outlier tolerance. PCP(H), AltProj, RPCA-GD, and NO-RMC need both $\outfracrow$ and $\outfraccol$ to be $O(1/\rmat)$; PCP(C) \cite{rpca} and modified-PCP \cite{zhan_pcp_jp} need the outlier support to uniformly random (strong requirement: for video it implies that objects are very small sized and jumping around randomly); and original-ReProCS needs it to satisfy a very specific moving object model described above (restrictive). Instead, after initialization, NORST only needs $\outfracrow^\alpha \in O(1)$ and $\outfraccol \in O(1/r)$.


\subsection{The need for extra assumptions}\label{sec:why_weakened}
As noted in \cite{robpca_nonconvex}, the standard RPCA problem (that only assumes left and right incoherence of $\L$ and nothing else) cannot tolerate a bound on outlier fractions in any row or any column that is larger than $1/\rmat$ \footnote{The reason is this: let $b_0 = \outfracrow$, one can construct a matrix $\X$ with $b_0$ outliers in some rows that has rank equal to $1/b_0$. A simple way to do this would be to let the support and nonzero entries of $\X$ be constant for $b_0 \tmax$ columns before letting either of them change. Then the rank of $\X$ will be $\tmax/(b_0 \tmax)$. A similar argument can be used for $\outfraccol$.}.
The reason NORST can tolerate a constant $\outfracrow^\alpha$ bound is because it uses extra assumptions. We explain the need for these here. It recovers $\xt$ first and then $\lt$ and does this at each time $t$.
When recovering $\xt$, it exploits ``good'' knowledge of the subspace of $\lt$ (either from initialization or from the previous subspace's estimate and slow subspace change), but it has no way to deal with the residual error, $\b_t:= (\I - \Phat_{(t-1)} \Phat_{(t-1)}{}') \lt$, in this knowledge. Since the individual vector $\b_t$ does not have any structure that can exploited\footnote{However the $\b_t$'s arranged into a matrix do form a low-rank matrix whose approximate rank is $r$ or even lower (if not all directions change). If we try to exploit this structure we end up with a modified-PCP \cite{zhan_pcp_jp} type approach. This needs the uniform random support assumption (used in its guarantee). Or, if the \cite{rpca_zhang} approach were used for its guaratee, for identifiability reasons similar to the one described above, it will still not tolerate outlier fractions larger than $1/r_\new$ where $r_\new$ is the (approximate) rank of the matrix formed by the $\b_t$'s.}, the error in recovering $\xt$ cannot be lower than $C \|\b_t\|$. This means that, to correctly recover the support of $\xt$, $\xmint$ needs to be larger than $C\|\b_t\|$. This is where the $\xmint$ lower bound comes from.
As we will see in Sec. \ref{sec:proof_outline}, correct support recovery is needed to ensure that the subspace estimate can be improved with each update. In particular, it helps ensure that the error vectors $\et:=\xt - \xhat_t$ in a given subspace update interval are mutually independent, when conditioned on the $\yt$'s from all past intervals.
This step also uses element-wise boundedness of the $\at$'s along with their mutual independence and identical covariances. 

\begin{algorithm}[ht!]
\caption{Automatic-NORST.\\ Obtain $\Phat_0$ by $C (\log r)$ iterations of AltProj on $\Y_{[1,t_\train]}$ with $t_\train=C r$ followed by SVD on the output $\Lhat$.
}
\label{algo:auto-reprocs-pca}
\label{norst}
\begin{algorithmic}[1]
\STATE \textbf{Input}:  $\Phat_0$, $\yt$,  \textbf{Output}:  $\shatt$, $\lhatt$,  $\Phat_{(t)}$
\STATE \textbf{Parameters:} $K \leftarrow C\log(1/\zz)$, $\alpha \leftarrow C f^2 r \log n$, $\omega_{supp} \leftarrow \xmint/2$, $\xi \leftarrow \xmint/15$,  $\lthres \leftarrow 2 \zz^2 \lambda^+$, $r$.
\STATE $\Phat_{(t_\train)} \leftarrow \hat{\pt}_{0}$;  $\tildej~\leftarrow~1$, $k~\leftarrow~1$
\STATE $\mathrm{phase} \leftarrow \mathrm{update}$; $\that_{0} \leftarrow t_\train$;
\FOR {$t > t_\train$}
\STATE Lines $5-10$ of Algorithm \ref{norst_basic} \tikzmark{top0} \hspace{6.5cm}\tikzmark{right0} \tikzmark{bottom0}
\IF{$\text{phase} = \text{detect}$ and $t = \hat{t}_{j-1, fin} + u\alpha$}
\STATE $\bphi \leftarrow (\I - \Phat_{j-1}\Phat_{j-1}{}')$. \tikzmark{top} \hspace{7cm}\tikzmark{right}
\STATE $\bm{B} \leftarrow \bphi\Lhat_{t, \alpha}$
\IF {$\lambda_{\max}(\bm{B}\bm{B}{}') \geq \alpha \lthres$}
\STATE $\text{phase} \leftarrow \text{update}$, $\hat{t}_j \leftarrow t$,
\ENDIF
\tikzmark{bottom}
\ENDIF \tikzmark{top2} \hspace{10.3cm}\tikzmark{right2}
\IF{$\text{phase} = \text{update}$} 
\IF {$t = \that_j + u \alpha - 1$ for $u = 1,\ 2,\ \cdots,$}
\STATE $\Phat_{j, k} \leftarrow  \SVD_r[\Lhat_{t; \alpha}]$, $\Phat_{(t)} \leftarrow \Phat_{j,k}$, $k \leftarrow k + 1$.
\ELSE
\STATE $\Phat_{(t)} \leftarrow \Phat_{(t-1)}$ \tikzmark{bottom2}
\ENDIF 
\IF{$t = \that_j + K\alpha - 1$}
\STATE $\hat{t}_{j, fin} \leftarrow t$, $\Phat_{j} \leftarrow \Phat_{(t)}$
\STATE $k \leftarrow 1$, $j \leftarrow j+1$, $\text{phase} \leftarrow \text{detect}$.
\ENDIF
\ENDIF
\ENDFOR
\STATE {\bf Offline NORST: }
At $t = \that_j + K \alpha$, for all $t \in [\that_{j-1}+ K \alpha,  \that_j + K \alpha-1]$,  \tikzmark{top3} \hspace{0.3cm}\tikzmark{right3}
\STATE $\Phat_{(t)}^{\mathrm{offline}} \leftarrow [\Phat_{j-1}, (\I - \Phat_{j-1} \Phat_{j-1}{}') \Phat_{j}]$
\STATE $\bpsi \leftarrow \I - \Phat_{(t)}^{\mathrm{offline}} \Phat_{(t)}^{\mathrm{offline}}{}'$
\STATE $\xhatt^{\mathrm{offline}} \leftarrow \I_{\That_t} (\bpsi_{\That_t}{}'\bpsi_{\That_t})^{-1} \bpsi_{\That_t}{}' \yt$
\STATE $\lhatt^{\mathrm{offline}} \leftarrow \yt - \xhatt^{\mathrm{offline}}$. \tikzmark{bottom3}
\end{algorithmic}
\AddNoteOne{top0}{bottom0}{right0}{Projected CS.}
\AddNote{top}{bottom}{right}{Subspace Detect Phase.}
\AddNote{top2}{bottom2}{right2}{Subspace Update Phase.}
\AddNote{top3}{bottom3}{right3}{Offline NORST.}
\end{algorithm}

\section{Automatic NORST}\label{automatic_norost}
We present the actual NORST algorithm (automatic NORST) in Algorithm \ref{algo:auto-reprocs-pca}. The main idea why automatic NORST works is the same as that of the basic algorithm with the exception of the additional subspace detection step. The subspace detection idea is borrowed from \cite{rrpcp_dynrpca}, although its correctness proof has differences because we assume a much simpler subspace change model.
In Algorithm \ref{algo:auto-reprocs-pca}, the subspace update stage toggles between the ``detect'' phase and the ``update'' phase. It starts in the ``update'' phase with $\that_0 = t_\train$. We then perform $K$ $r$-SVD steps with the $k$-th one done at $t = \that_0 + k \alpha-1$. Each such step uses the last $\alpha$ estimates, i.e., uses $\Lhat_{t; \alpha}$. Thus at $t = \that_0 + K \alpha - 1$, the subspace update of $\P_0$ is complete. At this point, the algorithm enters the ``detect'' phase.

For any $j$, if the $j$-th subspace change is detected at time $t$, we set  $\that_j=t$. At this time, the algorithm enters the ``update'' (subspace update) phase. We then perform $K$ $r$-SVD steps with  the $k$-th $r$-SVD step done at $t = \that_j + k \alpha-1$ (instead of at $t=t_j+ k\alpha-1$). Each such step uses the last $\alpha$ estimates, i.e., uses $\Lhat_{t; \alpha}$ Thus, at $t = \that_{j,fin} = \that_j + K\alpha - 1$, the update is complete. At this time, the algorithm enters the ``detect'' phase again.%

To understand the change detection strategy, consider the $j$-th subspace change. Assume that the previous subspace $\P_{j-1}$ has been accurately estimated by $t= \that_{j-1,fin} = \that_{j-1}+K\alpha-1$ and that $\that_{j-1,fin} < t_j$. Let $\Phat_{j-1}$ denote this estimate. At this time, the algorithm enters the ``detect'' phase in order to detect the next ($j$-th) change. Let $\B_t:=(\I-\Phat_{j-1}\Phat_{j-1}{}')  \Lhat_{t;\alpha}$. For every $t = \that_{j-1,fin} + u \alpha-1$, $u=1,2,\dots$, we detect change by checking if the maximum singular value of $\B_t$ is above a pre-set threshold, $ \sqrt{\lthres \alpha}$, or not.

We claim that, whp, under assumptions of Theorem \ref{thm1}, this strategy has no ``false subspace detections'' and correctly detects change within a delay of at most $2\alpha$ samples. The former is true because, for any $t$ for which $[t-\alpha+1, t] \subseteq [ \that_{j-1,fin}, t_j)$, all singular values of the matrix $\B_t$ will be close to zero (will be of order $\zz \sqrt{\lambda^+}$) and hence its maximum singular value will be below $ \sqrt{\lthres \alpha}$. Thus, whp, $\that_j \ge t_j$. To understand why the change {\em is} correctly detected within $2 \alpha$ samples, first consider $t =\that_{j-1,fin} + \lceil \frac{t_j - \that_{j-1,fin}}{\alpha} \rceil \alpha:= t_{j,*} $. Since we assumed that $\that_{j-1,fin}< t_j$ (the previous subspace update is complete before the next change), $t_j$ lie in the interval $[t_{j,*}-\alpha+1, t_{j,*}]$. Thus, not all of the $\lt$'s in this interval lie in the new subspace. Depending on where in the interval $t_j$ lies, the algorithm may or may not detect the change at this time.  However, in the {\em next} interval, i.e., for $t \in [t_{j,*}+1, t_{j,*} + \alpha]$,  all of the $\lt$'s lie in the new subspace. We can prove that $\B_t$ for this time $t$ {\em will} have maximum singular value that is above the threshold.
Thus, if the change is not detected at $t_{j,*}$, whp, it {\em will} get detected at $t_{j,*} + \alpha$. Hence one can show that, whp, either $\that_j = t_{j,*}$, or $\that_j =t_{j,*} + \alpha$, i.e., $t_j \le \that_j \le t_j + 2\alpha$ (see Appendix \ref{sec:proof}).

\subsubsection{Time complexity}
Consider initialization.
To ensure that $\SE(\Phat_0, \P_0) \in O(1/\sqrt{r})$, we need to use $C \log r$ iterations of AltProj. Since there is no lower bound in the AltProj guarantee on the required number of matrix columns (except the trivial lower bound of rank) \cite{robpca_nonconvex}, we can use $t_\train = C r$ frames for initialization. Thus the initialization complexity is $O(n t_\train r^2 \log(\sqrt{r}) = O(n r^3 \log r)$ \cite{robpca_nonconvex}. The projected-CS step complexity is equal to the cost of a matrix vector multiplication with the measurement matrix times negative logarithm of the desired accuracy in solving the $l_1$ minimization problem. Since the measurement matrix for the CS step is $\I - \Phat_{(t-1)} \Phat_{(t-1)}{}'$, the cost per CS step (per frame) is $O(nr\log(1/\epsilon))$ \cite{l1_best} and so the total cost is $O((d-t_\train) nr\log(1/\epsilon))$.
 The subspace update involves at most $((d - t_\train)/ \alpha)$ rank $r$-SVD's on $n\times \alpha$ matrices all of which have constant eigen-gap (this is proved in the proof of tTheorem \ref{cor:pcasddn} from \cite{pca_dd} which we use to show correctness of this step). Thus the total time for subspace update steps is at most $((d-t_\train)/\alpha)*O(n \alpha r \log(1/\epsilon)) = O((d-t_\train) n r \log(1/\epsilon))$ \cite{musco2015randomized}. Thus the running time of the complete algorithm is $O(ndr\log(1/\epsilon) + nr^3 \log r)$. As long as $r^2 \log r \le d \log(1/\epsilon)$, the time complexity of the entire algorithm is $O(n d r\log(1/\epsilon))$.

\newcommand{\vv}{\bm{v}} 
\begin{remark}[Relating our assumptions to right incoherence of $\L_j := \L_{[t_j, t_{j+1})}$ \cite{rpca_zhang}]\label{rem:bounded}
From our assumptions, $\L_j = \P_j \A_j$ with $\A_j:= [\a_{t_j},\a_{t_j+1},\dots \a_{t_{j+1}-1}]$, the columns of $\A_j$ are zero mean, mutually independent, have identical covariance $\Lam$, $\Lam$ is diagonal, and are element-wise bounded as specified by Theorem \ref{thm1}. Let $d_j := t_{j+1}-t_j$.
Define a diagonal matrix $\Sigma$ with $(i,i)$-th entry $\sigma_i$ and with $\sigma_i^2 := \sum_t (a_t)_i^2 / d_j$. Define a $d_j \times r$ matrix $\tilde\V$ with the $t$-th entry of the $i$-th column being $(\tilde\vv_i)_t: = (\a_t)_i/ (\sigma_i \sqrt{d_j})$. Then, $\L_j = \P_j \Sigma \tilde{\V}'$ and each column of $\tilde\V$ is unit 2-norm. Also, from the bounded-ness assumption,
$(\tilde\vv_i)_t^2 \leq \eta \frac{\lambda_i}{\sigma_i^2} \cdot \frac{1}{d_j}$ where $\eta$ is a numerical constant.

Observe that $\P_j \Sigma \tilde\V'$ is not exactly the SVD of $\L_j$ since the columns of $\tilde\V$ are not necessarily exactly mutually orthogonal. However, if $d_j$ is large enough, one can argue using any law of large numbers' result (e.g., Hoeffding inequality), that the columns of $\tilde\V$ are approximately mutually orthogonal whp. Also, whp, $\sigma_i^2 \ge 0.99 \lambda_i$. This also follows using Hoeffding\footnote{The first claim uses all the four assumptions on $\at$; the second claim uses all assumptions except diagonal $\Lam$}.
Thus, our assumptions imply that, whp, $(\tilde\vv_i)_t^2 \leq C/{d_j}$.

If one interprets $\tilde\V$ as an ``approximation'' to the right singular vectors of $\L_j$,  this is the right incoherence assumed by \cite{rpca_zhang} and slightly stronger than what is assumed by \cite{rpca,robpca_nonconvex} and others (these  require that the squared norm of each row of the matrix of right singular vectors be bounded by $C r / d_j$). 

The claim that ``$\tilde\V$ can be interpreted as an ``approximation'' to the right singular vectors of $\L_j$'' is not rigorous. But it is also not clear how to make it rigorous since our work uses statistical assumptions on the $\at$'s. To get the exact SVD of $\L_j$, we need the SVD of $\A_j$. Suppose $\A_j \svdeq \R \Sigma \V'$, then $\L_j \svdeq (\P_j \R)  \Sigma \V'$. Here $\R$ will be an $r \times r$ orthonormal matrix. Now it is not clear how to relate the element-wise bounded-ness assumption on $\at$'s to an assumption on entries of $\V$, since now there is no easy expression for each entry of $\V$ or of the entries of $\Sigma$ in terms of $\at$ (since $\R$ is unknown).
\end{remark}

\section{Proof Outline and (most of the) Proof} \label{sec:proof_outline}
In this section we first give the main ideas of the proof (without formal lemmas). We then state the three main lemmas and explain how they help prove Theorem \ref{thm1}. After this, we prove the three lemmas.

\subsection{Main idea of the proof} \label{only_outline}
It is not hard to see that the ``noise'' $\b_t:=\bm\Psi (\lt+ \vt)$ seen by the projected CS step is proportional the error between the subspace estimate from $(t-1)$ and the current subspace. 
Moreover, incoherence (denseness) of the $\P_{(t)}$'s and slow subspace change together imply that $\bpsi$ satisfies the restricted isometry property (RIP) \cite{rrpcp_perf}.
Using this, a result for noisy $l_1$ minimization \cite{candes_rip}, and the lower bound assumption on outlier magnitudes, one can ensure that the CS step output is accurate enough and the outlier support $\T_t$ is correctly recovered. With this, we have that $\lhat_t = \lt + \vt - \et$ where $\et:=\x_t -\xhat_t$ satisfies
$
\et= \I_{\T_t} (\bm\Psi_{\T_t}{}'\bm\Psi_{\T_t})^{-1} \I_{\T_t}{}' \bm\Psi' \lt
$
and $\|\et\| \le C \|\b_t\|$. 
Consider subspace update. Every time the subspace changes, one can show that the change can be detected within a short delay. After that, the $K$ SVD steps help get progressively improved estimates of the changed subspace. To understand this, observe that, after a subspace change, but before the first update step, $\b_t$ is the largest and hence, $\et$, is also the largest for this interval. However, because of good initialization or because of slow subspace change and previous subspace correctly recovered (to error $\zz$), neither is too large. Both are proportional to $(\zz + \dif)$, or to the initialization error. 
Using the idea below, we can show that we get a ``good'' first estimate of the changed subspace.

The input to the PCA step is $\lhat_t$ and the noise seen by it is $\et$. Notice that $\et$ depends on the true data $\lt$. Hence this is a setting of PCA in data-dependent noise \cite{corpca_nips,pca_dd}.
From \cite{pca_dd}, it is known that the subspace recovery error of the PCA step is proportional to the ratio between the time averaged noise power plus time-averaged signal-noise correlation, $(\|\sum_t \E[\et \et{}']\| + \|\sum_t \E[\lt \et{}'\|)/\alpha$, and the minimum signal space eigenvalue, $\lambda^-$.
The instantaneous value of noise power is $(\dif+\zz)^2$ times $\lambda^+$ while that of signal-noise correlation is of order $(\dif+\zz)$ times $\lambda^+$. However, using the fact that $\et$ is sparse with support $\T_t$ that changes enough over time so that $\outfracrow^\alpha$ is bounded, one can argue (using Cauchy-Schwartz) that their time averaged values are $\sqrt{\outfracrow^\alpha}$ times smaller. As a result, after the first subspace update, the subspace recovery error is at most $4 \sqrt{\outfracrow^\alpha} (\lambda^+/\lambda^-)$ times $(\dif+\zz)$. Since $\outfracrow^\alpha (\lambda^+/\lambda^-)^2$ is bounded by a constant $c_2 < 1$, this means that, after the first subspace update, the subspace error is at most $\sqrt{c_2}$ times $(\dif+\zz)$.

This, in turn, implies that $\|\b_t\|$, and hence $\|\et\|$, is also $\sqrt{c_2}$ times smaller in the second subspace update interval compared to the first. This, along with repeating the above argument, helps show that the second estimate of the changed subspace is $\sqrt{c_2}$ times better than the first and hence its error is $(\sqrt{c_2})^2$ times $(\dif+\zz)$. Repeating the argument $K$ times, the $K$-th estimate has error $(\sqrt{c_2})^K$ times $(\dif+\zz)$. Since $K = C \log(1/\zz)$, this is an $\zz$ accurate estimate of the changed subspace.

A careful application of the result of \cite{pca_dd} is the reason why we are able to remove the moving object model assumption on the outlier support needed by the earlier guarantees for original-ReProCS \cite{rrpcp_isit15,rrpcp_aistats}. Applied to our problem, this result requires $\|\sum_{t \in \J^\alpha} \itt \itt{}' /\alpha\|$ to be bounded by a constant less than one. It is not hard to see that $\max_{\J^\alpha \in [1,\tmax]} \|\sum_{t \in \J^\alpha} \itt \itt{}'/\alpha\| = \outfracrow^{\alpha}$. To understand this simply, the matrix $\sum_{t \in \J^\alpha} \itt \itt{}'$ is diagonal, and the $i$-th diagonal entry counts the number of time the index $i$ appears in the support set $\T_t$ in the interval $\J^{\alpha}$ which is precisely the definition of $\outfracrow^{\alpha} \cdot \alpha$. This is also why a constant bound on $\outfracrow^{\alpha}$ suffices for our setting. On the other hand the guarantees of \cite{rrpcp_isit15,rrpcp_aistats} required that, for any sequence of positive semi-definite (p.s.d.) matrices, $\A_t$, $\|\sum_{t \in \J^\alpha} \itt \A_t \itt{}' /\alpha\|$,  be bounded by a constant less than one. This is a much more stringent requirement; one way to satisfy it is using the moving object model on outlier supports assumed there.

\subsection{Main Lemmas}


For simplicity, we give the proof for the $\vt=0$ case. The changes with $\vt \neq 0$ are minor, see Appendix \ref{sec:proof}.

First consider the simpler case when $t_j$'s are known, i.e., consider Algorithm \ref{norst_basic}. In this case, $\that_j = t_j$.
\begin{definition} 
Define
\begin{enumerate}



\item the constants used in Theorem \ref{thm1}: $c_1=0.01$, $c_2 = 0.01$, and $C_1= 15\sqrt{\eta}$

\item  $s:= \outfraccol \cdot n$
\item $\phi^+ = 1.2$
\item bound on $\outfracrow^\alpha$: $b_0:= 0.01/f^2$.

\item  $q_0 := 1.2(\zz + \SE(\P_{j-1}, \P_j))$, $q_{k} = (0.3)^{k}q_0$

\item $\et := \xhatt - \xt$. Since $\vt=0$,  $\et = \lt - \lhatt$

\item Events: $\Gamma_{0,0}:= \{\text{assumed bound on } \SE(\Phat_0,\P_0)\}$,
\\ $\Gamma_{0, k}:= \Gamma_{0,k-1} \cap \{\SE(\Phat_{0,k},\P_0) \le 0.3^k \SE(\Phat_0,\P_0) \}$, 
\\  $\Gamma_{j, 0} := \Gamma_{j-1, K}$, $\Gamma_{j, k} := \Gamma_{j, k-1} \cap \{\SE(\Phat_{j, k}, \P_j) \le q_{k-1}/4 \}$ for $j=1,2,\dots, J$ and $k=1,2,\dots, K$.

\item Using the expression for $K$ given in the theorem, and since $\Phat_j = \Phat_{j,k}$ (from the Algorithm), it follows that $\Gamma_{j,K}$ implies $\SE(\Phat_j, \P_j)=\SE(\Phat_{j, K}, \P_j) \leq \zz$.
\end{enumerate}
\label{defs}
\end{definition}
Observe that, if we can show that $\Pr(\Gamma_{J,K} | \Gamma_{0, 0}) \geq 1 - dn^{-10}$ we will have obtained all the subspace recovery bounds of Theorem \ref{thm1}. The next two lemmas applied sequentially help show that this is true for Algorithm \ref{norst_basic} ($t_j$ known). The correctness of the actual algorithm (Algorithm \ref{algo:auto-reprocs-pca}) follows using these, Corollary \ref{cor:etbnds}, and Lemma \ref{lem:sschangedet}. The Theorem's proof is in Appendix \ref{sec:proof}.

\begin{lem}[first subspace update interval]\label{lem:reprocspcalemone}
Under the conditions of Theorem \ref{thm1}, conditioned on $\Gamma_{j, 0}$,
\begin{enumerate}
\item for all $t \in [\hat{t}_j, \hat{t}_j + \alpha)$,
$\norm{\bm\Psi \lt} \le (\zz +  \semax) \sqrt{\eta r \lambda^+} < \xmint/15$,
$\norm{\xhat_{t,cs} - \xt} \le 7 \xmint/15 < \xmint/2$,  $\That_t = \T_t$, and
the error $\et := \xhatt - \xt = \lt - \lhatt$ satisfies
\begin{align}\label{eq:etdef}
\et = \itt \left( \bpsi_{\Tt}{}'\bpsi_{\Tt} \right)^{-1} \itt{}' \bpsi \lt,
\end{align}
and $\norm{\et} \leq 1.2 (\zz  + \semax) \sqrt{\eta r \lambda^+}$.
\item w.p. at least $1 - 10n^{-10}$, the first subspace estimate $\Phat_{j,1}$ satisfies $\SE(\Phat_{j, 1}, \P_j) \leq (q_{0}/4)$, i.e., $\Gamma_{j, 1}$ holds.

\end{enumerate}
\end{lem}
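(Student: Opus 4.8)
The plan is to handle the two parts sequentially, since part 2 depends on the error bounds established in part 1. For part 1, I would condition on $\Gamma_{j,0}$, which by Definition \ref{defs} gives $\SE(\Phat_{(t-1)}, \P_{j-1}) \le \zz$ for the relevant $t$ (or the initialization bound when $j$ corresponds to $\Phat_0$). First I would bound $\norm{\bpsi \lt}$: writing $\lt = \P_{(t)} \at = \P_j \at$ and $\bpsi = \I - \Phat_{(t-1)}\Phat_{(t-1)}{}'$, I get $\norm{\bpsi \lt} \le \SE(\Phat_{(t-1)}, \P_j)\, \norm{\at} \le (\SE(\Phat_{(t-1)}, \P_{j-1}) + \SE(\P_{j-1},\P_j))\norm{\at} \le (\zz + \semax)\norm{\at}$ by the triangle inequality for $\SE$ on equal-dimensional subspaces. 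Then element-wise boundedness of $\at$ gives $\norm{\at}^2 \le \eta r \lambda^+$, yielding $\norm{\bpsi\lt} \le (\zz+\semax)\sqrt{\eta r \lambda^+}$; the assumption $C_1\sqrt{r\lambda^+}(\semax + 2\zz) \le \xmint$ with $C_1 = 15\sqrt\eta$ then makes this $< \xmint/15 = \xi$. Next, the RIP of $\bpsi$ — which follows from $\mu$-incoherence of $\P_j$, the $\outfraccol$ bound, and $\SE(\Phat_{(t-1)},\P_j)$ small, as in \cite{rrpcp_perf} — together with the noisy $\ell_1$ guarantee of \cite{candes_rip} applied with noise level $\xi$, gives $\norm{\xhat_{t,cs} - \xt} \le C\xi \le 7\xmint/15 < \xmint/2 = \omega_{supp}$. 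Correct support recovery $\That_t = \T_t$ then follows: every index in $\T_t$ has $|(\xt)_i| \ge \xmint$, so $|(\xhat_{t,cs})_i| > \xmint/2 = \omega_{supp}$, while indices outside $\T_t$ have $|(\xhat_{t,cs})_i| \le \norm{\xhat_{t,cs}-\xt} < \omega_{supp}$. With correct support, the LS debiasing step gives the closed form \eqref{eq:etdef} by substituting $\tty_t = \bpsi\xt + \bpsi\lt$ and using $\itt{}'\itt = \I$, and $\norm{\et} \le \norm{(\bpsi_{\Tt}{}'\bpsi_{\Tt})^{-1}}\norm{\bpsi\lt} \le 1.2(\zz+\semax)\sqrt{\eta r \lambda^+}$, where the $1.2$ comes from the RIP lower bound $\norm{(\bpsi_{\Tt}{}'\bpsi_{\Tt})^{-1}} \le 1/(1-\delta) \le 1.2$.

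For part 2, the subspace estimate $\Phat_{j,1} = \SVD_r[\Lhat_{t;\alpha}]$ over $t \in [\that_j, \that_j+\alpha)$ is an instance of PCA in data-dependent noise, since $\lhat_t = \lt - \et$ with $\et$ given by \eqref{eq:etdef} depending on $\lt$. I would invoke the result of \cite{pca_dd} (Theorem \ref{cor:pcasddn}): the subspace recovery error is controlled by the ratio of time-averaged noise power plus signal-noise correlation to $\lambda^-$. The key estimate is that, because $\et = \itt(\cdots)\itt{}'\bpsi\lt$ is supported on $\T_t$, a Cauchy-Schwarz argument bounds $\norm{\frac1\alpha \sum_t \E[\et\et{}']}$ and $\norm{\frac1\alpha \sum_t \E[\lt \et{}']}$ by $\sqrt{\outfracrow^\alpha}$ times their instantaneous magnitudes — the $\ell_2$ norm of $\frac1\alpha\sum_{t}\itt\itt{}'$ being exactly $\outfracrow^\alpha \le b_0 = 0.01/f^2$. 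This gives, after the first update, $\SE(\Phat_{j,1},\P_j) \le 4\sqrt{\outfracrow^\alpha}\, f\,(\semax + \zz) \le 4\sqrt{b_0}\, f\, (\semax+\zz) = 0.4(\semax+\zz) < q_0/4$ (using $q_0 = 1.2(\zz + \semax)$ and $b_0 f^2 = 0.01$), with the failure probability $\le 10 n^{-10}$ coming from the concentration bounds in \cite{pca_dd}, which require $\alpha = Cf^2 r\log n$ samples and use element-wise boundedness, mutual independence, and identical covariances of the $\at$'s. This establishes $\Gamma_{j,1}$.

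The main obstacle I expect is part 2: carefully verifying that the hypotheses of the data-dependent PCA result of \cite{pca_dd} are met in our setting and tracking the constants so that the bound lands below $q_0/4$ rather than merely below $q_0$. In particular, I need to correctly decompose $\et$ into the term along $\range(\P_j)$ (contributing to signal-noise correlation) and control the cross terms, verify that the $\et$'s in this interval are mutually independent conditioned on the past (this uses correct support recovery from part 1 and the independence of $\at$'s from the $\T_t$'s, and is exactly the subtle point flagged in Sec. \ref{sec:why_weakened}), and confirm that the ``effective dimension'' of the noise is $O(r)$ so that $O(r\log n)$ samples suffice. The RIP verification for $\bpsi$ in part 1 is standard but also requires attention, since $\bpsi$ depends on the random estimate $\Phat_{(t-1)}$; conditioning on $\Gamma_{j,0}$ makes $\Phat_{(t-1)}$ deterministic enough for the argument to go through.
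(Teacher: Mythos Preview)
Your plan matches the paper's proof in structure and in every key idea: bound $\norm{\bpsi\lt}$ via the $\SE$ triangle inequality (the paper's Lemma \ref{lem:sumprinang}), establish the RIP of $\bpsi$ through Lemma \ref{kappadelta} and $\mu$-incoherence, apply the noisy $\ell_1$ guarantee of \cite{candes_rip} to get $\That_t=\T_t$, derive the closed form \eqref{eq:etdef} from the LS step, and then invoke the PCA-SDDN result (Theorem \ref{cor:pcasddn}) for Part 2. The conditioning subtlety you flag is exactly what the paper handles in its ``Clarification about conditioning'' remark.

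There is one numerical slip in your Part 2 sketch. You write $\SE(\Phat_{j,1},\P_j)\le 4\sqrt{b_0}\,f\,(\semax+\zz)=0.4(\semax+\zz)<q_0/4$, but $q_0/4=0.3(\semax+\zz)$, so the final inequality is false as stated. The paper sidesteps this by not computing an explicit output bound; instead it applies Theorem \ref{cor:pcasddn} with \emph{target} accuracy $\varepsilon_{\mathrm{SE}}=q_0/4$ and simply verifies the theorem's sufficient condition $\sqrt{b_0}\,q_0\,f \le 0.9(q_0/4)/(1+q_0/4)$. Since $q_0<0.98$, this reduces to $\sqrt{b_0}\,f\le 0.18$, which holds because $b_0=0.01/f^2$ gives $\sqrt{b_0}\,f=0.1$. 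Rewriting your Part 2 in this ``set the target, then check the hypothesis'' form fixes the arithmetic with no change to the overall argument.
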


\begin{lem}[$k$-th subspace update interval]\label{lem:reprocspcalemk}
Under the conditions of Theorem \ref{thm1}, conditioned on $\Gamma_{j, k-1}$,
\begin{enumerate}

\item for all $t \in [\hat{t}_j + (k-1)\alpha, \hat{t}_j + k\alpha - 1)$,  all claims of the first part of Lemma \ref{lem:reprocspcalemone} holds,
$\norm{\bm\Psi \lt} \le 0.3^{k-1} (\zz +  \semax) \sqrt{\eta r \lambda^+} $, and
$\norm{\et} \leq (0.3)^{k-1} \cdot 1.2 (\zz+  \semax) \sqrt{\eta r \lambda^+}$. 
\item w.p. at least $1 - 10n^{-10}$ the subspace estimate $\Phat_{j, k}$ satisfies $\SE(\Phat_{j, k}, \P_j) \leq (q_{k-1}/4)$, i.e., $\Gamma_{j, k}$ holds.

\end{enumerate}
\end{lem}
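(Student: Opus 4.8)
The plan is to prove Lemma \ref{lem:reprocspcalemk} as the generic ``step $k$'' of the same two-part argument that gives the base case in Lemma \ref{lem:reprocspcalemone}; the only new ingredient is the conditioning event, which is used purely to feed a quantitative bound on the current subspace error into the identical downstream analysis. The case $k=1$ is exactly Lemma \ref{lem:reprocspcalemone} (there $\Gamma_{j,0}=\Gamma_{j-1,K}$, so $\SE(\Phat_{j,0},\P_j)\le\zz+\SE(\P_{j-1},\P_j)$ by sub-additivity of $\SE$ and slow subspace change), so I would take $k\ge2$; then $\Gamma_{j,k-1}$ gives the single fact I will use, $\SE(\Phat_{j,k-1},\P_j)\le q_{k-2}/4=(0.3)^{k-1}\bigl(\zz+\SE(\P_{j-1},\P_j)\bigr)\le(0.3)^{k-1}(\zz+\semax)$. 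Since the basic algorithm freezes $\Phat_{(t)}=\Phat_{j,k-1}$ on the whole window $[\that_j+(k-1)\alpha,\ \that_j+k\alpha)$, the operator $\bpsi=\I-\Phat_{j,k-1}\Phat_{j,k-1}{}'$ is fixed there; and from $k\ge2$, $\that_j\ge t_j$, $\that_j\le t_j+2\alpha$ and $t_{j+1}-t_j>(K+2)\alpha$ this window lies inside $[t_j,t_{j+1})$, so $\l_\tau=\P_j\a_\tau$ throughout.

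For Part 1 I would replay the projected-CS / robust-regression argument of Lemma \ref{lem:reprocspcalemone} line for line, with every factor $(\zz+\semax)$ scaled down by $(0.3)^{k-1}$. Concretely: bound $\norm{\bpsi\l_\tau}\le\SE(\Phat_{j,k-1},\P_j)\norm{\a_\tau}\le(0.3)^{k-1}(\zz+\semax)\sqrt{\eta r\lambda^+}$ via $\norm{\a_\tau}^2\le\eta\,\mathrm{tr}(\Lam)\le\eta r\lambda^+$, and observe this is $\le(\zz+\semax)\sqrt{\eta r\lambda^+}\le\xmint/15=\xi$ by the theorem's $\xmint$ lower bound with $C_1=15\sqrt\eta$; use $\mu$-incoherence of $\P_j$ together with $\SE(\Phat_{j,k-1},\P_j)\le q_0/4$ and $\outfraccol\le c_1/(\mu r)$ to certify that $\bpsi$ is well conditioned on every $2s$-subset of columns, so the noisy-$\ell_1$ result \cite{candes_rip} yields $\norm{\xhat_{t,cs}-\xt}\le7\xmint/15<\omega_{supp}$; deduce $\That_t=\T_t$ from $7\xmint/15<\xmint/2<8\xmint/15$ and the lower bound on nonzero outlier magnitudes; and read $\et$ off the least-squares step to recover \eqref{eq:etdef} and $\norm{\et}\le\phi^+\norm{\bpsi\l_\tau}\le(0.3)^{k-1}\cdot1.2(\zz+\semax)\sqrt{\eta r\lambda^+}$, with $\phi^+=1.2$ absorbing $1/(1-\delta_s(\bpsi))$.

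Part 2 is the substantive step. The SVD at $t=\that_j+k\alpha-1$ is a PCA-in-data-dependent-noise instance, $\lhat_\tau=\P_j\a_\tau-\e_\tau$ with $\e_\tau=\bm{M}_\tau\l_\tau$ and $\bm{M}_\tau:=\I_{\T_\tau}(\bpsi_{\T_\tau}{}'\bpsi_{\T_\tau})^{-1}\I_{\T_\tau}{}'\bpsi$ having row support $\T_\tau$, so I would invoke the same result of \cite{pca_dd} used inside Lemma \ref{lem:reprocspcalemone}. Its hypotheses hold because, conditioned on $\Gamma_{j,k-1}$, $\bpsi$ is deterministic while the $\a_\tau$ over the window (hence the $\e_\tau$, using $\T_\tau\perp\a_\tau$) are mutually independent with identical covariance and element-wise bounded, so the noise has ``effective dimension'' $r$ and $\alpha=Cf^2r\log n$ samples suffice for the deviation terms to hold w.p.\ $\ge1-10n^{-10}$. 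The quantitative heart is a pair of time-averaged bounds, obtained by exhibiting the row support of $\bm{M}_\tau$, using $\norm{\bm{M}_\tau\P_j}\le\phi^+\SE(\Phat_{j,k-1},\P_j)$, and then applying Cauchy--Schwarz over $\tau$ with $\norm{\tfrac1\alpha\sum_\tau\I_{\T_\tau}\I_{\T_\tau}{}'}=\outfracrow^\alpha\le b_0$: namely $\norm{\tfrac1\alpha\sum_\tau\E[\e_\tau\e_\tau{}']}\le\lambda^+(\phi^+)^2\SE(\Phat_{j,k-1},\P_j)^2 b_0$ and $\norm{\tfrac1\alpha\sum_\tau\E[\l_\tau\e_\tau{}']}\le\lambda^+\phi^+\SE(\Phat_{j,k-1},\P_j)\sqrt{b_0}$. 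Feeding these into the \cite{pca_dd} bound (which divides the time-averaged noise/correlation by $\lambda^-$) gives $\SE(\Phat_{j,k},\P_j)\le C\sqrt{b_0}\,f\,\SE(\Phat_{j,k-1},\P_j)+(\text{small deviation})$, and since $\sqrt{b_0}\,f=0.1$ and $\phi^+=1.2$, the constants of Definition \ref{defs} force this to be $\le0.3\cdot q_{k-2}/4=q_{k-1}/4$, i.e.\ $\Gamma_{j,k}$ holds; a union bound over the $O(\tmax/\alpha)$ SVD steps is absorbed into the $10\tmax n^{-10}$ of Theorem \ref{thm1}.

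I expect the main obstacle to be precisely this use of \cite{pca_dd} in Part 2: (i) confirming its hypotheses for our specific $\e_\tau$ --- data-dependence on $\l_\tau$, boundedness, and conditional independence across the window given the past --- which is where the mutual independence of the $\a_\tau$'s and $\T_\tau\perp\a_\tau$ are essential; (ii) the Cauchy--Schwarz step that converts instantaneous error magnitudes of order $\SE(\Phat_{j,k-1},\P_j)$ into \emph{time-averaged} quantities of order $\sqrt{\outfracrow^\alpha}\cdot\SE(\Phat_{j,k-1},\P_j)$ --- this $\sqrt{\outfracrow^\alpha}$ gain is exactly what makes a constant bound on $\outfracrow^\alpha$ enough; and (iii) the constant bookkeeping ($c_1=c_2=0.01$, $C_1=15\sqrt\eta$, $\phi^+=1.2$, $b_0=0.01/f^2$) needed to drive the per-step contraction factor below $0.3$ while keeping the concentration deviations negligible against $q_{k-1}/4$. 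Part 1, by contrast, is a routine rescaled copy of Lemma \ref{lem:reprocspcalemone} and should require no new ideas.
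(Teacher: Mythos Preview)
Your proposal is correct and follows essentially the same approach as the paper: both replay the projected-CS argument of Lemma \ref{lem:reprocspcalemone} with the subspace error scaled down by $(0.3)^{k-1}$ via the conditioning on $\Gamma_{j,k-1}$, and then apply the PCA-SDDN result (Theorem \ref{cor:pcasddn}) with $q\equiv q_{k-1}$, $b\equiv b_0$, and $\varepsilon_{\text{SE}}=q_{k-1}/4$, checking that $\sqrt{b_0}f\le 0.12$ suffices. The only cosmetic difference is that the paper first spells out $k=2$ before the general $k$, and invokes Theorem \ref{cor:pcasddn} as a black box rather than unpacking the Cauchy--Schwarz step you describe (that step is already inside the cited theorem).
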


\begin{remark}
For the case of $j=0$, in both the lemmas above, $\semax$ gets replaced by $\SE(\Phat_0,\P_0)$ and $\zz$ by zero.
\end{remark}

\begin{corollary}\label{cor:etbnds}
Under the conditions of Theorem \ref{thm1} the following hold
\begin{enumerate}
\item For all $t \in[t_j, \that_j)$, conditioned on $\Gamma_{j-1, K}$, all claims of the first item of Lemma \ref{lem:reprocspcalemone} hold. 
\item For all $t \in[\that_j + K\alpha, t_{j+1})$, conditioned on $\Gamma_{j, K}$, the first item of Lemma \ref{lem:reprocspcalemk} holds with $k=K$. 
\end{enumerate}
Thus, for all $t$, by the above two claims and Lemmas \ref{lem:reprocspcalemone}, \ref{lem:reprocspcalemk}, under appropriate conditioning, $\et$ satisfies \eqref{eq:etdef}.
\end{corollary}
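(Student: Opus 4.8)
\emph{Proof proposal.} I would prove both items by reducing them to the projected-CS analysis already carried out in Lemmas \ref{lem:reprocspcalemone} and \ref{lem:reprocspcalemk}. The point is that on the two ``extra'' intervals $[t_j,\that_j)$ and $[\that_j+K\alpha,t_{j+1})$, all quantities that drive that analysis---the current subspace error $\SE(\Phat_{(t-1)},\P_{(t)})$, the (approximate) incoherence of $\Phat_{(t-1)}$, the cardinality bound $|\T_t|\le\outfraccol\cdot n$, and the outlier-magnitude lower bound $\xmint$---are no worse than on an interval already treated by those lemmas, so the same deterministic argument (RIP of $\bpsi$, the noisy-$\ell_1$ recovery bound of \cite{candes_rip}, exact support recovery from the $\xmint$ bound, and LS debiasing) applies verbatim, giving \eqref{eq:etdef} together with the stated $\norm{\et}$ bound. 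First I would record the phase bookkeeping for Algorithm \ref{algo:auto-reprocs-pca}: using $t_j\le\that_j\le t_j+2\alpha$ (Lemma \ref{lem:sschangedet}) and the separation assumption $t_{j+1}-t_j>(K+2)\alpha$, one gets $\that_{j-1,fin}=\that_{j-1}+K\alpha-1<t_j$ and $\that_{j,fin}=\that_j+K\alpha-1<t_{j+1}$; hence throughout $[t_j,\that_j)$ the algorithm is in the ``detect'' phase for the $j$-th change and carries $\Phat_{(t-1)}=\Phat_{j-1}$, and throughout $[\that_j+K\alpha,t_{j+1})$ it is in the ``detect'' phase for the $(j+1)$-th change and carries $\Phat_{(t-1)}=\Phat_j$.

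For item 1, on $[t_j,\that_j)$ we have $\lt=\P_j\at$ and $\bpsi=\I-\Phat_{j-1}\Phat_{j-1}{}'$, with, by the triangle inequality for the projection distance and the conditioning on $\Gamma_{j-1,K}$,
\[
\SE(\Phat_{j-1},\P_j)\le\SE(\Phat_{j-1},\P_{j-1})+\SE(\P_{j-1},\P_j)\le\zz+\dif.
\]
This is exactly the configuration of the first subspace-update interval $[\that_j,\that_j+\alpha)$ analysed in Lemma \ref{lem:reprocspcalemone} (there too no $r$-SVD is performed before $t=\that_j+\alpha-1$, so $\Phat_{(t-1)}=\Phat_{j-1}$ throughout, and the conditioning $\Gamma_{j,0}$ is literally $\Gamma_{j-1,K}$). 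Since part 1 of that lemma is a deterministic consequence of this configuration, replaying its proof verbatim yields all of its conclusions on $[t_j,\that_j)$. (For $j=0$ the interval $[1,t_\train)$ is the initialization window, so there is nothing to prove there; throughout, the $j=0$ case uses $\SE(\Phat_0,\P_0)$ in place of $\dif$ and $0$ in place of $\zz$, as in the remark following Lemma \ref{lem:reprocspcalemk}.)

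For item 2, on $[\that_j+K\alpha,t_{j+1})$ we again have $\lt=\P_j\at$ and now $\bpsi=\I-\Phat_j\Phat_j{}'$. By Definition \ref{defs}, $\Gamma_{j,K}$ gives $\SE(\Phat_j,\P_j)=\SE(\Phat_{j,K},\P_j)\le q_{K-1}/4$, which is no larger than $q_{K-2}/4\le 0.3^{K-1}(\zz+\dif)$, the bound that $\Gamma_{j,K-1}$ imposes on $\SE(\Phat_{j,K-1},\P_j)$ and that governs the $k=K$ update interval of Lemma \ref{lem:reprocspcalemk}; moreover the assumption $C_1\sqrt{r\lambda^+}(\dif+2\zz)\le\xmint$ with $C_1=15\sqrt{\eta}$ yields $\norm{\bpsi\lt}\le 0.3^{K-1}(\zz+\dif)\sqrt{\eta r\lambda^+}<\xmint/15$, so support recovery is valid and the same deterministic argument gives part 1 of Lemma \ref{lem:reprocspcalemk} with $k=K$ on this interval. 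The final assertion then follows by concatenation: conditioned on $\Gamma_{J,K}$, which implies every $\Gamma_{j,k}$, each $[t_j,t_{j+1})$ is the union of $[t_j,\that_j)$, $[\that_j,\that_j+\alpha)$, the intervals $[\that_j+(k-1)\alpha,\that_j+k\alpha)$ for $2\le k\le K$, and $[\that_j+K\alpha,t_{j+1})$, on which \eqref{eq:etdef} holds by item 1 of this corollary, by part 1 of Lemma \ref{lem:reprocspcalemone}, by part 1 of Lemma \ref{lem:reprocspcalemk}, and by item 2 of this corollary, respectively; taking the union over $j$ covers all $t>t_\train$.

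\emph{Main obstacle.} There is no new quantitative estimate to establish; the only thing requiring care is the phase bookkeeping---confirming that the \emph{detect}/\emph{update} phases of Algorithm \ref{algo:auto-reprocs-pca} line up with the stated intervals and that $\Phat_{(t-1)}$ equals the claimed estimate on each---which rests on $t_j\le\that_j\le t_j+2\alpha$ (Lemma \ref{lem:sschangedet}) and the change-time separation $t_{j+1}-t_j>(K+2)\alpha$. In the induction of Appendix \ref{sec:proof} this makes the present corollary interleave with Lemma \ref{lem:sschangedet} (item 1 at index $j$ is invoked after Lemma \ref{lem:sschangedet} has been applied at index $j-1$ and before it is applied at index $j$), but there is no circularity, since Lemma \ref{lem:sschangedet} uses only those properties of the projected-CS step already established in Lemmas \ref{lem:reprocspcalemone}--\ref{lem:reprocspcalemk} and in item 1.
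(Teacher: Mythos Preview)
Your proposal is correct and matches the paper's (implicit) approach: the paper does not give a separate proof of this corollary, treating it as an immediate consequence of the item-1 analyses in Lemmas \ref{lem:reprocspcalemone} and \ref{lem:reprocspcalemk}, since on each of the two ``extra'' intervals the algorithm carries the same subspace estimate ($\Phat_{j-1}$ on $[t_j,\that_j)$, $\Phat_j$ on $[\that_j+K\alpha,t_{j+1})$) and the relevant subspace-error bound is no worse than in an interval already handled. Your careful treatment of the phase bookkeeping and the potential interleaving with Lemma \ref{lem:sschangedet} is more explicit than anything in the paper, but the underlying reduction is identical.
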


We prove these lemmas in the next few subsections. The projected CS proof (item one of both lemmas) uses the following lemma from \cite{rrpcp_perf} that relates the $s$-Restricted Isometry Constant (RIC), $\delta_s(.)$,  \cite{candes_rip} of a projection matrix to the incoherence of its orthogonal complement.
\begin{lem}\label{kappadelta} [{\cite{rrpcp_perf}}]
For an $n \times r$ basis matrix $\P$,
(1) $\delta_s(\I - \P \P')  = \max_{|\T| \le s} \|\I_\T{}' \P\|^2$; and
(2) $ \max_{|\T| \le s} \|\I_\T{}' \P\|^2 \le s \max_{i=1,2,\dots,n} \|\I_i{}' \P\|^2 \le s \mu r / n$.
\end{lem}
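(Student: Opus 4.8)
The plan is to exploit the one structural fact that $\M := \I - \P\P'$ is an orthogonal projector, hence symmetric and idempotent: $\M' = \M$ and $\M^2 = \M$. First I would recall the definition of the $s$-restricted isometry constant: $\delta_s(\M)$ is the smallest $\delta \ge 0$ such that $(1-\delta)\|z\|^2 \le \|\M_\T z\|^2 \le (1+\delta)\|z\|^2$ for every index set $\T$ with $|\T| \le s$ and every $z \in \mathbb{R}^{|\T|}$, where $\M_\T$ is the column submatrix of $\M$ indexed by $\T$. The key computation is then $\M_\T{}'\M_\T = \I_\T{}'\M'\M\I_\T = \I_\T{}'\M\I_\T = \I_\T{}'\I_\T - \I_\T{}'\P\P'\I_\T = \I_{|\T|} - (\I_\T{}'\P)(\I_\T{}'\P)'$, where the middle equality uses $\M'\M = \M^2 = \M$ and the last uses $\I_\T{}'\I_\T = \I_{|\T|}$. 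Therefore, for every $z$, $\|\M_\T z\|^2 = \|z\|^2 - \|(\I_\T{}'\P)'z\|^2$.

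From this identity both inequalities in the RIC definition are read off directly. The quantity $\|(\I_\T{}'\P)'z\|^2$ is nonnegative, so $\|\M_\T z\|^2 \le \|z\|^2$ always; hence the upper bound holds with $\delta = 0$ and imposes no constraint. It is also at most $\|\I_\T{}'\P\|^2\|z\|^2$ (operator norm of a matrix equals that of its transpose), so $\|\M_\T z\|^2 \ge (1 - \|\I_\T{}'\P\|^2)\|z\|^2$, with equality attained by taking $z$ to be a top right singular vector of $\I_\T{}'\P$. Thus for a fixed $\T$ the smallest admissible $\delta$ is exactly $\|\I_\T{}'\P\|^2$, and maximizing over the finitely many sets with $|\T| \le s$ gives part (1). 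For part (2) I would bound the operator norm by the Frobenius norm: $\|\I_\T{}'\P\|^2 \le \|\I_\T{}'\P\|_F^2 = \sum_{i \in \T}\|\P^{(i)}\|^2 \le |\T|\,\max_i\|\P^{(i)}\|^2 \le s\,\max_i\|\I_i{}'\P\|^2$, and then invoke $\mu$-incoherence of $\P$ (Definition \ref{defmu}), i.e. $\max_i\|\P^{(i)}\|_2^2 \le \mu r/n$, to conclude $\le s\mu r/n$.

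There is essentially no real obstacle here; the proof is a short linear-algebra identity plus a norm inequality. The only point requiring a little care is the bookkeeping in the RIC definition: one must note that because the upper deviation $\|\M_\T z\|^2 - \|z\|^2$ is identically $\le 0$, the constant $\delta_s$ is governed purely by the lower deviation, so that part (1) is an exact equality rather than merely an upper bound. A secondary minor point is confirming that the maximum over $\T$ is attained (it is, there being finitely many subsets of size at most $s$), so writing ``$\max$'' rather than ``$\sup$'' is legitimate.
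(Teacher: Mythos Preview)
Your proof is correct. The paper does not actually prove this lemma in-line; it is cited from \cite{rrpcp_perf}, with only the remark that the final inequality follows from the incoherence definition (Definition~\ref{defmu}). Your argument---computing $\M_\T{}'\M_\T = \I_{|\T|} - (\I_\T{}'\P)(\I_\T{}'\P)'$ via idempotence of the projector, reading off that the RIC is governed solely by the lower deviation, and then bounding the operator norm by the Frobenius norm for part (2)---is exactly the standard route and is what the cited reference does as well.
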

The last bound of the above lemma is a consequence of Definition \ref{defmu}. We apply this lemma with $s =  \outfraccol \cdot n$. The subspace update step proof (item 2 of both the above lemmas) uses a guarantee for PCA in sparse data-dependent noise, Theorem \ref{cor:pcasddn}, due to \cite{pca_dd}. Notice that $\et = \lt - \lhat_t$ is the noise/error seen by the subspace update step. By \eqref{eq:etdef}, this is sparse and depends on the true data $\lt$.

Consider the actual $t_j$ unknown case. The following lemma is used to show that, whp, we can detect subspace change within $2\alpha$ time instants. This lemmas assumes detection threshold $\lthres = 2\zz^2 \lambda^+$ (see Algorithm \ref{algo:auto-reprocs-pca}).%

\begin{lem}[Subspace Change Detection]\label{lem:sschangedet}
Consider an $\alpha$-length time interval $\J^{\alpha} \subset [t_j, t_{j+1}]$ (so that $\lt = \P_j \at$).
\ben
\item If $\bphi := \I - \Phat_{j-1} \Phat_{j-1}{}'$ and $\SE(\Phat_{j-1}, \P_{j-1}) \leq \zz$, with probability at least $1 - 10n^{-10}$,
\begin{align*}
\lambda_{\max}\left( \frac{1}{\alpha}\sum_{t \in \J^{\alpha}} \bphi \lhatt \lhatt{}' \bphi \right) &\geq 0.28 \lambda^- \SE^2(\P_{j-1}, \P_j) 
> \lthres
\end{align*}
\item If $\bphi := \I - \Phat_j \Phat_j{}'$ and $\SE(\Phat_j, \P_j) \leq \zz $, with probability at least $1 - 10n^{-10}$,
\begin{align*}
\lambda_{\max}\left( \frac{1}{\alpha}\sum_{t \in \J^{\alpha}} \bphi \lhatt \lhatt{}' \bphi \right) &\leq 1.37 \zz^2 \lambda^+ < \lthres
\end{align*}

\een
\end{lem}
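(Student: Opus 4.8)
The plan is to analyze the matrix $\frac{1}{\alpha}\sum_{t\in\J^\alpha}\bphi\lhatt\lhatt{}'\bphi$ by writing $\lhatt = \lt - \et$ (recall $\vt=0$), where $\et$ satisfies the sparse data-dependent form \eqref{eq:etdef} and is small by Corollary \ref{cor:etbnds}. Expanding the outer product gives $\bphi\lhatt\lhatt{}'\bphi = \bphi\lt\lt{}'\bphi - \bphi\lt\et{}'\bphi - \bphi\et\lt{}'\bphi + \bphi\et\et{}'\bphi$. The dominant term is the first one; the cross terms and the $\et\et{}'$ term are perturbations that I will bound in operator norm using $\|\et\| \le 1.2(\zz+\Delta)\sqrt{\eta r\lambda^+}$ (or the appropriate $0.3^{k-1}$-scaled bound) together with $\|\lt\| \le \sqrt{\eta r\lambda^+}$. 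For the main term, since $\lt = \P_j\at$ and the $\at$'s are zero-mean, mutually independent with covariance $\Lam$, I would invoke a matrix concentration / law-of-large-numbers argument (of the same flavor as in Remark \ref{rem:bounded}, using boundedness of $\at$ plus Hoeffding or a matrix Bernstein bound) to say that, whp, $\frac{1}{\alpha}\sum_{t\in\J^\alpha}\at\at{}'$ is within a small multiplicative factor of $\Lam$; here $\alpha = Cf^2 r\log n$ is chosen precisely so the deviation is $O(\zz)$-small relative to $\lambda^-$, and this is where the $n^{-10}$ failure probability comes from.

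For part (1), with $\bphi = \I - \Phat_{j-1}\Phat_{j-1}{}'$ and $\SE(\Phat_{j-1},\P_{j-1})\le\zz$, I want a lower bound on $\lambda_{\max}$. Using $\lambda_{\max}(\bphi\lt\lt{}'\bphi)$-type reasoning: pick a unit vector in the range of $\bphi\P_j$; the point is that $\P_j$ has a component orthogonal to $\text{span}(\Phat_{j-1})$ of size at least $\SE(\Phat_{j-1},\P_j) \ge \SE(\P_{j-1},\P_j) - \SE(\Phat_{j-1},\P_{j-1}) \ge \Delta - \zz$ (triangle-type inequality for subspace error), so $\|\bphi\P_j\|$ is bounded below. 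Combined with the concentration $\frac{1}{\alpha}\sum\at\at{}' \succeq 0.99\,\lambda^- \I$ on the relevant directions, I would get $\lambda_{\max}(\frac{1}{\alpha}\sum\bphi\lt\lt{}'\bphi) \gtrsim 0.99\,\lambda^-\,\SE^2(\Phat_{j-1},\P_j)$, and then folding in $\SE^2(\Phat_{j-1},\P_j)$ vs. $\SE^2(\P_{j-1},\P_j)$ and subtracting the small perturbation terms (of order $(\zz+\Delta)\sqrt{\lambda^+}\cdot\sqrt{r\lambda^+}$ times the appropriate factors — which are negligible against $\lambda^-\Delta^2$ because $\zz$ is chosen $\le 0.03\Delta^2/f$) yields the claimed $0.28\,\lambda^-\,\SE^2(\P_{j-1},\P_j)$. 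Finally $0.28\,\lambda^-\Delta_{\min}^2 > 2\zz^2\lambda^+ = \lthres$ follows directly from the choice $\zz \le 0.03\min_j\SE(\P_{j-1},\P_j)^2/f$.

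For part (2), with $\bphi = \I - \Phat_j\Phat_j{}'$ and $\SE(\Phat_j,\P_j)\le\zz$, I want an upper bound. Now $\|\bphi\lt\| = \|\bphi\P_j\at\| \le \SE(\Phat_j,\P_j)\|\at\| \le \zz\sqrt{\eta r\lambda^+}$, so $\lambda_{\max}(\frac{1}{\alpha}\sum\bphi\lt\lt{}'\bphi) \le \zz^2 \cdot \lambda_{\max}(\frac{1}{\alpha}\sum\at\at{}') \le \zz^2(1.01\lambda^+)$ whp by concentration; adding the $\et$ cross-terms and $\et\et{}'$ term (each bounded using $\|\et\|\le 1.2\zz\sqrt{\eta r\lambda^+}$ in this post-convergence interval, hence of order $\zz^2\lambda^+$ with small constants) pushes the total to at most $1.37\zz^2\lambda^+$, which is $< 2\zz^2\lambda^+ = \lthres$.

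The main obstacle I anticipate is the concentration step: making rigorous that $\frac{1}{\alpha}\sum_{t\in\J^\alpha}\bphi\et\et{}'\bphi$ and the signal–noise cross terms concentrate, given that $\et$ depends on $\lt$ (data-dependent noise) and on the support sets $\T_t$. One has to be careful about the conditioning — the bounds on $\et$ from Lemma \ref{lem:reprocspcalemone}/\ref{lem:reprocspcalemk} and Corollary \ref{cor:etbnds} hold on the high-probability events $\Gamma_{j-1,K}$ (resp. $\Gamma_{j,K}$), so the probabilities must be tracked conditionally, and the matrix concentration must be applied to the conditionally-bounded, conditionally-independent (given past $\yt$'s) error terms. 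The rest is bookkeeping of numerical constants to land exactly on $0.28$ and $1.37$.
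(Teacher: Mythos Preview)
Your overall architecture is correct --- expand $\lhatt=\lt-\et$, isolate the signal term $\frac{1}{\alpha}\sum\bphi\lt\lt{}'\bphi$, control it via concentration of $\frac{1}{\alpha}\sum\at\at{}'$ around $\Lam$, and treat the cross terms and $\et\et{}'$ term as perturbations. That is exactly what the paper does. But there is a genuine gap in how you propose to bound the perturbation terms, and it is not just bookkeeping: with the bounds you sketch, neither part goes through numerically.

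The issue is the size of the perturbation. You propose to control $\|\frac{1}{\alpha}\sum\bphi\lt\et{}'\bphi\|$ and $\|\frac{1}{\alpha}\sum\bphi\et\et{}'\bphi\|$ essentially via the pointwise bounds $\|\et\|\le 1.2(\zz+\Delta)\sqrt{\eta r\lambda^+}$ (or via concentration around the naive expectation). Even after taking expectations termwise, this gives the cross term of order $(\zz+\Delta)^2\lambda^+$ and the $\et\et{}'$ term of the same order. In part~(1) the main term is only $\approx\lambda^-\SE^2(\P_{j-1},\P_j)$, so you would need $f(\zz+\Delta)^2\ll\Delta^2$, which fails for any $f\ge 1$. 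In part~(2) the main term is $\approx 1.01\,\zz^2\lambda^+$, and adding perturbations of order $\zz^2\lambda^+$ each pushes the total to roughly $4\zz^2\lambda^+$, which exceeds $\lthres=2\zz^2\lambda^+$. Your order-of-magnitude estimate ``$(\zz+\Delta)\sqrt{\lambda^+}\cdot\sqrt{r\lambda^+}$'' is even larger (it carries an extra $\sqrt r$) and cannot be negligible against $\lambda^-\Delta^2$.

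What you are missing is the mechanism that makes these perturbations genuinely small: the \emph{sparse, slowly-varying support} of $\et$. The paper's Lemma~\ref{lem:concm} applies Cauchy--Schwarz for sums of matrices to pull out the factor $\|\frac{1}{\alpha}\sum_t \I_{\T_t}\I_{\T_t}{}'\|\le b_0=\outfracrow^\alpha$, yielding bounds $\|\frac{1}{\alpha}\sum\bphi\lt\et{}'\|\lesssim \sqrt{b_0}\,q\,\SE(\Phat,\P)\,\lambda^+$ and $\|\frac{1}{\alpha}\sum\bphi\et\et{}'\bphi\|\lesssim \sqrt{b_0}\,q^2\lambda^+$. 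Since $\sqrt{b_0}f=0.1$ under the theorem's assumption $b_0=0.01/f^2$, this extra $\sqrt{b_0}$ is exactly what converts the $f\lambda^-(\zz+\Delta)^2$-sized perturbation into something dominated by $\lambda^-\Delta^2$ in part~(1), and what keeps the total below $1.37\zz^2\lambda^+$ in part~(2). This is the same time-averaging gain that drives the PCA-SDDN analysis (Theorem~\ref{cor:pcasddn}), and it is not optional here. A secondary point: for the lower bound on the signal term in part~(1), the paper does not use the reverse triangle inequality you wrote but instead takes the QR factorization $\bphi\P_j=\bm{E}_j\bm{R}_j$ and shows $\lambda_{\max}(\bm{R}_j\bm{R}_j{}')\ge\SE^2(\P_{j-1},\P_j)-2\zz$ via Weyl and $\|\P_{j-1}\P_{j-1}{}'-\Phat_{j-1}\Phat_{j-1}{}'\|\le 2\zz$; your variational idea can be made to work too, but be careful that $\SE(\Phat_{j-1},\P_j)\ge\SE(\P_{j-1},\P_j)-\zz$ is not a direct consequence of Lemma~\ref{lem:sumprinang}.
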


\subsection{Proof of Lemma \ref{lem:reprocspcalemone}: projected CS and subspace update in the first update interval}
We first state a simple lemma. This is proved in Appendix \ref{app:concm}.
\begin{lem}\label{lem:sumprinang}
Let $\Aa$, $\Ba$ and $\Ca$ be $r$-dimensional subspaces in $\Re^n$ such that $\SE(\Aa, \Ba) \leq \Delta_1$ and $\SE(\Ba, \Ca) \leq \Delta_2$. Then, $\SE(\Aa, \Ca) \leq \Delta_1 + \Delta_2$.
\end{lem}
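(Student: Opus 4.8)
The plan is to prove the triangle inequality for $\SE$ directly from its definition $\SE(\P_1,\P_2) = \norm{(\I - \P_1\P_1{}')\P_2}$, treating $\Aa,\Ba,\Ca$ as $n\times r$ basis matrices for the three subspaces. The whole argument hinges on one identity: resolving the identity through the \emph{middle} subspace, $\Ba\Ba{}' + (\I - \Ba\Ba{}') = \I$, and inserting it between $(\I - \Aa\Aa{}')$ and $\Ca$:
\[
(\I - \Aa\Aa{}')\Ca = (\I - \Aa\Aa{}')\Ba\Ba{}'\Ca + (\I - \Aa\Aa{}')(\I - \Ba\Ba{}')\Ca .
\]

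From here I would take operator norms and use the triangle inequality followed by submultiplicativity. For the first term, $\norm{(\I - \Aa\Aa{}')\Ba\Ba{}'\Ca} \le \norm{(\I - \Aa\Aa{}')\Ba}\,\norm{\Ba{}'\Ca} \le \SE(\Aa,\Ba)\cdot 1$, where $\norm{\Ba{}'\Ca} \le \norm{\Ba{}'}\,\norm{\Ca} = 1$ because $\Ba$ and $\Ca$ have orthonormal columns. For the second term, $\norm{(\I - \Aa\Aa{}')(\I - \Ba\Ba{}')\Ca} \le \norm{\I - \Aa\Aa{}'}\,\norm{(\I - \Ba\Ba{}')\Ca} \le 1\cdot\SE(\Ba,\Ca)$, since $\I - \Aa\Aa{}'$ is an orthogonal projector and hence has norm $\le 1$. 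Summing the two bounds gives $\SE(\Aa,\Ca) \le \SE(\Aa,\Ba) + \SE(\Ba,\Ca) \le \Delta_1 + \Delta_2$, as claimed.

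I do not expect a genuine obstacle; the only point needing a moment's thought is to split the identity at $\Ba$ (so that the ``cross'' factors $\norm{\Ba{}'\Ca}$ and $\norm{\I - \Aa\Aa{}'}$ are both trivially $\le 1$) rather than attempting to bound $\norm{\Ba{}'\Ca}$ in terms of $\SE(\Ba,\Ca)$. Note that the equal-dimension hypothesis is not actually used by this computation — it is assumed in the statement only because it is what makes $\SE$ symmetric elsewhere in the paper. An alternative proof passes to the orthogonal projectors $\bm{P}_i := \bm{Q}_i\bm{Q}_i{}'$, uses the fact that $\SE(\bm{Q}_i,\bm{Q}_j) = \norm{\bm{P}_i - \bm{P}_j}$ for equidimensional subspaces, and applies the ordinary triangle inequality $\norm{\bm{P}_1 - \bm{P}_3} \le \norm{\bm{P}_1 - \bm{P}_2} + \norm{\bm{P}_2 - \bm{P}_3}$; this is shorter but relies on the projector characterization of the gap, so I would favor the self-contained computation above.
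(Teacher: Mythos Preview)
Your proof is correct and is essentially identical to the paper's own argument: the paper also inserts $\I = (\I - \Ba\Ba{}') + \Ba\Ba{}'$ between $(\I - \Aa\Aa{}')$ and $\Ca$, applies the triangle inequality, and bounds the two terms exactly as you do using $\norm{\I - \Aa\Aa{}'} \le 1$ and $\norm{\Ba{}'\Ca} \le 1$.
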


\begin{proof}[Proof of Lemma \ref{lem:reprocspcalemone}]
Recall from Definition \ref{defs} that $s:= \outfraccol \cdot n$ and $\phi^+ = 1.2$. Recall also that, for simplicity, we are considering the $\vt=0$ case.

\noindent {\em Proof of item 1. }
First consider $j>0$.
We have conditioned on the event $\Gamma_{j,0}:= \Gamma_{j-1,K}$. This implies that $\SE(\Phat_{j-1}, \P_{j-1}) \leq \zz$.

We consider the interval $t \in [\that_j, \that_j + \alpha)$. For this interval, $\Phat_{(t-1)} = \Phat_{j-1}$ and thus $\bpsi = \I - \Phat_{j-1} \Phat_{j-1}{}'$ (from Algorithm).
For the sparse recovery step, we first need to bound the $2s$-RIC of $\bpsi$. To do this, we first obtain bound on $\max_{|\T| \leq 2s} \|\I_{\T}{}' \Phat_{j-1}\|$ as follows. Consider any set $\T$ such that $|\T| \leq 2s$. Then,
\begin{align*}
\norm{\I_{\T}{}' \Phat_{j-1}} &\leq \norm{\I_{\T}{}'(\I - \P_{j-1} \P_{j-1}{}') \Phat_{j-1}} + \norm{\I_{\T}{}' \P_{j-1} \P_{j-1}{}' \Phat_{j-1}} \\
&\leq \SE(\P_{j-1}, \Phat_{j-1}) + \norm{\I_{\T}{}' \P_{j-1}} = \SE(\Phat_{j-1}, \P_{j-1}) + \norm{\I_{\T}{}' \P_{j-1}}
\end{align*}
Using Lemma \ref{kappadelta}, and the bound on $\outfraccol$ from Theorem \ref{thm1},
\begin{align}\label{eq:dense}
\max_{|\T| \leq 2s} \|\I_{\T}{}' \P_{j-1}\|^2 \leq 2s \max_i \|\I_i{}'\P_{j-1}\|^2 \leq \frac{2s\mu r}{n} \leq 0.01
\end{align}
Thus, using $\SE(\Phat_{j-1}, \P_{j-1}) \leq \zz$,
\begin{align*}
\max_{|\T| \leq 2s} \norm{\I_{\T}{}' \Phat_{j-1}} \leq \zz + \max_{|\T| \leq 2s}\norm{\I_{\T}{}' \P_{j-1}} \leq \zz + 0.1
\end{align*}

Finally, using Lemma \ref{kappadelta}, $\delta_{2s}(\bpsi) \leq 0.11^2 < 0.15$. Hence
\begin{align*}
\norm{\left(\bpsi_{\Tt}{}'\bpsi_{\Tt}\right)^{-1}} \leq \frac{1}{1 - \delta_s(\bpsi)} \leq \frac{1}{1 - \delta_{2s}(\bpsi)} \leq \frac{1}{1- 0.15} < 1.2= \phi^+.
\end{align*}
When $j=0$, there are some minor changes. From the initialization assumption, we have $\SE(\Phat_0, \P_0) \leq 0.25$.  Thus, $ \max_{|\T| \leq 2s} \norm{\I_{\T}{}' \Phat_{0}} \leq 0.25 + 0.1 = 0.35$. Thus, using Lemma \ref{kappadelta}, $\delta_{2s}(\bpsi_0) \leq 0.35^2 < 0.15$. The rest of the proof given below is the same for $j=0$ and $j>0$.

Next we bound norm of $\b_t:=\bpsi \lt$. This and the RIC bound will then be used to bound $\|\xhat_{t,cs}- \xt\|$.
\begin{align*}
\norm{\b_t} = \norm{\bpsi \lt} &= \norm{(\I - \Phat_{j-1} \Phat_{j-1}{}') \P_j \at} \leq \SE(\Phat_{j-1}, \P_j) \norm{\at} \\
&\overset{(a)}{\leq} (\zz + \SE(\P_{j-1}, \P_j))\sqrt{\eta r \lambda^+} := b_b 
\end{align*}
where $(a)$ follows from Lemma \ref{lem:sumprinang} with $\Aa = \Phat_{j-1}$, $\Ba = \P_{j-1}$ and $\Ca = \P_j$. Under the assumptions of Theorem \ref{thm1}, $b_b < x_{\min} / 15$. This is why we have set $\xi = x_{\min}/15$ in the Algorithm. Using these facts, and $\delta_{2s} (\bpsi) \leq 0.15$, the CS guarantee from \cite[Theorem 1.3]{candes_rip} implies that
\begin{align*}
\norm{\xhat_{t,cs} - \xt} &\leq  7 \xi = 7x_{\min}/15 < x_{\min}/2
\end{align*}
Consider support recovery. From above,
\begin{align*}
| (\shatcs - \st)_i | \leq \norm{\shatcs - \st} \leq 7x_{\min}/15 < x_{\min}/2
\end{align*}
The Algorithm sets $\omega_{supp} = \smin/2$. Consider an index $i \in \Tt$. Since $|(\st)_i| \geq \smin$,
\begin{align*}
\smin - |(\shatcs)_i| \le  |(\st)_i| - |(\shatcs )_i| \ \le | (\st - \shatcs )_i | < \frac{\smin}{2}
\end{align*}
Thus, $|(\shatcs)_i| > \frac{\smin}{2} = \omega_{supp}$ which means $i \in \Thatt$. Hence $\Tt \subseteq \Thatt$. Next, consider any $j \notin \Tt$. Then, $(\st)_j = 0$ and so
\begin{align*}
|(\shatcs)_j| = |(\shatcs)_j)| - |(\st)_j| &\leq |(\shatcs)_j -(\st)_j| \leq b_{b} < \frac{\smin}{2}
\end{align*}
which implies $j \notin \Thatt$ and $\Thatt \subseteq \Tt$ implying that $\Thatt = \Tt$.

Finally, we get an expression for $\et$ and bound it. With $\That_t = \Tt$ and since $\Tt$ is the support of $\xt$, $\xt = \I_{\Tt} \I_{\Tt}{}' \xt$, and so
\begin{align*}
\shatt = \bm{I}_{\Tt}\left(\bpsi_{\Tt}{}'\bpsi_{\Tt}\right)^{-1}\bpsi_{\Tt}{}'(\bpsi \lt + \bpsi \st) = \bm{I}_{\Tt}\left(\bpsi_{\Tt}{}'\bpsi_{\Tt}\right)^{-1} \I_{\Tt}{}' \bpsi\lt + \st
\end{align*}
since $\bpsi_{\Tt}{}' \bpsi = \I_{\Tt}' \bpsi' \bpsi = \I_{\Tt}{}' \bpsi$. Thus $\et = \shatt-\st$ satisfies
\begin{align*}
\et &= \bm{I}_{\Tt}\left(\bpsi_{\Tt}{}'\bpsi_{\Tt}\right)^{-1} \I_{\Tt}{}' \bpsi\lt  \text{ and} \\
\norm{\et} &\leq \norm{\left(\bpsi_{\Tt}{}'\bpsi_{\Tt}\right)^{-1}} \norm{\itt{}'\bpsi \lt} \le \phi^+  \norm{\itt{}'\bpsi \lt} \leq 1.2b_b
\end{align*}

\renewcommand{\bz}{b}
\emph{Proof of Item 2}:
We will use the following result from  \cite[Remark 4.18]{pca_dd}.
\begin{theorem}[PCA in sparse data-dependent noise (PCA-SDDN)] \label{cor:pcasddn}
We are given data vectors $\yt := \lt + \wt$ with $\wt= \I_{\T_t} \M_{s,t} \lt$, $t=1,2,\dots,\alpha$, where $\T_t$ is the support set of $\wt$, and $\lt= \P \at$ with $\at$ satisfying the assumptions of Theorem \ref{thm1}.
Pick an $\varepsilon_{\text{SE}}>0$.
Assume that $\max_t \|\M_{s,t} \P\|_2 \le q < 1$, the fraction of non-zeroes in any row of the noise matrix $[\w_1, \w_2, \dots, \w_\alpha]$ is bounded by $\bz$, and $3 \sqrt{\bz} q f \le 0.9 \varepsilon_{\text{SE}}/(1+\varepsilon_{\text{SE}})$. Define
\begin{align*}
\alpha_0 := C\eta \frac{q^2f^2}{\varepsilon_{\text{SE}}^2} r \log n
\end{align*}
For an $\alpha \ge \alpha_0$, let $\Phat$ be the matrix of top $r$ eigenvectors of $\D:=\frac{1}{\alpha} \sum_{t=1}^\alpha \yt \yt'$. With probability at least $1- 10n^{-10}$, $\SE(\Phat,\P) \le \varepsilon_{\text{SE}}$.
\end{theorem}

Since $\lhat_t = \lt - \et$ with $\et$ satisfying \eqref{eq:etdef}, updating $\Phat_{(t)}$ from the $\lhatt$'s is a problem of PCA in sparse data-dependent noise (SDDN), $\et$. To analyze this, we use Theorem \ref{cor:pcasddn} given above. Recall from Item 1 of this lemma that, for $t \in [\that_j, \that_j + \alpha)$, $\et$ satisfies \eqref{eq:etdef}. Recall from the Algorithm that we compute the first estimate of the $j$-th subspace, $\Phat_{j, 1}$, as the top $r$ eigenvectors of $\frac{1}{\alpha}\sum_{t = \that_j}^{\that_j + \alpha - 1} \lhat_t \lhat_t{}'$. In the notation of  Theorem \ref{cor:pcasddn}, $\yt \equiv \lhat_t$, $\wt \equiv \et$, $\lt \equiv \lt$, $\M_{s, t} = -\left( \bpsi_{\T_t}{}' \bpsi_{\T_t}\right)^{-1} \bpsi_{\T_t}{}'$, $\Phat = \Phat_{j,1}$, $\P = \P_j$, and so $\norm{\M_{s, t} \P} = \| \left( \bpsi_{\T_t}{}' \bpsi_{\T_t}\right)^{-1} \bpsi_{\T_t}{}' \P_j\| \leq 1.2 (\zz + \SE(\P_{j-1}, \P_j)) := q_0$. Also, $\bz \equiv b_0$ which is the upper bound on $\outfracrow^\alpha$ (see Definition \ref{defs}).
Applying Theorem \ref{cor:pcasddn} with $q \equiv q_0$, $\bz \equiv b_0$ and using $\varepsilon_{\text{SE}} = q_0/4$, observe that we require
\begin{align*}
\sqrt{b_0} q_0 f \leq \frac{0.9 (q_0/4)}{1 + (q_0/4)}.
\end{align*}
Since $q_0 =  1.2 (\zz + \SE(\P_{j-1}, \P_j)) < 1.2(0.01+0.8) <  0.98$ (follows from the bounds on $\zz$ and on $\Delta$ given in Theorem \ref{thm1}), this holds if $\sqrt{b_0}f \leq 0.18$. This is true since we have assumed $b_0 = 0.01/f^2$ (see Definition \ref{defs}). Thus, from Theorem \ref{cor:pcasddn}, with probability at least $1 - 10n^{-10}$, $\SE(\Phat_{j, 1}, \P_j) \leq q_0/4$. Thus, conditioned on $\Gamma_{j,0}$, with this probability, $\Gamma_{j,1}$ holds. 
\end{proof}

\renewcommand{\old}{\mathrm{old}}
\begin{remark}[Clarification about conditioning]
In the proof above we have used Theorem \ref{cor:pcasddn} which assumes that, for $t \in \J^\alpha$, the $\at$'s are mutually independent and the matrices $\M_{s,t}$ are either non-random or are independent of the $\at$'s for this interval. When we apply the theorem for our proof, we are conditioning on $\Gamma_{j,0}$. This does not cause any problem since the event $\Gamma_{j,0}$ is a function of the random variable $\y_\old:=\{\y_1,\y_2, \dots, \y_{\that_j-1}\}$ where as our summation is over $\J^\alpha:=[\that_j, \that_j+\alpha)$. Also, by Theorem \ref{thm1}, $\at$'s are independent of the outlier supports $\T_t$.

To be precise, we are applying Theorem \ref{cor:pcasddn} conditioned on $\y_\old$, for any $\y_\old \in \Gamma_{j,0}$. 
Even conditioned on $\y_\old$, clearly, the matrices $\M_{s,t}$ used above are independent of the $\at$'s for this interval. Also, even conditioned on $\y_\old$, the $\at$'s for $t \in [\that_j, \that_j+\alpha)$ are clearly mutually independent. Thus, the theorem can be applied. Its conclusion then tells us that, for any $\y_\old \in \Gamma_{j,0}$, conditioned on $\y_\old$, with probability at least $1 - 10n^{-10}$, $\SE(\Phat_{j, 1}, \P_j) \leq q_0/4$. Since this holds with the same probability for all $\y_\old \in  \Gamma_{j,0}$, it also holds with the same probability when we condition on $\Gamma_{j,0}$. Thus, conditioned on $\Gamma_{j,0}$, with this probability, $\Gamma_{j,1}$ holds.

An analogous argument will also apply to the following proofs.
\end{remark}

\subsection{Proof of Lemma \ref{lem:reprocspcalemk}: lemma for projected CS and subspace update in $k$-th update interval}
\begin{proof}[Proof of Lemma \ref{lem:reprocspcalemk}]
We first present the proof for the $k = 2$ case and then generalize it for an arbitrary $k$.

\textbf{(A) $k=2$: } We have conditioned on $\Gamma_{j,1}$. This implies that $\SE(\Phat_{j,1}, \P_j) \leq q_0/4$.

\emph{Proof of Item 1}:
We consider the interval $t \in [\that_j + \alpha,\that_j + 2\alpha)$. For this interval, $\Phat_{(t-1)} = \Phat_{j,1}$ and thus $\bpsi = \I - \Phat_{j,1} \Phat_{j,1}{}'$ (from Algorithm).
For the sparse recovery step, we need to bound the $2s$-RIC of $\bpsi$. 
Consider any set $\T$ such that $|\T| \leq 2s$. We have
\begin{align*}
\norm{\I_{\T}{}' \Phat_{j, 1}} &\leq \norm{\I_{\T}{}'(\I - \P_{j} \P_{j }{}') \Phat_{j, 1}} + \norm{\I_{\T}{}' \P_{j} \P_{j}{}' \Phat_{j, 1}} \\
&\leq \SE(\P_{j}, \Phat_{j, 1}) + \norm{\I_{\T}{}' \P_{j}} = \SE(\Phat_{j, 1}, \P_{j}) + \norm{\I_{\T}{}' \P_{j}}
\end{align*}
The equality holds since $\SE$ is symmetric for subspaces of the same dimension.
Using $\SE(\Phat_{j,1}, \P_j) \leq q_0/4$, \eqref{eq:dense},
\begin{align*}
\max_{|\T| \leq 2s} \norm{\I_{\T}{}' \Phat_{j, 1}} \leq q_0/4 + \max_{|\T| \leq 2s}\norm{\I_{\T}{}' \P_{j}} \leq q_0/4 + 0.1.
\end{align*}
Finally, from using the assumptions of Theorem \ref{thm1}, $q_0 \leq 0.96$. Using this and Lemma \ref{kappadelta},
\[
\delta_{2s}(\bpsi_j) = \max_{|\T| \leq 2s} \norm{\I_{\T}{}' \Phat_{j, 1}}^2 \leq 0.35^2 < 0.15.
\]
From, this
\begin{align*}
\norm{\left(\bpsi_{\Tt}{}'\bpsi_{\Tt}\right)^{-1}} \leq \frac{1}{1 - \delta_s(\bpsi)} \leq \frac{1}{1 - \delta_{2s}(\bpsi)} \leq \frac{1}{1- 0.15} < 1.2= \phi^+.
\end{align*}
Consider $\norm{\b_t}$.
\begin{align*}
\norm{\b_t} = \norm{\bpsi \lt} &= \norm{(\I - \Phat_{j, 1} \Phat_{j, 1}{}') \P_j \at} \leq \SE(\Phat_{j, 1}, \P_j) \norm{\at} \leq (q_0/ 4) \sqrt{\eta r \lambda^+} \\
&\overset{(a)}{\leq} 0.3(\zz + \SE(\P_{j-1}, \P_j)) \sqrt{\eta r \lambda^+} := 0.3b_b
\end{align*}
We have $0.3b_b < b_b < x_{\min} / 15$ as in the proof of Lemma \ref{lem:reprocspcalemone}. The rest of the proof is the same too. 
Notice here that, we could have loosened the required lower bound on $\xmint$ for this interval.

\emph{Proof of Item 2}: Again, updating $\Phat_{(t)}$ using $\lhatt$'s is a problem of PCA in sparse data-dependent noise (SDDN), $\et$. We use the result of Theorem \ref{cor:pcasddn}. Recall from Item 1 of this lemma that, for $t \in [\that_j + \alpha, \that_j + 2\alpha)$, $\et$ satisfies \eqref{eq:etdef}. We compute $\Phat_{j,2}$ as the top $r$ eigenvectors of $\frac{1}{\alpha}\sum_{t = \that_j + \alpha}^{\that_j + 2\alpha - 1} \lhat_t \lhat_t{}'$. In notation of Theorem  \ref{cor:pcasddn}, $\yt \equiv \lhat_t$, $\wt \equiv \et$, $\lt \equiv \lt$, $\P \equiv \P_j$, $\Phat \equiv \P_{j,2}$, and $\M_{s, t} = -\left( \bpsi_{\T_t}{}' \bpsi_{\T_t}\right)^{-1} \bpsi_{\T_t}{}'$. So $\norm{\M_{s, t} \P_j} = \| \left( \bpsi_{\T_t}{}' \bpsi_{\T_t}\right)^{-1} \bpsi_{\T_t}{}' \P_j\| \leq (\phi^+/4) q_0 := q_1$.
Applying  Theorem \ref{cor:pcasddn} with $q \equiv q_1$, $\bz \equiv b_0$ ($b_0$ bounds $\outfracrow^\alpha$), and setting $\varepsilon_{\text{SE}} = q_1/4$, observe that we require
\begin{align*}
\sqrt{b_0} q_1 f \leq \frac{0.9 (q_1/4)}{1 + (q_1/4)}
\end{align*}
which holds if $\sqrt{b_0}f \leq 0.18$. This is ensured since $b_0 = 0.01/f^2$.
Thus, from Theorem \ref{cor:pcasddn}, with probability at least $1 - 10n^{-10}$, $\SE(\Phat_{j, 2}, \P_j) \leq (q_1/4) = 0.25 \cdot 0.3 q_0$. Thus, with this probability, conditioned on $\Gamma_{j, 1}$, $\Gamma_{j,2}$ holds.

\textbf{(B) General $k$: } We have conditioned on $\Gamma_{j,k-1}$. This implies that $\SE(\Phat_{j,k-1}, \P_j) \leq q_{k-1}/4$.

\emph{Proof of Item 1}: Consider the interval $[\that_j + (k-1)\alpha, \that_j + k\alpha)$. In this interval, $\Phat_{(t-1)} = \Phat_{j,k-1}$ and thus $\bpsi = \I - \Phat_{j,k-1} \Phat_{j,k-1}{}'$.
Using the same idea as for the $k=2$ case, we have that for the $k$-th interval, $q_{k-1} = (\phi^+/4)^{k-1} q_0$. Pick $\varepsilon_{\text{SE}} = (q_{k-1}/4)$. From this it is easy to see that
\begin{align*}
\delta_{2s}(\bpsi) &\leq \left(\max_{|\T| \leq 2s} \norm{\I_\T{}' \Phat_{j, k-1}}\right)^2 \leq (\SE(\Phat_{j,k-1}, \P_j) + \max_{|\T| \leq 2s} \norm{\I_\T{}'\P_j})^2\\
& \overset{(a)}{\leq} (\SE(\Phat_{j, k-1}, \P_j) + 0.1)^2 \leq ((\phi^+/4)^{k-1} (\zz + \SE(\P_{j-1}, \P_j)+ 0.1)^2 < 0.15
\end{align*}
where $(a)$ follows from \eqref{eq:dense}. Using the approach Lemma \ref{lem:reprocspcalemone},
\begin{align*}
\norm{\bpsi \lt} &\leq \SE(\Phat_{j, k-1}, \P_j) \norm{\at} \leq (\phi^+/4)^{k-1} (\zz + \SE(\P_{j-1}, \P_j)) \sqrt{\eta r \lambda^+} \\
&\overset{(a)}{\leq} (\phi^+/4)^{k-1} (\zz  + \semax) \sqrt{\eta r  \lambda^+} := (\phi^+/4)^{k-1}b_b
\end{align*}

\emph{Proof of Item 2}: Again, updating $\Phat_{(t)}$ from $\lhatt$'s is a problem of PCA in sparse data-dependent noise given in Theorem \ref{cor:pcasddn}. From {\em Item 1} of this lemma that, for $t \in [\that_j + (k-1) \alpha, \that_j + k\alpha]$,  $\et$ satisfies \eqref{eq:etdef}. We update the subspace, $\Phat_{j, k}$ as the top $r$ eigenvectors of $\frac{1}{\alpha}\sum_{t = \that_j + (k-1) \alpha}^{\that_j + k\alpha - 1} \lhat_t \lhat_t{}'$. In the setting above $\yt \equiv \lhat_t$, $\wt \equiv \et$, $\lt \equiv \lt$, and $\M_{s, t} = -\left( \bpsi_{\T_t}{}' \bpsi_{\T_t}\right)^{-1} \bpsi_{\T_t}{}'$, and so $\norm{\M_{s, t} \P_j} = \| \left( \bpsi_{\T_t}{}' \bpsi_{\T_t}\right)^{-1} \bpsi_{\T_t}{}' \P_j\| \leq (\phi^+/4)^{k-1} q_0 := q_{k-1}$. Applying  Theorem \ref{cor:pcasddn} with  $q \equiv q_{k-1}$, $\bz \equiv b_0$ ($b_0$ bounds $\outfracrow^\alpha$), and setting $\varepsilon_{\text{SE}} = q_{k-1}/4$, we require%
\begin{align*}
\sqrt{b_0} q_{k - 1} f \leq \frac{0.9 (q_{k-1}/4)}{1 + (q_{k-1}/4)}
\end{align*}
which holds if $\sqrt{b_0}f \leq 0.12$. This is true by our assumption. Thus, from Theorem \ref{cor:pcasddn}, with probability at least $1 - 10n^{-10}$, $\SE(\Phat_{j, k}, \P_j) \leq (\phi^+/4)^{k-1} q_1$. Thus, with this probability, conditioned on $\Gamma_{j, k-1}$, $\Gamma_{j, k}$ holds.
\end{proof}

\subsection{Proof of Lemma \ref{lem:sschangedet}: subspace change detection lemma}

\begin{proof}[Proof of Lemma \ref{lem:sschangedet}]
The proof uses the following lemma. It is proved in Appendix \ref{app:concm}. The proof uses Cauchy-Schwartz for sums of matrices, followed by either matrix Bernstein \cite{tail_bound} or Vershynin's sub-Gaussian result \cite{vershynin}.%

\begin{lem}[Concentration Bounds]\label{lem:concm}
Assume that the assumptions of Theorem \ref{thm1} hold. For this lemma assume that $\lt = \P \at$, $\bphi := \I -\Phat \Phat{}'$, $\et = \mtt \mot \lt$, with $\|\frac{1}{\alpha}\sum_{t \in \J^\alpha} \mtt \mtt{}'\| \leq b_0$ and $\norm{\mot \P} \leq q$.
Assume that event $\ezero$ holds.
Then,
\begin{align*}
&\Pr\left(\norm{ \frac{1}{\alpha}\sum_t \at \at{}' - \Lam} \leq \epsilon_0 \lambda^-  \right) \geq 1 - 10n^{-10}\\
&\Pr\left(\norm{\frac{1}{\alpha}\sum_t \bphi \lt \et{}' \bphi} \leq (1 + \epsilon_1)\SE(\Phat, \P) \sqrt{b_0}q  \lambda^+  \right) \geq 1 - 10n^{-10} \\
&\Pr\left( \norm{\frac{1}{\alpha}\sum_t \bphi \et \et{}' \bphi} \leq (1 + \epsilon_2)\sqrt{b_0}q^2 \lambda^+  \right) \geq 1 - 10n^{-10} \\
&\Pr\left( \norm{\frac{1}{\alpha}\sum_t \bphi \lt \vt{}' \bphi} \leq \epsilon_{l,v} \lambda^-  \right) \geq 1 - 2n^{-10} \\
&\Pr\left( \norm{\frac{1}{\alpha}\sum_t \bphi \vt \vt{}' \bphi} \leq (\zz^2 + \epsilon_{v,v}) \lambda^- \right) \geq 1 - 2n^{-10}
\end{align*}
\end{lem}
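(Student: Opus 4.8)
\emph{Proof plan.} Each of the five bounds is obtained by writing the relevant random sum as its conditional expectation plus a mean-zero deviation, controlling the expectation by deterministic linear algebra and the deviation by a matrix concentration inequality. Exactly as in the ``clarification about conditioning'' remark above, I would condition on the data preceding the interval $\J^\alpha$ and on the supports $\T_t$ for $t\in\J^\alpha$, so that within the event $\ezero$ the matrices $\Phat$ (hence $\bphi$), $\mot$ and $\mtt$ are held fixed, while by the assumptions of Theorem~\ref{thm1} the $\at$'s (resp.\ $\vt$'s) on $\J^\alpha$ remain mutually independent, zero mean, with covariance $\Lam$ (resp.\ $\E[\vt\vt{}']$), and independent of the $\T_t$'s. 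Since the bounds proved under this conditioning are uniform over it, they transfer to the unconditioned statement.

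\emph{Expectations.} Substituting $\lt=\P\at$ and $\et=\mtt\mot\P\at$, I would use two deterministic facts: (i) $\mot\P\Lam\P'\mot{}'\preceq q^2\lambda^+\I$, from $\|\mot\P\|\le q$; and (ii) $\frac1\alpha\sum_{t\in\J^\alpha}\mtt\mtt{}'\preceq b_0\I$. Line~1 is immediate since $\E[\frac1\alpha\sum_t\at\at{}']=\Lam$. For line~3, the conditional mean $\frac1\alpha\sum_t\bphi\mtt(\mot\P\Lam\P'\mot{}')\mtt{}'\bphi$ has norm at most $q^2\lambda^+\big\|\frac1\alpha\sum_t\mtt\mtt{}'\big\|\le q^2\lambda^+b_0\le q^2\lambda^+\sqrt{b_0}$, using (i),(ii) and $\|\bphi\|\le1$. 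Line~2 is a cross term, so I would invoke Cauchy--Schwarz for sums of matrices, $\|\sum_tA_tB_t{}'\|\le\sqrt{\|\sum_tA_tA_t{}'\|\,\|\sum_tB_tB_t{}'\|}$, with $A_t=\bphi\P\Lam^{1/2}$ and $B_t=\bphi\mtt\mot\P\Lam^{1/2}$; then $\|\sum_tA_tA_t{}'\|\le\alpha\,\SE(\Phat,\P)^2\lambda^+$ and $\|\sum_tB_tB_t{}'\|\le\alpha q^2b_0\lambda^+$ by (i),(ii), yielding the factor $\SE(\Phat,\P)\sqrt{b_0}\,q\lambda^+$ --- this is exactly where the square root of $b_0$ enters. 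Line~4 has zero conditional mean since $\lt\perp\vt$ and both are zero mean; line~5 has conditional mean $\frac1\alpha\sum_t\bphi\,\E[\vt\vt{}']\,\bphi$, of norm at most $\|\E[\vt\vt{}']\|\le\zz^2\lambda^-$ by assumption~2 of Theorem~\ref{thm1}.

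\emph{Deviations.} For the sums built only from the $\at$'s (lines~1--3) and the $\at$--$\vt$ cross (line~4) I would apply matrix Bernstein \cite{tail_bound}, symmetrizing/dilating the rectangular summands in the usual way; for the $\vt\vt{}'$ sum (line~5), Vershynin's sub-Gaussian matrix deviation bound \cite{vershynin}, using that $\vt$ has effective dimension at most $r$. The only real computation is bounding the Bernstein parameters. Although $\|\at\at{}'\|=\|\at\|^2$ can be of order $r\lambda^+$ (element-wise boundedness gives $\|\at\|^2\le\eta r\lambda^+$, and likewise $\|\vt\|^2\le cr\zz^2\lambda^-$), the variance proxy carries only \emph{one} factor of $r$: e.g.\ $\E[(\at\at{}'-\Lam)^2]\preceq\E[\|\at\|^2\at\at{}']\preceq\eta r\lambda^+\Lam$, and the analogous quantities for lines~2--4 additionally collect $q^2$, $\SE(\Phat,\P)^2$, and $b_0$ factors exactly matching the respective bias. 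Feeding these into Bernstein and using $\alpha=Cf^2r\log n$ with $C$ a sufficiently large numerical constant (depending on $\eta$ and on the target accuracies $\epsilon_0,\epsilon_1,\epsilon_2,\epsilon_{l,v},\epsilon_{v,v}$), each deviation falls below the stated $\epsilon$-term with probability at least $1-10n^{-10}$ (or $1-2n^{-10}$ when a single application of the sub-Gaussian bound suffices); a union bound over the pieces finishes each line.

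\emph{Main obstacle.} The delicate part is the Bernstein bookkeeping: one must check that \emph{every} $r$ that appears is linear, so that $\alpha=\Theta(f^2r\log n)$ genuinely suffices for all five lines simultaneously. This hinges on always replacing $\at\at{}'$ (and $\et\et{}'$, $\lt\vt{}'$) by its expectation --- of norm $\lambda^+$, not $r\lambda^+$ --- inside second-moment computations, and on correctly carrying the congruence $X\mapsto\bphi\mtt X\mtt{}'\bphi$ through the positive-semidefinite order so that the $b_0$ and $q^2$ factors propagate. A secondary subtlety is making the conditioning rigorous, so that the $\at$'s on $\J^\alpha$ stay mutually independent with unchanged law and $\mot$ (a function of $\T_t$ and $\bpsi$) is measurable with respect to the conditioning.
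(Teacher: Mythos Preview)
Your proposal is correct and follows essentially the same route as the paper: split each sum into its conditional expectation plus a fluctuation, control the expectation via Cauchy--Schwarz for matrix sums (Theorem~\ref{CSmat}) together with the PSD congruence $0\preceq X_t\preceq q^2\lambda^+\I\Rightarrow\mtt X_t\mtt{}'\preceq q^2\lambda^+\mtt\mtt{}'$, and control the fluctuation by a matrix concentration inequality with the Bernstein parameters you describe. A few cosmetic differences are worth flagging. For Item~1 the paper applies Vershynin's sub-Gaussian rows bound (Theorem~\ref{thm:versh}) rather than Bernstein; your Bernstein calculation also works and gives the same $\alpha=\Theta(f^2r\log n)$ requirement. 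For Item~3 the paper again uses Cauchy--Schwarz (Theorem~\ref{CSmat}) to extract the $\sqrt{b_0}$ factor from the expectation, whereas your direct PSD-congruence bound already yields $b_0q^2\lambda^+\le\sqrt{b_0}q^2\lambda^+$, which is slightly sharper but lands in the same place. For Items~4 and~5 the paper simply cites \cite[Lemma~7.19]{pca_dd}; your sketch (Bernstein for the cross term, a sub-Gaussian or Bernstein bound for the $\vt\vt{}'$ sum using $\|\vt\|^2\le cr\zz^2\lambda^-$) is exactly what that reference does.
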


\emph{Proof of Item (a)}: 
First from Corollary \ref{cor:etbnds}, note that for $t \in [t_j, \that_j]$, the error $\et$ satisfies \eqref{eq:etdef}.
We have
\begin{align}\label{eq:lmaxzero}
&\lambda_{\max}\left( \frac{1}{\alpha} \sum_{t \in \J^{\alpha}} \bphi [\P_j \at \at{}' \P_j{}'  + \et \et{}' + \lt \et{}' + \et \lt{}']\bphi  \right) \nonumber \\
&\overset{(a)}{\geq} \lambda_{\max} \left( \frac{1}{\alpha} \sum_{t \in \J^{\alpha}} \bphi \P_j \at \at{}' \P_j{}' \bphi \right) + \lambda_{\min}\left(\frac{1}{\alpha}\sum_{t \in \J^{\alpha}}\bphi[ \et \et{}' + \lt \et{}' + \et \lt{}'] \bphi \right) \nonumber \\
&\geq \lambda_{\max} \left( \frac{1}{\alpha} \sum_{t \in \J^{\alpha}} \bphi \P_j \at \at{}' \P_j{}' \bphi \right) - \norm{\frac{1}{\alpha} \sum_{t \in \J^{\alpha}} \bphi \et \et{}' \bphi } - 2\norm{\frac{1}{\alpha} \sum_{t \in \J^{\alpha}} \bphi \lt \et{}' \bphi} \nonumber \\
&:= \lambda_{\max}(\bm{T})  - \norm{\frac{1}{\alpha} \sum_{t \in \J^{\alpha}} \bphi \et \et{}' \bphi } - 2\norm{\frac{1}{\alpha} \sum_{t \in \J^{\alpha}} \bphi \lt \et{}' \bphi}
\end{align}
where $(a)$ follows from Weyl's Inequality.
Now we bound the second and third terms by invoking Lemma \ref{lem:concm} with $\ezero := \{\SE(\Phat_{j-1}, \P_{j-1}) \leq \zz\}$, $\Phat \equiv \Phat_{j-1}$, $\P \equiv \P_j$,
$\mtt \equiv \itt$ and $\mot \equiv \left(\bpsi_{\T_t} {}' \bpsi_{\T_t} \right)^{-1} \bpsi_{\T_t}{}'$, where $\bpsi = \I - \Phat_{j-1} \Phat_{j-1}{}'$. Thus,  $b_0 \equiv b_0$, $q \equiv q_0$. Thus, with probability at least $1 - 4n^{-10}$,
\begin{align}\label{eq:lmaxone}
\norm{\frac{1}{\alpha} \sum_{t \in \J^{\alpha}} \bphi \et \et{}' \bphi } + 2\norm{\frac{1}{\alpha} \sum_{t \in \J^{\alpha}} \bphi \lt \et{}' \bphi} &\leq \sqrt{b_0} q_0^2 \lambda^+(1 + \epsilon_2) + 2 \sqrt{b_0} q_0(\SE(\P_{j-1}, \P_j) + \zz) \lambda^+(1 + \epsilon_1)
\end{align}
The above equation uses the fact that $\SE(\Phat_{j-1}, \P_j) \leq \zz + \SE(\P_{j-1}, \P_j)$ which is a direct consequence of using Lemma \ref{lem:sumprinang}. We bound the first term of \eqref{eq:lmaxzero} as follows. Let $\bphi \P_j \overset{QR}{=} \bm{E}_j \bm{R}_j$ be its reduced QR decomposition. Thus $\bm{E}_j$ is an $n \times r$ matrix with orthonormal columns and $\bm{R}_j$ is an $r \times r$ upper triangular matrix. Let
\begin{align*}
\bm{A} := \bm{R}_j\left( \frac{1}{\alpha} \sum_{t \in \J^{\alpha}} \at \at{}' \right) \bm{R}_j{}'.
\end{align*}
Observe that $\bm{T}$ can also be written as
\begin{align}\label{eq:decomp_T}
\bm{T} = \begin{bmatrix}
\bm{E}_j & \bm{E}_{j, \perp}
\end{bmatrix}
\diagmat{\bm{A}}{\bm{0}}
\begin{bmatrix}
\bm{E}_j{}' \\ \bm{E}_{j, \perp}{}'
\end{bmatrix}
\end{align}
and thus $\lambda_{\max}(\bm{A}) = \lambda_{\max}(\bm{T})$. We work with $\lambda_{\max}(\bm{A})$ in the sequel. We will use the following simple claim.
\begin{claim}\label{claim:psd}
If $\bm{X} \succeq 0$ (i.e., $\bm{X}$ is a p.s.d matrix), where $\bm{X} \in \Re^{r \times r}$, then $\bm{R} \bm{X} \bm{R}{}' \succeq 0$ for all $\bm{R} \in \Re^{r \times r}$.
\end{claim}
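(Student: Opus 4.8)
\textbf{Proof proposal for Claim \ref{claim:psd}.}
The plan is to verify the defining quadratic-form inequality directly. Recall that a symmetric matrix $\bm{Y} \in \Re^{r\times r}$ is positive semidefinite iff $\bm{z}'\bm{Y}\bm{z} \ge 0$ for every $\bm{z} \in \Re^r$. So first I would fix an arbitrary $\bm{z} \in \Re^r$ and compute
\[
\bm{z}'\,(\bm{R}\bm{X}\bm{R}')\,\bm{z} = (\bm{R}'\bm{z})'\,\bm{X}\,(\bm{R}'\bm{z}).
\]
Setting $\bm{w} := \bm{R}'\bm{z} \in \Re^r$, the right-hand side equals $\bm{w}'\bm{X}\bm{w}$, which is nonnegative by the hypothesis $\bm{X}\succeq 0$. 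Since $\bm{z}$ was arbitrary, this shows $\bm{R}\bm{X}\bm{R}' \succeq 0$. I should also note in passing that $\bm{R}\bm{X}\bm{R}'$ is symmetric (its transpose is $\bm{R}\bm{X}'\bm{R}' = \bm{R}\bm{X}\bm{R}'$ because $\bm{X}$ is symmetric), which is needed for the p.s.d.\ conclusion to make sense.

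There is essentially no obstacle here: the claim is a one-line consequence of the definition, and the only point worth flagging is that $\bm{R}$ is allowed to be singular, so one cannot argue via a similarity/congruence preserving eigenvalues — but the quadratic-form argument above needs no invertibility. This claim is then used in the main proof to conclude that $\bm{A} = \bm{R}_j\big(\tfrac{1}{\alpha}\sum_{t\in\J^\alpha}\at\at{}'\big)\bm{R}_j{}' \succeq 0$ whenever the empirical covariance $\tfrac{1}{\alpha}\sum_t \at\at{}'$ is p.s.d.\ (which it always is, being a sum of rank-one p.s.d.\ terms), and more importantly to compare $\bm{A}$ with $\bm{R}_j \Lam \bm{R}_j{}' \pm \epsilon_0\lambda^-\bm{R}_j\bm{R}_j{}'$ via Lemma \ref{lem:concm} and Weyl's inequality, in order to lower bound $\lambda_{\max}(\bm{A}) = \lambda_{\max}(\bm{T})$.
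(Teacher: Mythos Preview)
Your proof is correct and is essentially identical to the paper's own argument: the paper also fixes an arbitrary $\bm{z}$, sets $\bm{y}=\bm{R}'\bm{z}$, and uses $\bm{y}'\bm{X}\bm{y}\ge 0$ to conclude $\bm{z}'\bm{R}\bm{X}\bm{R}'\bm{z}\ge 0$. Your additional remarks on symmetry and on how the claim is applied downstream are accurate and match the paper's usage.
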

\begin{proof}
Since $\bm{X}$ is p.s.d., $\bm{y}{}' \bm{X} \bm{y} \geq 0$ for any vector $\bm{y}$. Use this with $\bm{y} = \bm{R}{}' \bm{z}$ for any $\bm{z} \in \Re^{r}$. We get $\bm{z}{}' \bm{R} \bm{X} \bm{R}{}' \bm{z} \geq 0$. Since this holds for all $\bm{z}$, $\bm{R}\bm{X}\bm{R}{}' \succeq 0$.
\end{proof}
\noindent Using Lemma \ref{lem:concm}, it follows that
\begin{align*}
\Pr\left( \frac{1}{\alpha} \sum_t \at \at{}' - (\lambda^- - \epsilon_0) \bm{I} \succeq 0 \right) \geq 1 - 2n^{-10}.
\end{align*}
Using Claim \ref{claim:psd}, with probability $ 1- 2n^{-10}$,
\begin{align*}
&\bm{R}_j \left( \frac{1}{\alpha}\sum_t \at \at{}' - (\lambda^- - \epsilon_0) \bm{I} \right) \bm{R}_j{}' \succeq 0 \\
& \implies \lambda_{\min}\left(\bm{R}_j \left( \frac{1}{\alpha}\sum_t \at \at{}' - (\lambda^- - \epsilon_0) \bm{I} \right) \bm{R}_j{}'\right) \geq 0
\end{align*}
Using Weyl's inequality \cite{hornjohnson}, with the same probability,
\begin{align*}
\lambda_{\min}\left(\bm{R}_j \left( \frac{1}{\alpha}\sum_t \at \at{}' - (\lambda^- - \epsilon_0) \bm{I} \right) \bm{R}_j{}'\right) \leq \lambda_{\max}\left(\bm{R}_j \left( \frac{1}{\alpha}\sum_t \at \at{}'\right) \bm{R}_j{}'\right) -(\lambda^- - \epsilon_0)\lambda_{\max}\left(\bm{R}_j\bm{R}_j{}'\right)
\end{align*}
and so,
\begin{align*}
\lambda_{\max}(\bm{A}) \geq (\lambda^- - \epsilon_0)\lambda_{\max}(\bm{R}_j\bm{R}_j{}').
\end{align*}
We now obtain a lower bound on the second term in the rhs above.
\begin{align}\label{eq:nine}
\lambda_{\max}(\bm{R}_j \bm{R}_j{}') &= \lambda_{\max}(\P_j{}'(\I - \Phat_{j-1}\Phat_{j-1}{}')\P_j)) = \lambda_{\max}(\P_j{}'(\I - \P_{j-1}\P_{j-1}{}' + \P_{j-1}\P_{j-1}{}' - \Phat_{j-1}\Phat_{j-1}{}')\P_j)) \nonumber \\
&\geq \lambda_{\max}(\P_j{}' (\I - \P_{j-1}\P_{j-1}{}')\P_j) + \lambda_{\min}(\P_j{}'(\P_{j-1}\P_{j-1}{}' - \Phat_{j-1}\Phat_{j-1}{}')\P_j) \nonumber \\
&= \sigma_{\max}^2((\I - \P_{j-1}\P_{j-1}{}')\P_j) + \lambda_{\min}(\P_j{}'(\P_{j-1}\P_{j-1}{}' - \Phat_{j-1}\Phat_{j-1}{}')\P_j) \nonumber \\
&\geq \SE^2(\P_{j-1}, \P_j) -\norm{\P_j{}'(\P_{j-1}\P_{j-1}{}' - \Phat_{j-1}\Phat_{j-1}{}')\P_j} \nonumber \\
&\overset{(a)}{\geq} \SE^2(\P_{j-1}, \P_j) -\norm{\P_{j-1}\P_{j-1}{}' - \Phat_{j-1}\Phat_{j-1}{}'} \geq \SE^2(\P_{j-1}, \P_j) - 2\zz
\end{align}
where we have used \cite[Lemma 2.10]{rrpcp_perf}. 
Thus, combining \eqref{eq:lmaxzero}, \eqref{eq:lmaxone}, \eqref{eq:nine}, and using $\epsilon_0 = 0.01$,  $\epsilon_1 = \epsilon_2 = 0.01$, with probability at least $1 - 10n^{-10}$,
\begin{align*}
\lambda_{\max} \left( \frac{1}{\alpha} \sum_{t \in \J^{\alpha}} \bphi \lhatt \lhatt{}' \bphi \right) \geq &\ 0.99 \lambda^-(\SE^2(\P_{j-1}, \P_j) - 2\zz) - \lambda^+[\sqrt{b_0} q_0 (1.01 q_0 + 2.02 (\zz + \SE(\P_{j-1}, \P_j)))] \\
&\overset{(a)}{\geq} \lambda^- \left[0.91 \SE^2(\P_{j-1}, \P_j) - 2.7 \sqrt{b_0} f (\zz + \SE(\P_{j-1}, \P_j)^2\right]   \\
&\overset{(b)}{\geq} \lambda^-\left[ 0.91 \SE^2(\P_{j-1}, \P_j) - 0.54(\zz^2 + \SE^2(\P_{j-1}, \P_j)\right]  \\
&\geq \lambda^-\SE^2(\P_{j-1}, \P_j)(0.91 - 0.54 \cdot 1.16) \geq 0.28 \lambda^- \SE^2(\P_{j-1}, \P_j)
\end{align*}
where $(a)$ uses $q_0 = 1.2(\zz + \SE(\P_{j-1}, \P_j))$ and $\zz \leq 0.03 \SE^2(\P_{j-1}, \P_j)/f^2 < 0.4\SE^2(\P_{j-1}, \P_j)$, $(b)$ uses $\sqrt{b_0} f = 0.1$ and $(a+b)^2 \leq 2(a^2 + b^2)$. and the last inequality again uses $\zz \leq 0.03 \SE^2(\P_{j-1}, \P_j)/f^2$.

\noindent \emph{Proof of Item (b)}: First, we recall that from Corollary \ref{cor:etbnds}, for $t \in [\that_j + K \alpha, t_{j+1})$, the error $\et$ satisfies \eqref{eq:dynmc}. 
\begin{align*}
\lambda_{\max}\left( \frac{1}{\alpha} \sum_{t \in \J^{\alpha}} \bphi \lhatt \lhatt{}'\bphi \right) &\leq \lambda_{\max} \left(  \frac{1}{\alpha} \sum_{t \in \J^{\alpha}} \bphi \lt \lt{}' \bphi \right) + \lambda_{\max}\left( \frac{1}{\alpha} \sum_{t \in \J^{\alpha}} \bphi [\lt \et{}' + \et \lt{}' + \et \et{}'] \bphi \right) \\
&\leq \lambda_{\max}\left( \frac{1}{\alpha} \sum_{t \in \J^{\alpha}} \bphi \lt \lt{}' \bphi \right) + \norm{\frac{1}{\alpha} \sum_{t \in \J^{\alpha}} \bphi \et \et{}' \bphi} + 2\norm{\frac{1}{\alpha} \sum_{t \in \J^{\alpha}} \bphi \lt \et{}' \bphi} \\
& := \lambda_{\max}(\bm{T}) + \norm{\frac{1}{\alpha} \sum_{t \in \J^{\alpha}} \bphi \et \et{}' \bphi} + 2\norm{\frac{1}{\alpha} \sum_{t \in \J^{\alpha}} \bphi \lt \et{}' \bphi}
\end{align*}
To obtain bounds on the second and third terms in the equation above we invoked Lemma \ref{lem:concm} with $\ezero := \{\SE(\Phat_j, \P_j) \leq \zz\}$, $\Phat \equiv \Phat_j$, $\P \equiv \P_j$, $\mtt \equiv \itt$, $\mot \equiv \left(\bpsi_{\T_t} {}' \bpsi_{\T_t} \right)^{-1} \bpsi_{\T_t}{}'$, where $\bpsi = \I - \Phat_{j} \Phat_{j}{}'$ and $b_0 \equiv b_0$, $ \equiv q_K$, . Thus, with probability at least $1 - 10n^{-10}$,
\begin{align*}
\norm{\frac{1}{\alpha} \sum_{t \in \J^{\alpha}} \bphi \et \et{}' \bphi} + 2\norm{\frac{1}{\alpha} \sum_{t \in \J^{\alpha}} \bphi \lt \et{}' \bphi}  &\leq \sqrt{b_0} q_K \lambda^+ (q_K(1 + \epsilon_2) + 2(1 + \epsilon_1)\zz)
\end{align*}
The above equation also uses $\SE(\Phat_{j}, \P_j) \leq \zz$. Proceeding as before to bound $\lambda_{\max}(\bm{T})$, define $\bphi \P_j \overset{QR}{=} \bm{E}_j \bm{R}_j$, define $\bm{A}$ as before, we know $\lambda_{\max}(\bm{T}) = \lambda_{\max}(\bm{E}_j{}' \bm{T} \bm{E}_j) = \lambda_{\max}(\bm{A})$. Further,
\begin{align*}
\lambda_{\max}(\bm{A}) &= \lambda_{\max}\left( \bm{R}_j \left( \frac{1}{\alpha} \sum_{t \in \J^{\alpha}} \at \at{}' \right) \bm{R}_j{}' \right) \overset{(a)}{\leq} \lambda_{\max}\left( \frac{1}{\alpha} \sum_{t \in \J^{\alpha}} \at \at{} \right) \lambda_{\max}(\bm{R}_j \bm{R}_j{}')
\end{align*}
where $(a)$ uses Ostrowski's theorem \cite[Theorem 5.4.9]{hornjohnson}.
We have
\begin{align*}
\lambda_{\max}(\bm{R}_j \bm{R}_j{}') = \sigma_{\max}^2(\bm{R}_j) = \sigma_{\max}^2((\I - \Phat_{j}\Phat_j{}')\P_j) \leq \zz^2
\end{align*}
and we can bound $\lambda_{\max}(\frac{1}{\alpha} \sum_{t \in \J^{\alpha}} \at \at{}')$ using the first item of Lemma \ref{lem:concm} with $\epsilon_0=0.01$.
Combining all of the above, and  setting
$\epsilon_1 = \epsilon_2 = 0.01$, when the subspace has not changed, with probability at least $1 - 10n^{-10}$,
\begin{align*}
\lambda_{\max}\left( \frac{1}{\alpha} \sum_{t \in \J^{\alpha}} \bphi \lhatt \lhatt{}' \bphi \right) \leq \lambda^+ [1.01 \zz^2 + \sqrt{b_0} q_K (1.01q_K + 2.01\zz)] \overset{(a)}{\leq} 1.37 \zz^2 \lambda^+
\end{align*}
where $(a)$ uses $q_K \leq \zz$ and $b_0 f^2 = 0.01$. Under the condition of Theorem \ref{thm1}, recall that $\lthres = 2\zz^2\lambda^+$ and thus, with high probability, $1.37 \zz^2 \lambda^+ < \lthres < 0.28 \lambda^- \SE^2(\P_{j-1}, \P_j)$.
\end{proof}

\section{Extensions}\label{sec:extensions}
\subsection{Static Robust PCA}
A useful corollary of our result for RST is that NORST is also the {first online algorithm that provides a provable finite sample guarantee for the static Robust PCA problem}. 
Static RPCA is our problem setting with $J=1$, or in other words, with $\lt = \P \at$.
A recent work, \cite{xu_nips2013_1}, developed an online stochastic optimization based reformulation of PCP, called ORPCA, to solve this. Their paper provides only a partial guarantee because the guarantee assumes that the intermediate algorithm estimates, $\Phat_{(t)}$, are full-rank. Moreover the guarantee is only asymptotic.  Instead our result given below is a complete guarantee and is non-asymptotic.

\begin{corollary}[Static RPCA]\label{thm:rpca}
Consider Algorithm \ref{norst_basic} with $t_2=\infty$. Theorem \ref{thm1} holds with following modification: replace the slow subspace change assumption with a fixed subspace $\P$. Everything else remains as is, but with $r \equiv \rmat$. Under the assumptions of Theorem \ref{thm1}, all the conclusions hold with same probability. The time and memory complexity  are $O(n d r \log(1/\epsilon))$ and $O(n r \log n \log(1/\epsilon))$.
\end{corollary}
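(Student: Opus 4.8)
The plan is to obtain this purely as the $J=1$ specialization of Theorem~\ref{thm1}, so essentially no new work is needed. Static RPCA is exactly the RST setting with $\P_{(t)}=\P$ for all $t$; equivalently there is a single piece, i.e. $t_2=\infty$ in Algorithm~\ref{norst_basic}, and $\rmat=r$. With this identification the hypotheses of Theorem~\ref{thm1} collapse to: left incoherence of $\P$ (Definition~\ref{defmu}), the statistical/boundedness assumptions on the $\at$'s, $\outfraccol\le c_1/(\mu\rmat)$, $\outfracrow^\alpha\le c_2/f^2$, the small-noise conditions on $\vt$, and the initialization condition $\SE(\Phat_0,\P)\le 0.25$ together with $C_1\sqrt{\rmat\lambda^+}\,\SE(\Phat_0,\P)\le \xmint$. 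The slow-subspace-change assumption is vacuous (the subspace never changes) and the only change-time requirement $t_{j+1}-t_j>(K+2)\alpha$ holds trivially since $t_2=\infty$. Note also that with a single subspace $\rmat=r$, so the column bound $O(1/r)$ matches the $O(1/\rmat)$ of the RPCA literature.

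First I would invoke the $j=0$ versions of Lemma~\ref{lem:reprocspcalemone} and Lemma~\ref{lem:reprocspcalemk} (per the Remark following Lemma~\ref{lem:reprocspcalemk}, with $\semax$ replaced by $\SE(\Phat_0,\P)$ and $\zz$ replaced by $0$ in the ``$b_b$'' bounds). Conditioned on $\Gamma_{0,0}=\{\SE(\Phat_0,\P)\le 0.25\}$, applying these two lemmas sequentially for $k=1,2,\dots,K$ produces the nested chain $\Gamma_{0,1}\subseteq\cdots\subseteq\Gamma_{0,K}$, where $\Gamma_{0,k}$ forces $\SE(\Phat_{0,k},\P)\le 0.3^k\,\SE(\Phat_0,\P)$; since $K=C\log(1/\zz)$, $\Gamma_{0,K}$ gives $\SE(\Phat_{0,K},\P)\le\zz$. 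Within each such interval the first items of the two lemmas give $\That_t=\T_t$ and the error expression \eqref{eq:etdef} for $\et=\xt-\xhat_t$, hence $\|\lhat_t-\lt\|=\|\xhat_t-\xt\|\le 1.2(\SE(\Phat_{(t)},\P)+\zz)\|\lt\|$ as in Corollary~\ref{cor:thm1}. For all later $t\ge t_\train+K\alpha$ (recall $\that_0=t_\train$) the estimate is frozen at $\Phat_{0,K}$, and the second item of Corollary~\ref{cor:etbnds} with $t_{j+1}=\infty$ shows the projected-CS step keeps satisfying the first item of Lemma~\ref{lem:reprocspcalemk} with $k=K$, so recovery holds at error $\zz$ for every such $t$. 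Union-bounding over the $O(\tmax)$ concentration events exactly as in the proof of Theorem~\ref{thm1} (Appendix~\ref{sec:proof}) yields the stated probability $1-10\tmax n^{-10}$. The time and memory complexities then follow verbatim from the complexity computation after Theorem~\ref{thm1} with $r\equiv\rmat$: $O(n\rmat^3\log\rmat)$ for initialization, $O(nd\rmat\log(1/\epsilon))$ for the projected-CS steps in total, and $O(nd\rmat\log(1/\epsilon))$ for the at most $d/\alpha$ rank-$\rmat$ SVDs, giving $O(nd\rmat\log(1/\epsilon))$ overall (under $\rmat^2\log\rmat\le d\log(1/\epsilon)$) and $O(n\rmat\log n\log(1/\epsilon))$ memory.

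Since this is a direct corollary, there is no real obstacle; the only care-points are the $j=0$ edge cases (the Remark that swaps $\semax\to\SE(\Phat_0,\P)$ and $\zz\to 0$, which is precisely what allows only a $0.25$-accurate rather than $\zz$-accurate initialization), and the bookkeeping that the initialization applies AltProj to the first $Cr$ samples and hence needs the stronger $O(1/\rmat)$ outlier-fraction bound \emph{only there}, whereas for all $t>t_\train$ the weaker $\outfracrow^\alpha\in O(1)$, $\outfraccol\in O(1/\rmat)$ suffice. If one prefers to run the automatic Algorithm~\ref{algo:auto-reprocs-pca} with no further change times, item (b) of Lemma~\ref{lem:sschangedet} guarantees that once $\Phat_{0,K}$ is in hand the detection statistic stays below $\lthres$, so no spurious change is ever declared; this is the only extra ingredient beyond the proof of Theorem~\ref{thm1}.
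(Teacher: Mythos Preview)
Your proposal is correct and takes essentially the same approach as the paper, which proves this corollary in a single line: ``The proof follows directly from Theorem~\ref{thm1} by setting $J=1$.'' You have simply unpacked that specialization in detail (the $j=0$ instances of Lemmas~\ref{lem:reprocspcalemone} and~\ref{lem:reprocspcalemk}, the trivialization of the change-time and $\Delta$ assumptions, and the complexity bookkeeping), which is fine but not required.
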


In applications such as ``robust'' dimensionality reduction \cite{dimreduce_app1, dimreduce_app2} where the objective is to just obtain the top-$r$ directions along which the variability of data is maximized, we only need the first $K\alpha = Cf^2r\log n \log(1/\zz)$ samples to obtain an $\zz$-accurate subspace estimate. If only these are used, the time complexity reduces to $O(n K \alpha r \log(1/\zz)) = O(n r^2 \log n \log^2(1/\epsilon))$. This is faster than even NO-RMC \cite{rmc_gd} and does not require $\tmax \approx n$; of course it requires the other extra assumptions discussed earlier.


\subsection{Subspace Tracking with Missing Data (ST-missing) and Dynamic Matrix Completion}
Another useful corollary of our result is a guarantee for the ST-missing problem.
%
Consider the subspace tracking with missing data (ST-missing) problem. By setting the missing entries at time $t$ to zero, and by defining $\T_t$ to be the set of missing entries at time $t$, we observe $n$-dimensional vectors that satisfy
\bea
\yt := \lt - \itt \itt{}' \lt, \text{ for } t = 1, 2, \dots, \tmax.
\label{eq:dynmc}
\eea
with $\xt \equiv \itt \itt{}' \lt$. This can be interpreted as a special case of the RST problem where the set $\T_t$ is known. Because there are no sparse corruptions (outliers), there is no $\xmint$. Thus the initialization error need not be $O(1/\sqrt{r})$ (needed in the RST result to ensure a reasonable lower bound on $\xmint$) and so one can even use random initialization. We assume that the initialization is obtained using the Random Orthogonal Model described in \cite{matcomp_candes}. As explained in \cite{matcomp_candes}, a basis matrix generated from this model is already $\mu$-incoherent. We have the following corollary. The only change in its proof is the proof of the first subspace update interval for the $j=0$ case.



\begin{corollary}[ST-missing]\label{cor:stmiss}
Consider NORST-Random (Algorithm \ref{algo:auto-dyn-mc}). If the assumptions of Theorem \ref{thm1} on $\at$ and $\vt$ hold, $\P_j$'s are $\mu$-incoherent,  $t_{j+1}-t_j > (K+2)\alpha$, $\Delta < 0.8$, the outlier fraction bounds given in Theorem \ref{thm1} hold, and if, for $t \in [t_0,t_1]$, $\outfraccol \le c/ (\log n)$, then all conclusions of Theorem \ref{thm1} on $\Phat_{(t)}$ and on $\lt$ hold with the same probability.
\end{corollary}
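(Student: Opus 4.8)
\textbf{Proof proposal for Corollary~\ref{cor:stmiss}.}
The plan is to deduce the result from the analysis already built for Theorem~\ref{thm1}, exploiting that ST-missing is exactly the special case of RST in which the outlier support $\T_t$ is \emph{known} (it is the set of missing entries) and $\xt=\itt\itt{}'\lt$. The first consequence is that the projected-CS / robust-regression step of Algorithm~\ref{algo:auto-dyn-mc} needs neither $\ell_1$ minimization nor a support-estimation threshold: we set $\xhatt=\itt(\bpsi_{\Tt}{}'\bpsi_{\Tt})^{-1}\bpsi_{\Tt}{}'\tty_t$ directly, and $\bpsi_{\Tt}{}'\bpsi_{\Tt}$ is invertible since $\delta_{|\Tt|}(\bpsi)<1$ by Lemma~\ref{kappadelta}, $\mu$-incoherence of $\Phat_{(t-1)}$, and the bound on $\outfraccol$. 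Hence $\lhatt=\lt+\vt-\et$ with $\et$ given \emph{exactly} by \eqref{eq:etdef} and $\norm{\et}\le\phi^+\norm{\itt{}'\bpsi(\lt+\vt)}$, so $\norm{\et}$ is again proportional to $\SE(\Phat_{(t-1)},\P_{(t)})\sqrt{\eta r\lambda^+}$ plus a small $\vt$ term. In particular Items~1 of Lemmas~\ref{lem:reprocspcalemone} and~\ref{lem:reprocspcalemk} hold with $\That_t=\Tt$ trivially and \emph{without ever invoking the lower bound on $\xmint$} — there is no $\xmint$ here, which is precisely why the initialization error may be $\Theta(1)$.

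For every $j\ge1$ the rest of the argument is unchanged. Items~2 of Lemmas~\ref{lem:reprocspcalemone} and~\ref{lem:reprocspcalemk}, the change-detection Lemma~\ref{lem:sschangedet}, and Corollary~\ref{cor:etbnds} use only: the expression \eqref{eq:etdef} for $\et$; its row-sparsity with fraction $\le b_0$; the bound $\norm{\M_{s,t}\P_j}\le q_{k-1}$ with $\M_{s,t}=-(\bpsi_{\Tt}{}'\bpsi_{\Tt})^{-1}\bpsi_{\Tt}{}'$; and the statistical assumptions on $\at,\vt$ — all identical in this setting. Thus, conditioning sequentially on $\Gamma_{j,0}$ (now meaning $\SE(\Phat_{j-1},\P_{j-1})\le\zz$) and applying Theorem~\ref{cor:pcasddn} exactly as in those proofs yields $\Gamma_{j,K}$, i.e.\ $\SE(\Phat_j,\P_j)\le\zz$, with the stated probability. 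So all conclusions of Theorem~\ref{thm1} on $\Phat_{(t)}$ and on $\lt$ hold for all $j\ge1$.

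The one genuinely new piece is the $j=0$ interval under \emph{random} initialization. Here $\Phat_0$ is drawn from the Random Orthogonal Model of \cite{matcomp_candes}, which is already $\mu$-incoherent but only satisfies $\SE(\Phat_0,\P_0)\le1$, so $q_0:=1.2(\zz+\SE(\Phat_0,\P_0))$ need not be below $1$ and Theorem~\ref{cor:pcasddn} cannot be applied verbatim to the first $r$-SVD. The fix is a PCA-in-data-dependent-noise argument in the regime where the amplification factor $q$ is a constant strictly less than $1$ rather than $o(1)$. First, because $\Phat_0$ is $\mu$-incoherent and $\outfraccol\le c/\log n$ on $[t_0,t_1]$, we still get $\delta_{2s}(\bpsi_0)\le c'<1$, hence $\norm{(\bpsi_{\Tt}{}'\bpsi_{\Tt})^{-1}}\le\phi^+$ and $\norm{\M_{s,t}\P_0}\le q^{(0)}<1$ for a fixed constant $q^{(0)}$; this stronger-than-Theorem~\ref{thm1} column bound, needed only on the initialization segment, is exactly what keeps both quantities away from $1$ when $\Phat_0$ is an essentially arbitrary incoherent basis. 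Then, writing the first-SVD input as signal $\lt$ plus data-dependent noise $\et$, the time-averaged noise power and the signal--noise cross term each pick up a factor $\sqrt{b_0}$ from the row-sparsity of $\et$ via the Cauchy-Schwarz step of Lemma~\ref{lem:concm}, so $\SE(\Phat_{0,1},\P_0)\lesssim\sqrt{b_0}\,q^{(0)}f+(\text{small }\vt\text{ term})$, a small constant since $\sqrt{b_0}f=O(1)$ with $b_0=c_2/f^2$ and $c_2$ small. From the second update interval onward $\SE(\Phat_{0,k},\P_0)$ contracts geometrically exactly as in Lemma~\ref{lem:reprocspcalemk} with $\zz$ and $\Delta$ both replaced by $0$ (there is no previous subspace to inject error), so after $K$ steps $\SE(\Phat_0,\P_0)\le\zz$, i.e.\ $\Gamma_{0,K}$ holds, and the $j\ge1$ machinery takes over from $t=\that_0$. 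A union bound over the $O(\tmax/\alpha)$ SVD steps and the projected-CS steps gives the same $1-10\tmax n^{-10}$ probability.

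I expect the $j=0$ random-initialization step to be the main obstacle. Unlike every other update interval, the error vector $\et$ seen by the first PCA step is of the \emph{same order} as the signal $\lt$, so the small-noise PCA-SDDN guarantee of Theorem~\ref{cor:pcasddn} (which needs $q<1$ \emph{and} $3\sqrt{b_0}qf$ tiny) does not apply off the shelf; one must re-derive the single-step subspace-error bound in the constant-$q$ regime and check that the $\sqrt{b_0}$ gain from sparsity of $\et$, together with the $\outfraccol\le c/\log n$ bound keeping $\bpsi_{\Tt}{}'\bpsi_{\Tt}$ well conditioned, suffices to pull $\SE(\Phat_{0,1},\P_0)$ below $1$ by a constant factor — which is all that is needed to seed the geometric decay.
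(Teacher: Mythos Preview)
Your overall structure is correct: the $j\ge 1$ argument carries over unchanged, known support means $\That_t=\Tt$ automatically so no $\xmint$ assumption is needed, and the only new work is the $j=0$ interval under random initialization. But you are overcomplicating the $j=0$ step and missing the paper's key observation.

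You frame the difficulty as ``$q_0 := 1.2(\zz+\SE(\Phat_0,\P_0))$ need not be below $1$, so Theorem~\ref{cor:pcasddn} does not apply off the shelf,'' and then propose re-deriving the PCA-SDDN bound in a constant-$q$ regime. The paper avoids this entirely by bounding $\norm{\M_{s,t}\P_0}$ \emph{without} going through $\SE(\Phat_0,\P_0)$. With $\bpsi=\I-\Phat_0\Phat_0{}'$, the triangle inequality gives
\[
\norm{\itt{}'\bpsi\P_0} \;\le\; \norm{\itt{}'\P_0} + \norm{\itt{}'\Phat_0}.
\]
Both terms are bounded by incoherence: $\P_0$ is $\mu$-incoherent by assumption, and $\Phat_0$ drawn from the random orthogonal model is incoherent w.h.p.\ by \cite[Lemma~2.1]{matcomp_candes}, with parameter $C\bar r/n$ where $\bar r=\max(r,\log n)$. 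This is exactly why the corollary tightens the column bound to $\outfraccol\le c/\log n$ on $[t_0,t_1]$: it ensures $2s\cdot C\bar r/n$ is a small constant even when $r<\log n$. The paper then gets $\norm{\itt{}'\bpsi\P_0}\le 0.3+0.3=0.6$, hence $q_0=\phi^+\cdot 0.6=0.72<1$, and Theorem~\ref{cor:pcasddn} applies directly with $\varepsilon_{\mathrm{SE}}=q_0/4$. No re-derivation is needed.

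So the missing idea is simple: in the known-support setting, $\norm{\M_{s,t}\P_0}$ depends on the incoherence of $\Phat_0$ and $\P_0$ \emph{separately}, not on their mutual subspace error. That is what makes random initialization work, and it is also the reason for the extra $\outfraccol\le c/\log n$ hypothesis, whose role your proposal does not pinpoint.
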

To our knowledge, the above is the first complete non-asymptotic guarantee for ST-missing; and the first result that allows changing subspaces. All existing guarantees are either partial guarantees (make assumptions on intermediate algorithm estimates), e.g., \cite{local_conv_grouse}, or provide only asymptotic results \cite{past,petrels}.
Moreover, from a dynamic matrix completion viewpoint, it is also giving a matrix completion solution without assuming that the set of observed entries is generated from a uniform or a Bernoulli model. Of course the tradeoff is that it needs many more observed entries. All these points will be discussed in detail in follow-up work where we will also numerically evaluate NORST-random for this problem.


\begin{algorithm}[t!]
\caption{NORST-Random for subspace tracking with missing data (ST-missing)}
\label{algo:auto-dyn-mc}
Algorithm \ref{norst} with the following changes
\ben
\item Replace line 3 with: compute $\Phat_0 \leftarrow \Phat_\init \leftarrow \text{Generate an $n \times r$ basis matrix from the random orthogonal model}$;  $\tildej~\leftarrow~1$, $k~\leftarrow~1$

\item Replace line 6 with the following
\bi
\item $\bpsi \leftarrow \bm{I} - \hat{\pt}_{(t-1)}\hat{\pt}_{(t-1)}{}'$; $\tty_t \leftarrow \bpsi \yt$;
 $\xhat_t \leftarrow \I_{\T_t} ( \bpsi_{\T_t}{}' \bpsi_{\T_t} )^{-1} \bpsi_{\T_t}{}'\tty_t$;  $\hat{\bm{\ell}}_t \leftarrow \yt - \hat{\bm{x}}_t$.
\ei
\een
\end{algorithm}



\subsection{Fewer than $r$ directions change}
It is possible to relax the lower bound on outlier magnitudes if not all of the subspace directions change at a given subspace change time. Suppose that only $r_\ch < r$ directions change. When $r_\ch=1$, we recover the guarantee of \cite{rrpcp_dynrpca} but for NORST (which is a simpler algorithm than s-reprocs).

Let $\P_{j-1, \fx}$ denote a basis for the fixed component of $\P_{j-1}$ and let $\P_{j-1,\ch}$ denote a basis for its changing component. Thus,
$\P_{j-1} \R = [ \P_{j-1,\fx}, \P_{j-1,\ch}]$, where $\bm{R}$ is a $r \times r$ rotation matrix. 
We have
\begin{align}\label{eq:rchdef1}
\P_j = [\P_{j-1,\fx}, \P_{j,\chd}]
\end{align}
where $\P_{j, \chd}$ is the changed component and has the same dimension as $\P_{j-1, \ch}$. Thus,
\begin{align}\label{eq:rchdef2}
\SE(\P_{j-1},\P_j) = \SE( \P_{j-1,\ch}, \P_{j,\chd})
\end{align}
and so $\Delta = \max_j \SE(\P_{j-1},\P_j) = \max_j \SE( \P_{j-1,\ch}, \P_{j,\chd})$.
Let $\lambda_{\ch}^+$ denote the largest eigenvalue along any direction in $\P_{j,\chd}$.
\begin{corollary}\label{cor:rch}
In Algorithm \ref{algo:auto-reprocs-pca}, replace line 17 by $\Phat_{(t)} \leftarrow \basis(\Phat_{j-1}, \Phat_{j, k})$. For basis matrices $\P_1, \P_2$, we use $\P = \basis(\P_1, \P_2)$ to mean that $\P$ is a basis matrix with column span equal to the column span of $[\P_1, \P_2]$.
Assume that  \eqref{eq:rchdef1} and \eqref{eq:rchdef2} hold. Also assume that the conditions of Theorem \ref{thm1} holds with the lower bound on $\xmint$ relaxed to $\xmint \ge C (\zz \sqrt{\eta (r - r_\ch) \lambda^+} + (\zz + \Delta) \sqrt{\eta r_\ch \lambda_\ch^+})$.
Then, all conclusions of Theorem \ref{thm1} hold.
\end{corollary}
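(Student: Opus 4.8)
\emph{Proof idea.}
The plan is to re-run the proof of Theorem \ref{thm1} (Lemmas \ref{lem:reprocspcalemone}, \ref{lem:reprocspcalemk}, \ref{lem:sschangedet}) essentially verbatim, the one substantive change being a \emph{tighter bound on the noise $\b_t:=\bpsi\lt$} seen by the projected‑CS / robust‑regression step, exploiting the fix/change split. Using \eqref{eq:rchdef1}, write $\lt=\P_{j-1,\fx}\at_{\fx}+\P_{j,\chd}\at_{\ch}$, where $[\at_{\fx};\at_{\ch}]$ is the corresponding partition of $\at$; since $\Lam$ is diagonal, $\|\at_{\fx}\|^2\le\eta(r-r_\ch)\lambda^+$ and $\|\at_{\ch}\|^2\le\eta r_\ch\lambda_\ch^+$. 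Split $\b_t$ accordingly. Because the modified algorithm always keeps $\Phat_{j-1}$ inside the current estimate ($\Phat_{(t-1)}=\basis(\Phat_{j-1},\Phat_{j,k-1})$, with $\Phat_{(t-1)}=\Phat_{j-1}$ when $k=1$), and conditioning on $\Gamma_{j-1,K}$ gives $\SE(\Phat_{j-1},\P_{j-1})\le\zz$, we get $\|(\I-\Phat_{(t-1)}\Phat_{(t-1)}{}')\P_{j-1,\fx}\|\le\SE(\Phat_{j-1},\P_{j-1})\le\zz$ (projection onto the larger subspace $\Span(\Phat_{(t-1)})\supseteq\Span(\Phat_{j-1})$ removes more, and $\P_{j-1,\fx}$ is a sub‑basis of $\P_{j-1}$). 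Likewise, for $k\ge2$, $\|(\I-\Phat_{(t-1)}\Phat_{(t-1)}{}')\P_{j,\chd}\|\le\SE(\Phat_{j,k-1},\P_j)$, which by the induction of Lemma \ref{lem:reprocspcalemk} is $\le 0.3^{k-1}(\zz+\Delta)$; for $k=1$ it is $\le\SE(\Phat_{j-1},\P_{j,\chd})\le\SE(\Phat_{j-1},\P_{j-1})+\SE(\P_{j-1,\ch},\P_{j,\chd})=\zz+\Delta$ by Lemma \ref{lem:sumprinang} and \eqref{eq:rchdef2}. Hence $\|\b_t\|\le\zz\sqrt{\eta(r-r_\ch)\lambda^+}+0.3^{k-1}(\zz+\Delta)\sqrt{\eta r_\ch\lambda_\ch^+}$, which is largest in the first update interval ($k=1$) and equals precisely the quantity in the relaxed $\xmint$ lower bound. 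Rescaling $\xi$ and $\omega_{supp}$ to this value, the noisy‑$\ell_1$ guarantee \cite{candes_rip} and the thresholding argument of Lemma \ref{lem:reprocspcalemone} give exact support recovery $\That_t=\T_t$, the expression \eqref{eq:etdef} for $\et$, and $\|\et\|\le\phi^+\|\b_t\|$, exactly as before; recovery of $\lt=\yt-\xhat_t$ is unchanged.

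Second, for the subspace‑update step: updating the $r$‑dimensional SVD estimate $\Phat_{j,k}$ is still PCA in sparse data‑dependent noise, so Theorem \ref{cor:pcasddn} applies with $q=\|\M_{s,t}\P_j\|\le\phi^+\|\bpsi\P_j\|=\phi^+\SE(\Phat_{(t-1)},\P_j)$. The key bookkeeping is that, by the orthogonal decomposition $\P_j=[\P_{j-1,\fx},\P_{j,\chd}]$ together with the two row‑wise bounds above, $\SE(\Phat_{(t-1)},\P_j)\le\sqrt{\zz^2+\SE(\Phat_{j,k-1},\P_j)^2}$. So the induction hypothesis $\Gamma_{j,k}$ becomes $\SE(\Phat_{j,k},\P_j)\le S_k$ with $S_0=\zz+\Delta$ and $S_k\le\rho\,\sqrt{\zz^2+S_{k-1}^2}\le\rho(\zz+S_{k-1})$ for $\rho:=\phi^+\sqrt{b_0}f<1$: Theorem \ref{cor:pcasddn} applied with $\varepsilon_{\text{SE}}$ of the order of $q$ (so that $\alpha_0\le\alpha$ and $3\sqrt{b_0}qf\le0.9\varepsilon_{\text{SE}}/(1+\varepsilon_{\text{SE}})$ hold, exactly as in Lemma \ref{lem:reprocspcalemk}, since $b_0=0.01/f^2$ and $q\le1$) closes one step. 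Thus $S_k$ contracts geometrically with ratio $\rho$ from $O(\zz+\Delta)$ until it falls below $\zz$, which with $\Delta\le0.8$ happens within $K=C\log(1/\zz)$ steps, and is then maintained; once $S_{k}\le\zz$ we may take $\varepsilon_{\text{SE}}=\zz$ and keep it there. One must also re‑verify the $2s$‑RIC of $\I-\Phat_{(t-1)}\Phat_{(t-1)}{}'$: writing $\Phat_{(t-1)}=\basis\big(\Phat_{j-1},\,(\I-\Phat_{j-1}\Phat_{j-1}{}')\Phat_{j,k-1}\big)$ and bounding $\max_{|\T|\le2s}\|\I_\T{}'\Phat_{(t-1)}\|^2$ by the sum of the corresponding quantities for $\Phat_{j-1}$ and for the "new" block, each of which is $O(\zz)$‑close to a $\mu$‑incoherent true basis, so each is controlled exactly as in \eqref{eq:dense} via the $\outfraccol$ bound of Theorem \ref{thm1} (up to a constant‑factor adjustment of $c_1$).

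Third, the change‑detection lemma \ref{lem:sschangedet}, Corollary \ref{cor:etbnds}, and the offline/Corollary \ref{cor:thm1} statements carry over without change, since they only use $\SE(\P_{j-1},\P_j)=\SE(\P_{j-1,\ch},\P_{j,\chd})$ (from \eqref{eq:rchdef2}) and the bounds on $\et$ just established; assembling these as in Appendix \ref{sec:proof} gives all conclusions of Theorem \ref{thm1} (with $\P_{(t)}=\P_j$ still $r$‑dimensional and $\Phat_{(t)}$ of dimension at most $\sim 2r$, so the one‑directional quantity $\SE(\Phat_{(t)},\P_j)=\|(\I-\Phat_{(t)}\Phat_{(t)}{}')\P_j\|$ is still the right object and is bounded as claimed). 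The hard part is the bookkeeping around the union estimate $\Phat_{(t)}=\basis(\Phat_{j-1},\Phat_{j,k})$: because its span can have dimension close to $2r$, the numerical constants (the $c_1$ in the $\outfraccol$ bound and the proportionality constant in $\varepsilon_{\text{SE}}$) must be shown to close \emph{simultaneously} with this larger effective dimension, and one must argue that the error along the fixed directions genuinely stays at the $\zz$ level throughout all $K$ updates — it does not participate in the geometric contraction but always lies below the floor, so the overall estimate error remains $O(\zz)$. Once these are pinned down, the remainder is a transcription of the Theorem \ref{thm1} argument with $\sqrt{\eta r\lambda^+}$ replaced by $\sqrt{\eta(r-r_\ch)\lambda^+}+\sqrt{\eta r_\ch\lambda_\ch^+}$ wherever the projected‑CS noise bound enters.
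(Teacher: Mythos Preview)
Your proposal is correct and follows the same approach as the paper: split $\lt=\P_{j-1,\fx}\at_{\fx}+\P_{j,\chd}\at_{\ch}$ via \eqref{eq:rchdef1}, use that $\Phat_{j-1}$ is always contained in the modified estimate $\Phat_{(t)}=\basis(\Phat_{j-1},\Phat_{j,k})$ to pin the error along $\P_{j-1,\fx}$ at level $\zz$, and let the error along $\P_{j,\chd}$ contract geometrically through the unchanged PCA-SDDN step. The paper's own proof is terser: for the subspace-update recursion it simply invokes $\SE(\Phat_{(t)},\P_{j,\chd})\le\SE(\Phat_{j,k},\P_j)$ (since $\Span(\Phat_{j,k})\subseteq\Span(\Phat_{(t)})$) and then quotes Lemmas \ref{lem:reprocspcalemone}--\ref{lem:reprocspcalemk} verbatim, so your $S_k\le\rho\sqrt{\zz^2+S_{k-1}^2}$ recursion is unnecessary (though not wrong). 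Conversely, your concern about the $2s$-RIC of the up-to-$2r$-dimensional projector and the attendant constant in the $\outfraccol$ bound is a genuine detail that the paper silently absorbs into ``the rest of the proof follows as before.''
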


\pgfplotstableread[col sep = comma]{figures/final_files/MO_online_finalSE_TIT.dat}\ltmodata
\pgfplotstableread[col sep = comma]{figures/final_files/bern_online_finalSE_TIT.dat}\ltmodatabern

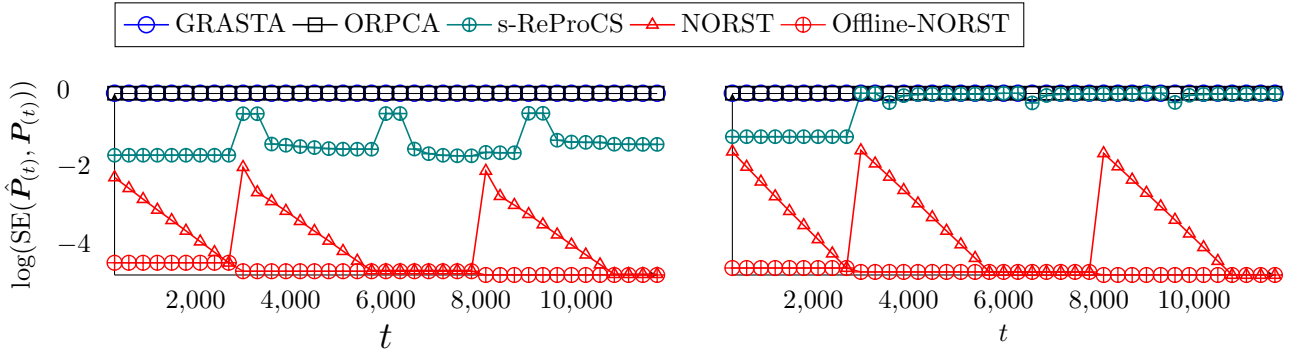
\begin{figure*}[t!]
\centering
\begin{tikzpicture}
    \begin{groupplot}[
        group style={
            group size=2 by 1,
            y descriptions at=edge left,
        },
        my stylecompare,
        enlargelimits=false,
        width = .5\linewidth,
        height=4cm,
    ]
       \nextgroupplot[
            my legend style compare,
            legend style={at={(0,1.5)}},
            legend columns = 7,
            xlabel=$t$,
            ylabel={\small{$\log(\SE(\Phat_{(t)}, \P_{(t)}))$}},
        ]
	        \addplot table[x index = {0}, y index = {1}]{\ltmodata};
	        \addplot table[x index = {0}, y index = {2}]{\ltmodata};
	        \addplot table[x index = {0}, y index = {3}]{\ltmodata};
	        \addplot table[x index = {0}, y index = {4}]{\ltmodata};
	        \addplot table[x index = {0}, y index = {5}]{\ltmodata};
	               \nextgroupplot[
            ymode=log,
            xlabel={\small{$t$}},
        ]
	        \addplot table[x index = {0}, y index = {1}]{\ltmodatabern};
	        \addplot table[x index = {0}, y index = {2}]{\ltmodatabern};
	        \addplot table[x index = {0}, y index = {3}]{\ltmodatabern};
	        \addplot table[x index = {0}, y index = {4}]{\ltmodatabern};
	        \addplot table[x index = {0}, y index = {5}]{\ltmodatabern};
    \end{groupplot}
\end{tikzpicture}
\caption{Left plot illustrates the $\lt$ error for outlier supports generated using Moving Object Model and right plot illustrates the error under the Bernoulli model. The values are plotted every $k\alpha - 1$ time-frames.}
\vspace{-0.1in}
\label{fig:Comparison}
\end{figure*}

\section{Empirical Evaluation}\label{sec:sims}
In this section we present the results for extensive numerical experiments on synthetic and real data to validate our theoretical claims. All time comparisons are performed on a Desktop Computer with Intel$^{\textsuperscript{\textregistered}}$ Xeon E$3$-$1240$ $8$-core CPU @ $3.50$GHz and $32$GB RAM and all synthetic data experiments are averaged over $100$ independent trials. The codes are available at \url{https://github.com/praneethmurthy/NORST}.

\subsection{Synthetic Data}
We perform three experiments on synthetic data to corroborate our theoretical claims.

\subsubsection{Experiment 1}
We compare the results of NORST and Offline-NORST with static RPCA algorithms, and Robust Subspace Tracking/Online RPCA methods proposed in literature. For our first experiment, we generate the changing subspaces using $\P_j = e^{\gamma_j \bm{B}_j} \P_{j-1}$ as done in \cite{grass_undersampled} where $\gamma_j$ controls the subspace change and $\bm{B}_j$'s are skew-symmetric matrices. In the first experiment we used the following parameters. $n = 1000$, $d = 12000$, $J = 2$, $t_1 = 3000$, $t_2 = 8000$, $r = 30$, $\gamma_1 = 0.001$, $\gamma_2 = \gamma_1$ and the matrices $\bm{B}_{1}$ and $\bm{B}_2$ are generated as $\bm{B}_{1} = (\tilde{\bm{B}}_1 - \tilde{\bm{B}}_1)$ and $\bm{B}_2 = (\tilde{\bm{B}}_2 - \tilde{\bm{B}}_2)$ where the entries of $\tilde{\bm{B}}_1, \tilde{\bm{B}}_2$ are generated independently from a standard normal distribution. We set $\alpha= 300$. This gives us the basis matrices $\bm{P}_{(t)}$ for all $t$. To obtain the low-rank matrix $\bm{L}$ from this we generate the coefficients $\at \in \mathbb{R}^{r}$ as independent zero-mean, bounded random variables. They are $(\at)_i \overset{i.i.d}{\sim} unif[-q_i, q_i]$ where $q_i = \sqrt{f} - \sqrt{f}(i-1)/2r$ for $i = 1, 2, \cdots, r - 1$ and $q_{r} = 1$. thus the condition number is $f$ and we selected $f=50$. For the sparse supports, we considered two models according to which the supports are generated. First we use Model G.24 \cite{rrpcp_dynrpca} which simulates a moving object pacing in the video. For the first $t_{\train} = 100$ frames,  we used a smaller fraction of outliers with parameters $s/n = 0.01$, $b_0 = 0.01$. For $t > t_\train$ we used $s/n = 0.05$ and $b_0 = 0.3$. Secondly, we used the Bernoulli model to simulate sampling uniformly at random, i.e., each entry of the matrix, is independently selected with probability $\rho$ or not selected with probability $1- \rho$. We generate the sparse supports using the Bernoulli model using $\rho = 0.01$ for the first $t_\train$ frames and $\rho = 0.3$ for the subsequent frames. The sparse outlier magnitudes for both support models are generated uniformly at random from the interval $[x_{\min}, x_{\max}]$ with $x_{\min} = 10$ and $x_{\max} = 20$.

We initialized the s-ReProCS and NORST algorithms using AltProj applied to $\Y_{[1,t_\train]}$ with $t_\train=100$. For the parameters to AltProj we used used the true value of $r$, $15$ iterations and a threshold of $0.01$. This, and the choice of $\gamma_1$ and $\gamma_2$ ensure that $\SE(\Phat_\init, \P_0) \approx \SE(\P_1, \P_0) \approx \SE(\P_2, \P_1) \approx 0.01$. The other algorithm parameters are set as mentioned in the theorem, i.e., $K = \lceil \log(c/\varepsilon) \rceil = 8$, $\alpha = C r \log n = 300$, $\omega = x_{\min}/2 = 5$ and $\xi = 7 x_{\min}/15 = 0.67$, $\lthres = 2 \varepsilon^2 \lambda^+ = 7.5 \times 10^{-4}$. For $l_1$ minimization we used the \texttt{YALL-1} toolbox \cite{yall1} and set the tolerance to $10^{-4}$. For the least-squares step we use the Conjugate Gradient Least Squares instead of the well-known ``backslash'' operator in \texttt{MATLAB} since this is a well conditioned problem. For this we set the tolerance as $10^{-10}$ and the number of iterations as $10$. We have not done any code optimization such as use of \texttt{MEX} files for various sub-routines to speed up our algorithm. For the other online methods we implement the algorithms without modifications. The regularization parameter for ORPCA was set as with $\lambda_1 = 1 / \sqrt{n}$ and $\lambda_2 = 1 / \sqrt{d}$ according to \cite{xu_nips2013_1}. Wherever possible we set the tolerance as $10^{-6}$ and $100$ iterations to match that of our algorithm. As shown in Fig. \ref{fig:Comparison}, NORST is significantly better than all the RST methods - s-ReProCS \cite{rrpcp_dynrpca}, and two popular heuristics from literature - ORPCA \cite{xu_nips2013_1} and GRASTA \cite{grass_undersampled}.

\begin{table}[t!]
\begin{center}
\caption{Comparison of $\|\Lhat - \L\|_F/\|\L\|_F$ for Online and offline RPCA methods. Average time for the Moving Object model is given in parentheses. The offline (batch) methods are performed once on the complete dataset.}
\resizebox{0.9\linewidth}{!}{
\begin{tabular}{cccccccc} \toprule
Outlier Model & GRASTA & s-ReProCS & ORPCA & {\bf NORST} & RPCA-GD & AltProj & \bf{Offline-NORST} \\
  & ($0.02$ ms) & ($0.9$ ms) & ($1.2$ms) & ($\bm{0.9}$ {\bf ms}) & ($7.8$ms) & ($4.6$ms) & ($\bm{1.7}$\bf{ms}) \\ \toprule
Moving Object & $0.630$ & $0.598$ & $0.861$ & $\bm{4.23 \times 10^{-4}}$ & $4.283$ & $4.441$ & $\bm{8.2 \times 10^{-6}}$ \\
Bernoulli & $6.163$ & $2.805$ & $1.072$ & $\bm{0.002}$ & $0.092$& $0.072$ & $\bm{2.3 \times 10^{-4}}$ \\ \bottomrule
\end{tabular}
\label{tab:offline}
}
\end{center}
\end{table}

We also provide a comparison of offline techniques in Table \ref{tab:offline}. We must mention here that we implemented the static RPCA methods once on the entire data matrix, $\Y$. We do this to provide a roughly equal comparison of the time taken. In principle, we could also implement the static techniques on disjoint batches of size $\alpha$, but we observed that this did not yield significant improvement in terms of reconstruction accuracy, while being considerably slower, and thus we report only the latter setting. As can be seen, offline NORST outperforms all static RPCA methods, both for the moving object outlier support model and for the commonly used random Bernoulli support model. For the batch comparison we used PCP, AltProj and  RPCA-GD. We set the regularization parameter for PCP $1/\sqrt{n}$ in accordance with \cite{rpca}. The other known parameters, $r$ for Alt-Proj, outlier-fraction for RPCA-GD, are set using the true values. Furthermore, for all algorithms (the IALM solver in case of PCP) we set the threshold as $10^{-6}$ and the number of iterations to $100$ as opposed to $10^{-3}$ and $50$ iterations which were set as default to provide a fair comparison with NORST and Offline-NORST. All results are averaged over $100$ independent runs.

\begin{figure}[t!]
\begin{center}
\resizebox{\linewidth}{!}{%
\begin{tikzpicture}
    \begin{groupplot}[
        group style={
            group size=2 by 1,
            horizontal sep=1.2cm
        },
        width = .5\linewidth,
        height = 4cm
    	]
        \nextgroupplot[
		        view={0}{90},
				xlabel={$b_0$},
               	ylabel={$r$},
                colormap/blackwhite,
                title={\small{(a) Phase Transition for Alt Proj}},
				x label style={at={(axis description cs:0.5,-0.1)},anchor=north},
				y label style={at={(axis description cs:-0.1,0.4)},anchor=west},
        ]
        \node [text width=1em,anchor=north west] at (rel axis cs: 0,1)
                {\subcaption{\label{fig:phasetransvsnbnd}}};

                	\addplot3[surf] file {figures/final_files/PhaseTransAltProj.dat};
                			
		            \nextgroupplot[
		       view={0}{90},
               xlabel={$b_0$},
               ylabel={$r$},
               colormap/blackwhite,
               title={\small{(b) Phase Transition for NORST}},
               scaled y ticks=false, tick label style={/pgf/number format/fixed},
               x label style={at={(axis description cs:0.5,-0.1)},anchor=north},
               y label style={at={(axis description cs:-0.1,0.4)},anchor=west},
        ]
    \node [text width=1em,anchor=north west] at (rel axis cs: 0,1)
                {\subcaption{\label{fig:phasetransvsngausr}}};

        	\addplot3[surf] file {figures/final_files/PhaseTransNORST.dat};
        	
        \end{groupplot}
\end{tikzpicture}%
}
\end{center}%
\caption{Numerically computed $\|\Lhat - \L\|_F^2 / \|\L\|_F^2$ for AltProj and for Offline NORST. Note that NORST indeed has a much higher tolerance to outlier fraction per row as compared to AltProj. Black denotes $0$ and white denotes $1$.}
\vspace{-0.5cm}
\label{fig:phase_trans}
\end{figure}
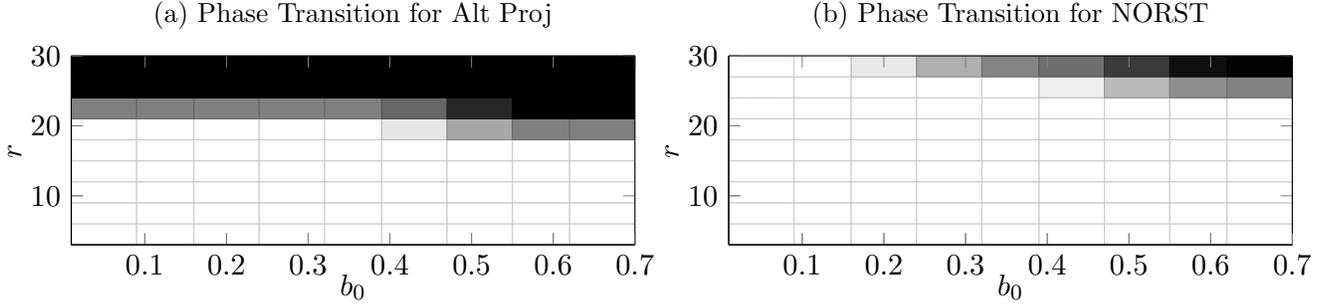

\subsubsection{Experiment 2}
 Next we perform an experiment to validate our claim of NORST admitting a higher fraction of outliers per row than the state of the art. In particular, since AltProj has the highest tolerance to $\outfracrow$, we only compare with it. The experiment proceeded as follows. We chose $10$ different values of each of $r$ and $b_0$ (we slightly misuse notation here to let $b_0 := \outfracrow$ for this section only). For each pair of $b_0$ and $r$ we implemented NORST and ALtProj over $100$ independent trials and computed the relative error in recovering $\L$, i.e., we computed $\|\Lhat - \L\|_F / \|L\|_F$ for each run. We computed the empirical probability of success, i.e., we enumerated the number of times out of $100$ the error seen by each algorithm was less than a threshold, $0.5$.

For each pair of $\{b_0, r\}$ we used the Bernoulli model for sparse support generation, the low rank matrix is generated exactly as done in the previous experiments with the exception that again to provide an equal footing, we increased the ``subspace change'' by setting $\gamma_1$ and $\gamma_2$ to $10$ times the value that was used in the previous experiment. For the first $t_\train$ frames we used $b_0 = 0.02$. We provide the phase transition plots for both algorithm in Fig. \ref{fig:Comparison}. Here, white represents success while black represents failure. As can be seen, NORST is able to tolerate a much larger fraction of outlier-per-row as compared to AltProj.

\pgfplotstableread[col sep = comma]{figures/final_files/xmin_variation.dat}\xmindata
\pgfplotsset{every axis title/.append style={at={(.5,1.15)}}}
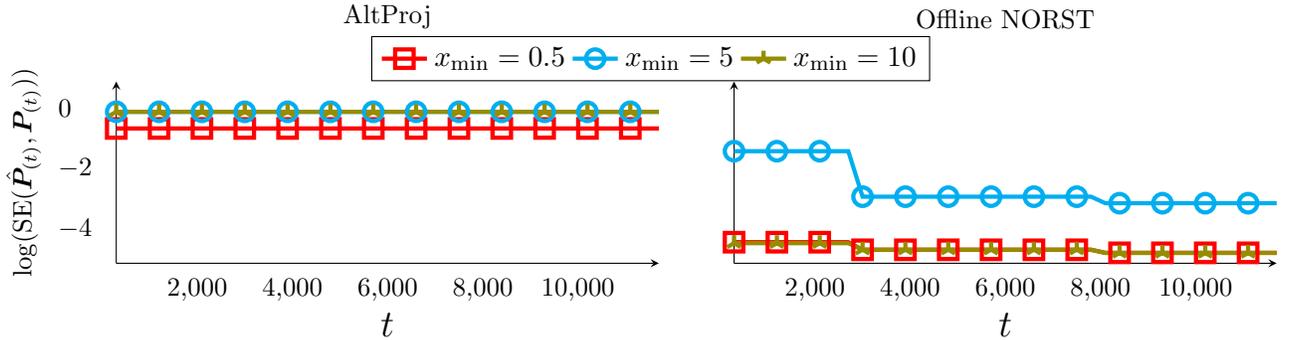
\begin{figure*}[t!]
\centering
\begin{tikzpicture}
    \begin{groupplot}[
        group style={
            group size=2 by 1,
            y descriptions at=edge left,
        },
        my stylecompare,
        enlargelimits=false,
        width = .5\linewidth,
        height=4cm,
        ymin=-5, ymax = 1
    ]
       \nextgroupplot[
            legend entries={
					$x_{\min} = 0.5$,
            		$x_{\min} = 5$,
            		$x_{\min} = 10$,	
            		},
            legend style={at={(1.5,1.25)}},
            legend columns = 3,
            xlabel=$t$,
            ylabel={\small{$\log(\SE(\Phat_{(t)}, \P_{(t)}))$}},
            title={\small AltProj},
        ]

        	        \addplot[red, line width=1.6pt, mark=square,mark size=3.5pt, mark repeat=3] table[x index = {0}, y index = {7}]{\xmindata};
	        \addplot[cyan, line width=1.6pt, mark=o,mark size=3.5pt, mark repeat=3] table[x index = {0}, y index = {8}]{\xmindata};
	        \addplot[olive, line width=1.6pt, mark=Mercedes star,mark size=3pt, mark repeat=3] table[x index = {0}, y index = {9}]{\xmindata};

	               \nextgroupplot[
            xlabel=$t$,
            title={\small Offline NORST}
        ]
	   	        \addplot[red, line width=1.6pt, mark=square,mark size=3.5pt, mark repeat=3] table[x index = {0}, y index = {4}]{\xmindata};
	        \addplot[cyan, line width=1.6pt, mark=o,mark size=3.5pt, mark repeat=3] table[x index = {0}, y index = {5}]{\xmindata};
	        \addplot[olive, line width=1.6pt, mark=Mercedes star,mark size=3pt, mark repeat=3] table[x index = {0}, y index = {6}]{\xmindata};

    \end{groupplot}
\end{tikzpicture}
\caption{In the above plots we show the variation of the subspace errors for varying $x_{\min}$. In particular, we set all the non-zero outlier values to $x_{\min}$. 
The results are averaged over $100$ independent trials.}
\vspace{-0.1in}
\label{fig:xmin_var}
\end{figure*}

\subsubsection{Experiment 3}
Finally we perform an experiment to analyze the effect of the lower bound on the outlier magnitude $x_{\min}$ with the performance of NORST and AltProj. We show the results in Fig. \ref{fig:xmin_var}. In the first stage, we generate the data exactly as done in the Moving Object sparse support model of the first experiment. The only change to the data generation parameters is that we now choose three different values of $x_{\min} = \{0.5, 5, 10\}$. Furthermore, we set all the non-zero entries of the sparse matrix to be equal to $x_{\min}$. This is actually harder than allowing the sparse outliers to take on any value since for a moderately low value of $x_{\min}$ the outlier-lower magnitude bound of Theorem \ref{thm1} is violated. This is indeed confirmed by the numerical results presented in Fig. \ref{fig:xmin_var}. (i) When $x_{\min} = 0.5$, NORST works well since now all the outliers get classified as the small unstructured noise $\vt$. (ii) When $x_{\min} = 10$, NORST still works well because now $\xmint$ is large enough so that the outlier support is mostly correctly recovered. (iii) But when $x_{\min} = 5$ the NORST reconstruction error stagnates around $10^{-3}$.

All AltProj errors are much worse than those of NORST because the outlier fraction per row is the same as in the first experiment. What can be noticed though is that the variation with varying $\xmint$ is not that significant. 

\begin{figure}[t!]
\begin{center}
\resizebox{.7\linewidth}{!}{
\begin{tabular}{@{}c@{}c@{}c@{}c@{}c@{}c@{}}
\\    \newline
	\includegraphics[scale=.2]{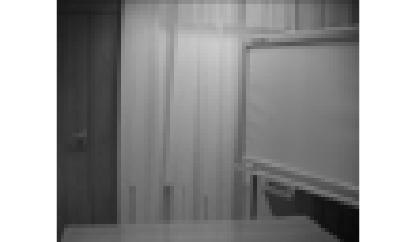}
&
	\includegraphics[scale=.2]{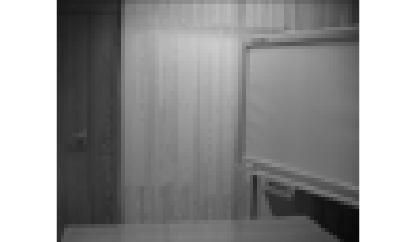}
&
	\includegraphics[scale=.2]{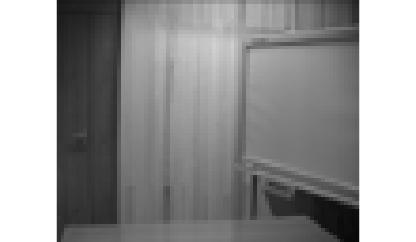}
&
	\includegraphics[scale=.2]{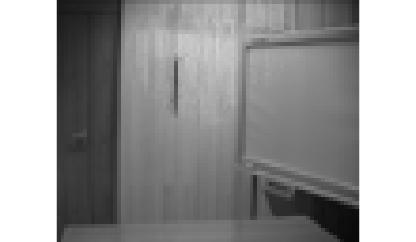}
&
	\includegraphics[scale=.2]{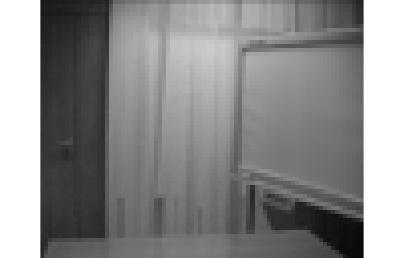}
&
	\includegraphics[scale=.2]{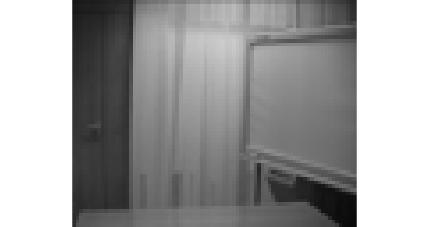}
\\    \newline
	\includegraphics[scale=.2]{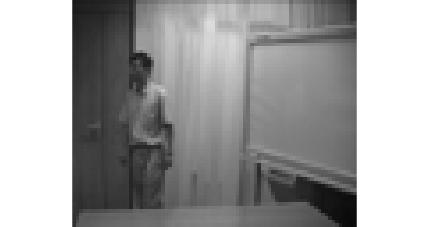}
&
	\includegraphics[scale=.2]{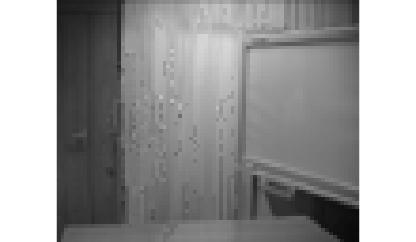}
&
	\includegraphics[scale=.2]{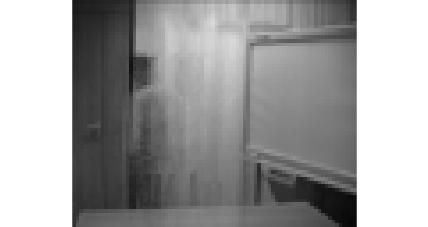}
&
	\includegraphics[scale=.2]{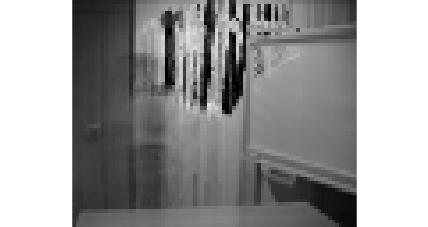}
&
	\includegraphics[scale=.2]{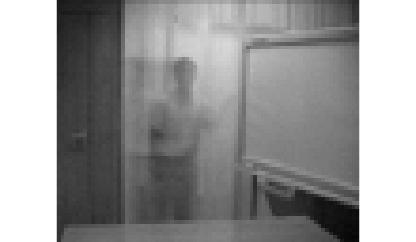}
&
	\includegraphics[scale=.2]{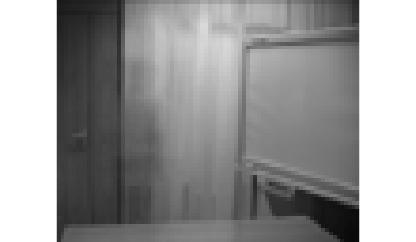}
\\    \newline
	\includegraphics[scale=.2]{figures/orig_MR1030.jpg}
&
	\includegraphics[scale=.2]{figures/reprocs_pca_bg_curtain_theory630.jpg}
&
	\includegraphics[scale=.2]{figures/ncrpca_BG_MR1030.jpg}
&
	\includegraphics[scale=.2]{figures/gd_BG_MR1030.jpg}
&
	\includegraphics[scale=.2]{figures/grasta_BG_MR1030.jpg}
&
	\includegraphics[scale=.2]{figures/rpca_BG_MR1030.jpg}
\\    \newline
	\subcaptionbox{original\label{1}}{\includegraphics[scale=.2]{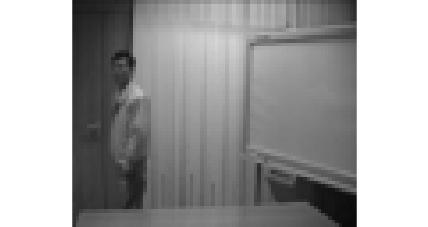}}
&
	\subcaptionbox{NORST\label{1}}{\includegraphics[scale=.2]{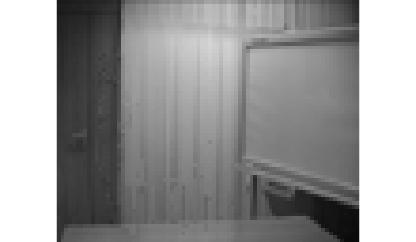}}
&
	\subcaptionbox{AltProj\label{1}}{\includegraphics[scale=.2]{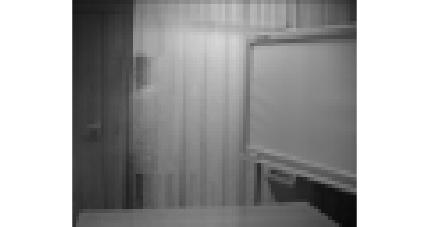}}
&
	\subcaptionbox{RPCA-GD\label{1}}{\includegraphics[scale=.2]{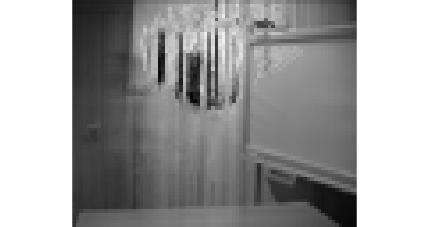}}
&
	\subcaptionbox{GRASTA\label{1}}{\includegraphics[scale=.2]{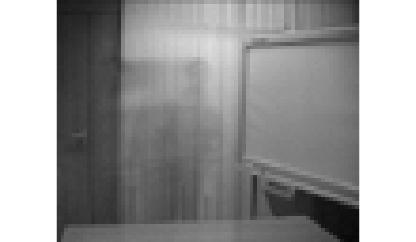}}
&
	\subcaptionbox{PCP\label{1}}{\includegraphics[scale=.2]{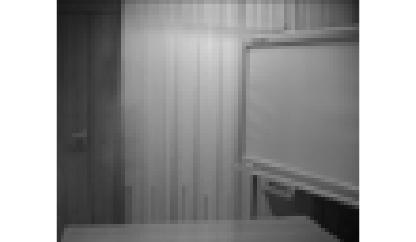}}
\end{tabular}
}

\caption{Comparison of visual performance in Foreground Background separation for the MR dataset. The images are shown at $t = t_{\text{train}} + 10, 140, 630, 760$.}
\label{fig:mr_full}
\end{center}

\end{figure}

\subsection{Real Data}
In this section we provide empirical results on real video for the task of Background Subtraction. 
For the AltProj algorithm we set $r = 40$. The remaining parameters were used with default setting.
For NORST, we set $\alpha = 60$, $K = 3$, $\xi_t = \|\bm{\Psi} \hat{\bm{\ell}}_{t-1}\|_2$. We found that these parameters work for most videos that we verified our algorithm on. For RPCA-GD we set the ``corruption fraction'' $\alpha = 0.2$ as described in their paper.

\begin{figure}[t!]
\begin{center}
\resizebox{.75\linewidth}{!}{
\begin{tabular}{@{}c@{}c@{}c@{}c@{}c@{}c@{}}
\\    \newline
	\includegraphics[scale=.15]{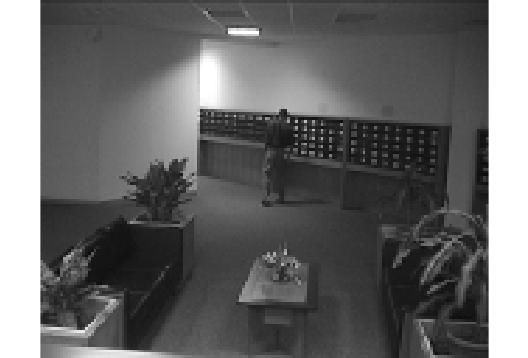}
&
	\includegraphics[scale=.15]{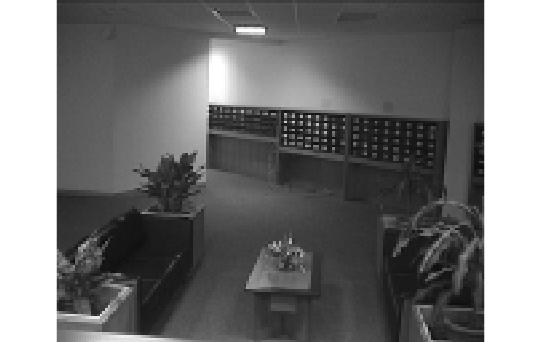}
&
	\includegraphics[scale=.15]{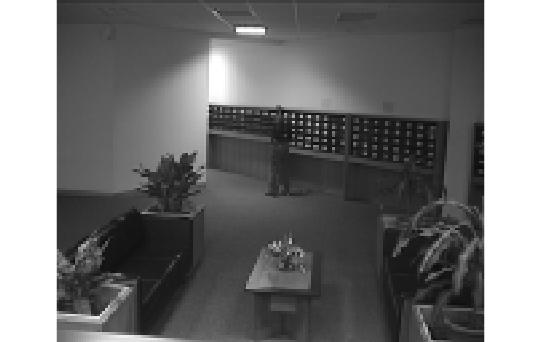}
&
	\includegraphics[scale=.15]{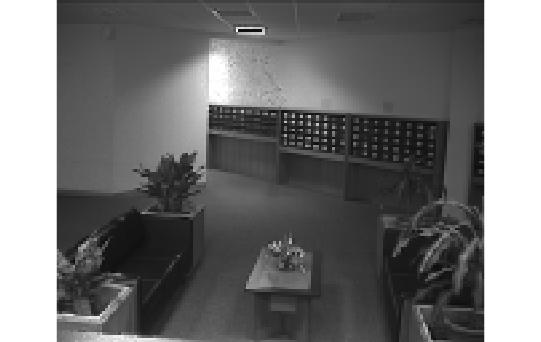}
&
	\includegraphics[scale=.15]{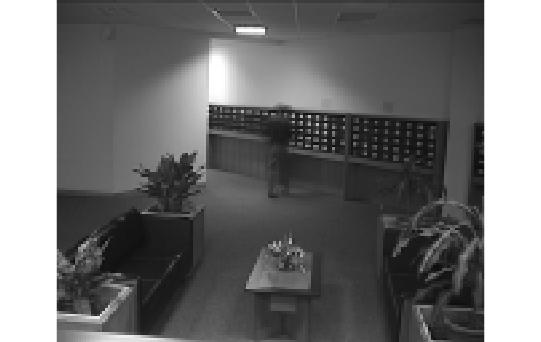}
&
	\includegraphics[scale=.15]{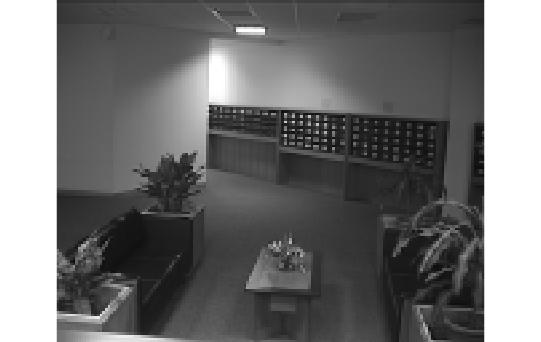}
\\    \newline
	\includegraphics[scale=.15]{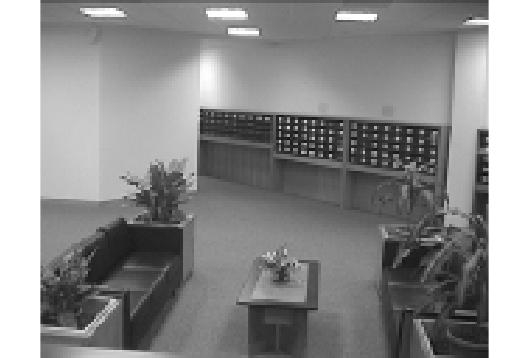}
&
	\includegraphics[scale=.15]{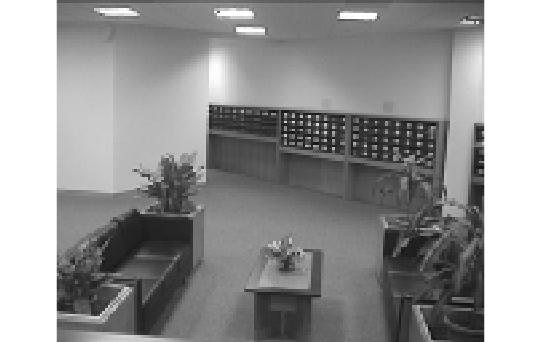}
&
	\includegraphics[scale=.15]{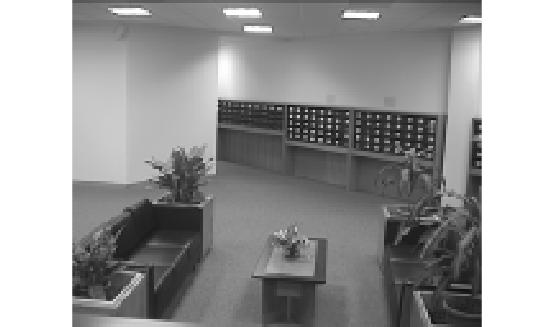}
&
	\includegraphics[scale=.15]{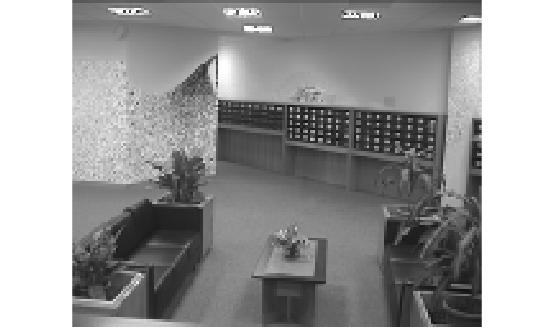}
&
	\includegraphics[scale=.15]{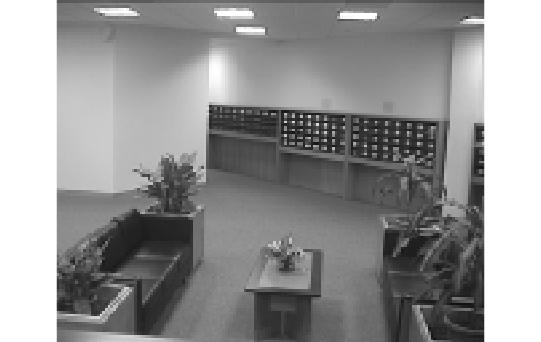}
&
	\includegraphics[scale=.15]{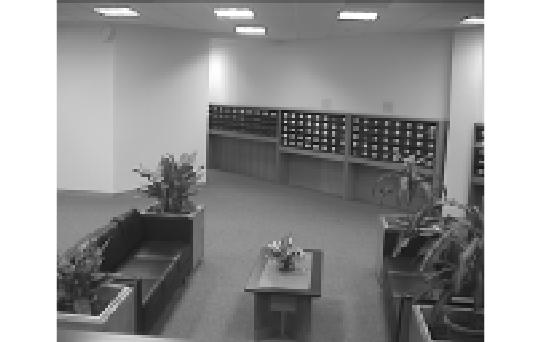}
\\    \newline
	\subcaptionbox{original\label{1}}{\includegraphics[scale=.15]{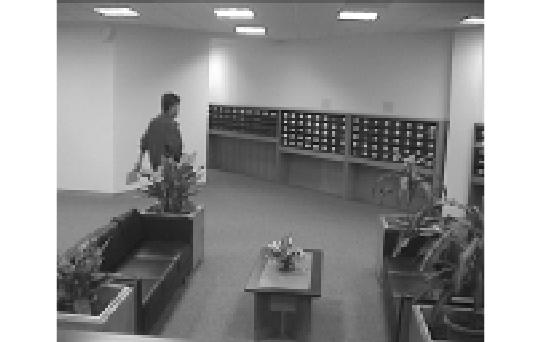}}
&
	\subcaptionbox{NORST\label{1}}{\includegraphics[scale=.15]{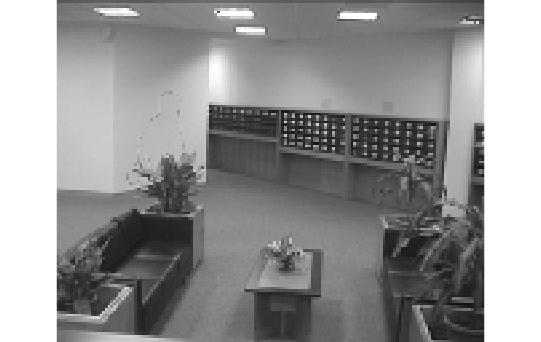}}
&
	\subcaptionbox{AltProj\label{1}}{\includegraphics[scale=.15]{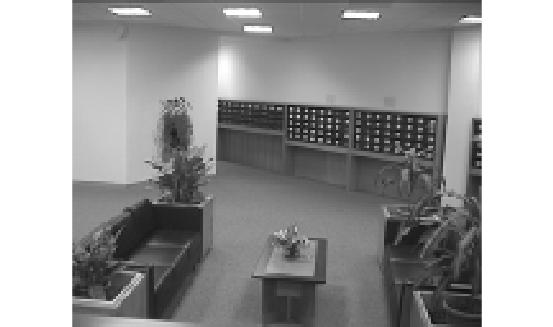}}
&
	\subcaptionbox{RPCA-GD\label{1}}{\includegraphics[scale=.15]{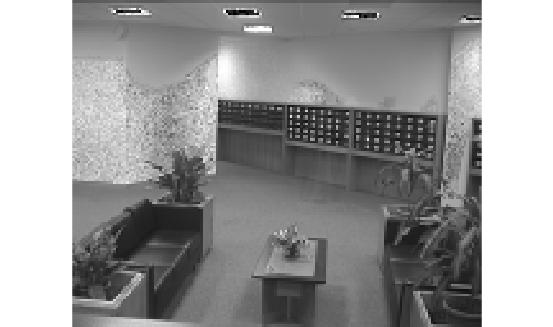}}
&
	\subcaptionbox{GRASTA \label{1}}{\includegraphics[scale=.15]{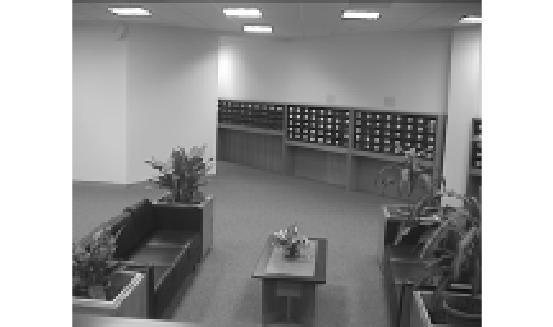}}
&
	\subcaptionbox{PCP\label{1}}{\includegraphics[scale=.15]{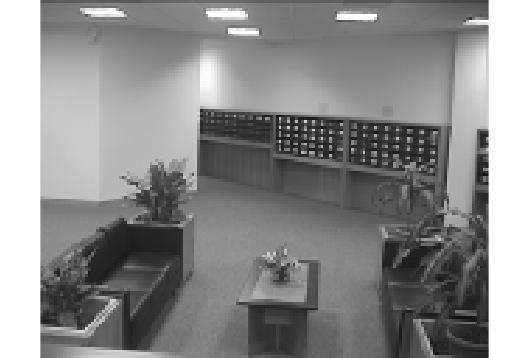}}
\end{tabular}
}
\caption{Comparison of visual performance in Foreground Background separation for the Lobby (LB) dataset. The recovered background images are shown at $t = t_\train + 260, 545, 610$.}
\label{fig:lb_full}
\end{center}
\end{figure}

{\em Meeting Room (MR) dataset}: The meeting room sequence is set of $1964$ images of resolution $64 \times 80$. The first $1755$ frames consists of outlier-free data. Henceforth, we consider only the last $1209$ frames. For NORST, we used $t_\train=400$. In the first $400$ frames, a person wearing a black shirt walks in, writes something on the board and goes back. In the subsequent frames, the person walks in with a white shirt. This is a challenging video sequence because the color of the person and the color of the curtain are hard to distinguish. NORST is able to perform the separation at around $43$ frames per second. We present the results in Fig. \ref{fig:mr_full}

{\em Lobby (LB) dataset}: This dataset contains $1555$ images of resolution $128 \times 160$. The first $341$ frames are outlier free. Here we use the first $400$ ``noisy'' frames as training data. The Alt Proj algorithm is used to obtain an initial estimate with rank, $r = 40$. The parameters used in all algorithms are exactly the same as above. NORST achieves a ``test'' processing rate of $16$ frames-per-second. We present the results in Fig. \ref{fig:lb_full}

\section*{Acknowledgments}
The authors would like to thank Praneeth Netrapalli and Prateek Jain of Microsoft Research India for fruitful discussions on strengthening the guarantee by removing assumptions on subspace change model.

\bibliographystyle{IEEEbib}
\bibliography{../../bib/tipnewpfmt_kfcsfullpap}

\appendices
\counterwithin{theorem}{section}


\section{Proof of Theorem \ref{thm1}}\label{sec:proof}
We divide the proof into $3$ parts for better clarity. We first prove the $\vt=0$ case for NORST (Algorithm \ref{algo:auto-reprocs-pca}), then prove the correctness of Offline NORST, and finally explain the changes needed when $\vt \neq 0$.

\subsection{Proof with $\vt = 0$}


%


\begin{remark}[Deriving the long expression for $K$ given in the discussion]
We have used $\outfracrow^\alpha \le b_0$ with $b_0 = 0.01/f^2$ throughout the analysis in order to simplify the proof. If we were not to do this, and if we used \cite{pca_dd}, it is possible to show that the ``decay rate'' $q_k$ is of the form $q_{k} = (c_2 \sqrt{b_0} f)^k q_0$ from which it follows that to obtain an $\zz$-accurate approximation of the subspace it suffices to have
\begin{align*}
K = \left \lceil \frac{\log \left( c_1\semax / \zz \right)}{-\log(c_2 \sqrt{b_0} f)}  \right \rceil.
\end{align*}
\label{K_long_rem}
\end{remark}

We first prove Theorem \ref{thm1} for the case when $t_j$'s are known, i.e.,  correctness of Algorithm \ref{norst_basic}.
\begin{proof}[Proof of Theorem \ref{thm1} with assuming $t_j$ known]
 In this case $\that_j = t_j$. The proof is an easy consequence of Lemmas \ref{lem:reprocspcalemone} and \ref{lem:reprocspcalemk}. Recall that $\Gamma_{j, K} \subseteq \Gamma_{j, K-1} \subseteq \cdots \Gamma_{j, 0}$ and $\Gamma_{J, K} \subseteq \Gamma_{J-1, K} \subseteq \cdots \subseteq \Gamma_{1, K}$. To show that the conclusions of the Theorem hold, it suffices to show that $\Pr(\Gamma_{J, K} | \Gamma_{0,0}) \geq 1 - 10dn^{-10}$. Using the chain rule of probability,
\begin{align*}
\Pr(\Gamma_{J, K} | \Gamma_{1, 0}) &= \Pr(\Gamma_{J,K}, \Gamma_{J-1, K}, \cdots, \Gamma_{1, K} | \Gamma_{1, 0}) \\
&= \prod_{j = 1}^{J} \Pr(\Gamma_{j, K} | \Gamma_{j, 0}) = \prod_{j=1}^{J} \Pr(\Gamma_{j,K}, \Gamma_{j, K-1}, \cdots, \Gamma_{j, 1} | \Gamma_{j, 0}) \\
&= \prod_{j = 1}^{J} \prod_{k=1}^K \Pr(\Gamma_{j, k} | \Gamma_{j, k-1}) \overset{(a)}{\geq} (1 - 10 n^{-10})^{JK} \geq 1 - 10 J K n^{-10}.
\end{align*}
where $(a)$ used $\Pr(\Gamma_{j, 1} | \Gamma_{j, 0}) \geq 1 - 10n^{-10}$ from Lemma \ref{lem:reprocspcalemone} and $\Pr(\Gamma_{j, k} | \Gamma_{j, k-1}) \geq 1 - 10n^{-10}$ from Lemma \ref{lem:reprocspcalemk}.
\end{proof}

\begin{proof}[Proof of Theorem \ref{thm1}]
Define
\[
\that_{j-1,fin}: =  \that_{j-1} + K \alpha, \  t_{j,*}=\that_{j-1,fin} + \left \lceil \frac{t_j - \that_{j-1,fin}}{\alpha} \right \rceil \alpha
\]
Thus, $\that_{j-1,fin}$ is the time at which the $(j-1)$-th subspace update is complete; w.h.p., this occurs before $t_j$. With this assumption, $t_{j,*}$ is such that $t_j$ lies in the interval $[t_{j,*}-\alpha+1,t_{j,*}]$.

\noindent Recall from the algorithm that we increment $j$ to $j+1$ at $t= \that_j+K\alpha:= \that_{j,fin}$.
Define the events
\ben
\item $\mathrm{Det0}:= \{\that_j = t_{j,*} \} = \{\lambda_{\max}(\frac{1}{\alpha} \sum_{t= t_{j,*}-\alpha+1}^{t_{j,*}} \bphi \lhat_t \lhat_t'\bphi) > \lthres\}$ and
\\ $\mathrm{Det1}:= \{\that_j = t_{j,*} + \alpha\} = \{\lambda_{\max}(\frac{1}{\alpha} \sum_{t= t_{j,*} +1}^{t_{j,*}+\alpha} \bphi \lhat_t \lhat_t'\bphi) > \lthres\} $,
\item $\mathrm{\subup}:=\cap_{k=1}^K  \mathrm{\subup}_k$ where $\mathrm{\subup}_k:= \{\SE(\Phat_{j,k}, \P_{j}) \le q_k\}$,

\item $\mathrm{NoFalseDets}:= \{\text{for all $\J^\alpha \subseteq [\that_{j,fin}, t_{j+1})$, } \lambda_{\max}(\frac{1}{\alpha} \sum_{t \in \J^\alpha} \bphi \lhat_t \lhat_t'\bphi) \le \lthres\}$
\item $\Gamma_{0,\ed}:= \{\SE(\Phat_0, \P_0) \le 0.25 \}$,
\item  $\Gamma_{j,\ed}:= \Gamma_{j-1,\ed} \cap
\big( (\mathrm{Det0} \cap \mathrm{\subup} \cap \mathrm{NoFalseDets}) \cup
(\overline{\mathrm{Det0}} \cap \mathrm{Det1} \cap \mathrm{\subup} \cap \mathrm{NoFalseDets}) \big)$.
\een

Let $p_0$ denote the probability that, conditioned on $\Gamma_{j-1,\ed}$, the change got detected at $t=t_{j,*}$, i.e., let
\[
p_0:= \Pr(\mathrm{Det0}|\Gamma_{j-1,\ed}).
\]
Thus, $\Pr(\overline{\mathrm{Det0}}|\Gamma_{j-1,\ed}) = 1- p_0$. It is not easy to bound $p_0$. However, as we will see, this will not be needed. Assume that $\Gamma_{j-1,\ed} \cap \overline{\mathrm{Det0}}$ holds. Consider the interval $\J^\alpha: = [t_{j,*}, t_{j,*}+\alpha)$. This interval starts at or after $t_j$, so, for all $t$ in this interval, the subspace has changed. For this interval, $\bphi = \I - \Phat_{j-1} \Phat_{j-1}{}'$. 
Applying the first item of Lemma \ref{lem:sschangedet}, w.p. at least $1-10n^{-10}$,
\[
\lambda_{\max} \left(\frac{1}{\alpha} \sum_{t \in \J^\alpha} \bphi \lhat_t \lhat_t'\bphi \right) \geq \lthres
\]
and thus $\that_j = t_{j,*} + \alpha$.
In other words,
\[
\Pr(\mathrm{Det1} | \Gamma_{j-1,\ed} \cap \overline{\mathrm{Det0}}) \ge 1 - 10n^{-10}.
\]

Conditioned on $\Gamma_{j-1,\ed} \cap \overline{\mathrm{Det0}} \cap \mathrm{Det1}$, the first SVD step is done at $t= \that_j + \alpha = t_{j,*} + 2\alpha$ and the subsequent steps are done every $\alpha$ samples. We can prove Lemma \ref{lem:reprocspcalemone} with $\Gamma_{j,0}$ replaced by $\Gamma_{j,\ed} \cap \overline{\mathrm{Det0}} \cap \mathrm{Det1}$ and Lemma \ref{lem:reprocspcalemk} with $\Gamma_{j,k-1}$ replaced by $\Gamma_{j,\ed} \cap \overline{\mathrm{Det0}} \cap \mathrm{Det1} \cap \subup_1 \cap \cdots \cap \subup_{k-1}$ and with the $k$-th SVD interval being $\J_k:=[\that_j+(k-1)\alpha, \that_j + k \alpha)$. Applying Lemmas \ref{lem:reprocspcalemone}, and \ref{lem:reprocspcalemk} for each $k$, we get
\[
\Pr(\subup |\Gamma_{j-1,\ed} \cap \overline{\mathrm{Det0}} \cap \mathrm{Det1}) \ge ( 1 - 10n^{-10})^{K+1}.
\]
We can also do a similar thing for the case when the  change is detected at $t_{j,*}$, i.e. when $\mathrm{Det0}$ holds. In this case, we replace $\Gamma_{j, 0}$ by $\Gamma_{j,\ed} \cap \mathrm{Det0}$ and $\Gamma_{j, k}$ by $\Gamma_{j,\ed} \cap \mathrm{Det0} \cap \subup_1 \cap \cdots \cap \subup_{k-1}$ and conclude that
\[
\Pr(\subup|\Gamma_{j-1,\ed} \cap \mathrm{Det0}) \ge ( 1 - 10n^{-10})^{K}.
\]

Finally consider the $\mathrm{NoFalseDets}$ event. First, assume that $\Gamma_{j-1,\ed} \cap \mathrm{Det0} \cap \subup$ holds.  Consider any interval $\J^\alpha \subseteq [\that_{j,fin}, t_{j+1})$. In this interval, $\Phat_{(t)} = \Phat_j$, $\bphi = \I -  \Phat_j \Phat_j{}'$ and $\SE(\Phat_j,\P_j) \le \zz$. Using the second part of Lemma \ref{lem:sschangedet} we conclude that w.p. at least $1- 10n^{-10}$,
\[
\lambda_{\max} \left(\frac{1}{\alpha} \sum_{t \in \J^\alpha} \bphi \lhat_t \lhat_t'\bphi \right)  < \lthres
\]
Since $\mathrm{Det0}$ holds, $\that_j = t_{j,*}$.
Thus, we have a total of $\lfloor \frac{t_{j+1} - t_{j,*} - K \alpha - \alphadel}{\alpha} \rfloor$ intervals $\J^\alpha$ that are subsets of $[\that_{j,fin}, t_{j+1})$. Moreover, $\lfloor \frac{t_{j+1} - t_{j,*} - K \alpha - \alphadel}{\alpha} \rfloor \le \lfloor \frac{t_{j+1} - t_j - K \alpha - \alphadel}{\alpha} \rfloor \le \lfloor \frac{t_{j+1} - t_j}{\alpha} \rfloor - (K+1)$ since $\alpha \le \alphadel$.
Thus,
\[
\Pr(\mathrm{NoFalseDets} | \Gamma_{j-1,\ed} \cap \mathrm{Det0} \cap \subup) \ge (1 - 10n^{-10})^{\lfloor \frac{t_{j+1} - t_j}{\alpha} \rfloor - (K)}
\]
On the other hand, if we condition on $\Gamma_{j-1,\ed} \cap \overline{\mathrm{Det0}} \cap \mathrm{Det1} \cap \subup$, then $\that_j = t_{j,*} + \alpha$. Thus,
\[
\Pr(\mathrm{NoFalseDets} | \Gamma_{j-1,\ed} \cap \overline{\mathrm{Det0}} \cap \mathrm{Det1} \cap \subup) \ge (1 - 10n^{-10})^{\lfloor \frac{t_{j+1} - t_j}{\alpha} \rfloor - (K+1)}
\]
We can now combine the above facts to bound $\Pr(\Gamma_{j,\ed}|\Gamma_{j-1,\ed})$. Recall that $p_0:= \Pr(\mathrm{Det0}|\Gamma_{j-1,\ed})$.
Clearly, the events $(\mathrm{Det0} \cap \subup \cap \mathrm{NoFalseDets})$ and $(\overline{\mathrm{Det0}} \cap \mathrm{Det1} \cap \subup \cap \mathrm{NoFalseDets})$ are disjoint. Thus,
\begin{align*}
& \Pr(\Gamma_{j,\ed}|\Gamma_{j-1,\ed}) \\
& = p_0 \Pr(\subup \cap \mathrm{NoFalseDets} |\Gamma_{j-1,\ed} \cap \mathrm{Det0})  \\
& + (1-p_0) \Pr(\mathrm{Det1}|\Gamma_{j-1,\ed} \cap \overline{\mathrm{Det0}}) \Pr(\subup \cap \mathrm{NoFalseDets} |\Gamma_{j-1,\ed}\cap \overline{\mathrm{Det0}} \cap \mathrm{Det1}) \\
& \ge p_0 ( 1 - 10n^{-10})^{K} (1 - 10n^{-10})^{\lfloor \frac{t_{j+1} - t_j}{\alpha} \rfloor - (K)} \\
& + (1-p_0) ( 1 - 10n^{-10}) ( 1 - 10n^{-10})^{K}  (1 - 10n^{-10})^{\lfloor \frac{t_{j+1} - t_j }{\alpha} \rfloor - (K+1)}  \\
& =  ( 1 - 10n^{-10})^{\lfloor \frac{t_{j+1} - t_j}{\alpha} \rfloor}
\ge ( 1 - 10n^{-10})^{t_{j+1}-t_j}.
\end{align*}
Since the events $\Gamma_{j,\ed}$ are nested, the above implies that
\begin{align*}
\Pr(\Gamma_{J,\ed}|\Gamma_{0,\ed}) = \prod_j \Pr(\Gamma_{j,\ed}|\Gamma_{j-1,\ed}) &\ge \prod_j ( 1 - 10n^{-10})^{t_{j+1}-t_j} = ( 1 - 10n^{-10})^d \\
&\ge  1 - 10d n^{-10}.
\end{align*}
\end{proof}

\subsection{Proof of Offline NORST}

We now provide the proof of the Offline Algorithm (lines 26-30 of Algorithm \ref{algo:auto-reprocs-pca}).
\begin{proof}[Proof of Offline NORST]
The proof of this follows from the conclusions of the online counterpart. Note that the subspace estimate in this case is not necessarily $r$ dimensional. This is essentially done to ensure that in the time intervals when the subspace has changed, but has not yet been updated, the output of the algorithm is still an $\zz$-approximate solution to the true subspace. In other words, for $t \in [\that_{j-1} + K\alpha, t_{j}]$, the true subspace is $\P_{j-1}$ and so in this interval
\begin{align*}
\SE(\Phat_{(t)}^{\offline}, \P_{j-1}) &= \SE([\Phat_{j-1}, (\I - \Phat_{j-1}\Phat_{j-1}{}') \Phat_j], \P_{j-1}) \\
&\overset{(a)}{=} \norm{[\I - (\I - \Phat_{j-1}\Phat_{j-1}{}')\Phat_j\Phat_j{}'(\I - \Phat_{j-1}\Phat_{j-1}{}')][\I - \Phat_{j-1}\Phat_{j-1}{}']\P_{j-1}} \\
&\leq \norm{[\I - (\I - \Phat_{j-1}\Phat_{j-1}{}')\Phat_j\Phat_j{}'(\I - \Phat_{j-1}\Phat_{j-1}{}')]}\SE(\Phat_{j-1}, \P_{j-1}) \leq \zz
\end{align*}
where $(a)$ follows because for orthogonal matrices $\P_1$ and $\P_2$,
\begin{align*}
\I - \P_1\P_1{}' - \P_2\P_2{}' =  (\I - \P_1\P_1{}')(\I - \P_2\P_2{}') = (\I - \P_2\P_2{}')(\I - \P_1\P_1{}')
\end{align*}
Now consider the interval $t \in [t_j, \that_{j} + K\alpha]$. In this interval, the true subspace is $\P_j$ and we have back propagated the $\zz$-approximate subspace $\Phat_j$ in this interval. We first note that $\Span([\Phat_{j-1}, (\I - \Phat_{j-1}\Phat_{j-1}{}')\Phat_j]) = \Span([\Phat_{j}, (\I - \Phat_{j}\Phat_{j}{}')\Phat_{j-1}])$. And so we use the latter to quantify the error in this interval as
\begin{align*}
\SE(\Phat_{(t)}^{\offline}, \P_{j}) &= \SE([\Phat_j, (\I - \Phat_j\Phat_j{}') \Phat_{j-1}], \P_{j}) \\
&= \norm{[\I - (\I - \Phat_{j}\Phat_{j}{}')\Phat_{j-1}\Phat_{j-1}{}'(\I - \Phat_{j}\Phat_{j}{}')][\I - \Phat_{j}\Phat_{j}{}']\P_{j}} \\
&\leq \norm{[\I - (\I - \Phat_{j}\Phat_{j}{}')\Phat_{j-1}\Phat_{j-1}{}'(\I - \Phat_{j}\Phat_{j}{}')]}\SE(\Phat_{j}, \P_{j}) \leq \zz
\end{align*}
\end{proof}

\subsection{Proof with $\vt \neq 0$}
In this section we analyze the ``stable'' version of RST, i.e., we let $\vt \neq 0$.
\begin{proof}
The proof is very similar to that of the noiseless case but there are two differences due to the additional noise term. The first is the effect of the noise on the sparse recovery step. The approach to address this is straightforward. We note that the error now seen in the sparse recovery step is bounded by $\|\bpsi(\lt + \vt)\|$ and using the bound on $\|\vt\|$, we observe that the error only changes by a constant factor. In particular, we can show that $\|\et\| \leq 2.4 (2\zz + \semax) \sqrt{\eta r \lambda^+}$. The other crucial difference is in updating subspace estimate. To deal with the additional uncorrelated noise, we use the following result.

Remark 4.18 of \cite{pca_dd} states the following for the case where the data contains unstructured noise $\vt$ that satisfies the assumptions of Theorem \ref{thm1}. Thus, in the notation of \cite{pca_dd}, $\lambda_v^+ \le c \zz^2 \lambda^+$ and $r_v = r$. The following result also assumes $r,n$ large enough so that $(r+\log n) \le r \log n$.
\begin{corollary}[Noisy PCA-SDDN]\label{cor:noisy_pca_sddn}
Given data vectors $\yt := \lt + \wt + \zt = \lt + \I_{\T_t} \M_{s,t} \lt + \zt$, $t=1,2,\dots,\alpha$, where $\T_t$ is the support set of $\wt$, and $\lt$ satisfying the model detailed above. Furthermore, $\max_t \|\M_{s,t} \P\|_2 \le q < 1$. $\zt$ is small uncorrelated noise such that $\ep[\zt \zt{}'] = \bm{\Sigma}_z$, $ \max_t \|\zt\|^2 := b_z^2 < \infty$. Define $\lambda_z^+ := \lambda_{\max}(\bm{\Sigma}_z)$ and $r_z$ as the ``effective rank'' of $\zt$ such that $b_z^2 = r_z \lambda^+_z$. Then for any $\alpha \geq \alpha_0$, where
\begin{align*}
\alpha_0 := \frac{C}{\varepsilon_{\text{SE}}^2}  \max\left\{\eta q^2 f^2 r \log n,\  \frac{b_z^2}{\lambda^-} f \log n \right\}
\end{align*}
the fraction of nonzeroes in any row of the noise matrix $[\w_1, \w_2, \dots, \w_\alpha]$ is bounded by $\bz$, and
\begin{align*}
3 \sqrt{\bz} q f + \lambda_z^+ / \lambda^- \le \frac{0.9  \varepsilon_{\text{SE}}}{1+\varepsilon_{\text{SE}}}
\end{align*}
For an $\alpha \ge \alpha_0$, let $\Phat$ be the matrix of top $r$ eigenvectors of $\D:=\frac{1}{\alpha} \sum_t \yt \yt'$. With probability at least $1- 10n^{-10}$, $\SE(\Phat,\P) \le \varepsilon_{\text{SE}}$.
\end{corollary}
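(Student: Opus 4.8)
The statement is quoted from \cite{pca_dd} (Remark~4.18), so one option is simply to cite it; for a self-contained argument the plan is to repeat the proof of Theorem~\ref{cor:pcasddn} with one extra perturbation term. First I would write $\D=\frac1\alpha\sum_t\yt\yt{}'$, substitute $\yt=\lt+\wt+\zt$, and split $\D$ into the signal part $\frac1\alpha\sum_t\lt\lt{}'$ and a perturbation $\mathcal H$ made of five groups of terms: $\frac1\alpha\sum_t(\lt\wt{}'+\wt\lt{}')$, $\frac1\alpha\sum_t\wt\wt{}'$, $\frac1\alpha\sum_t(\lt\zt{}'+\zt\lt{}')$, $\frac1\alpha\sum_t(\wt\zt{}'+\zt\wt{}')$, and $\frac1\alpha\sum_t\zt\zt{}'$. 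Since $\lt=\P\at$, the signal part is supported on $\range(\P)$ and has rank $\le r$, so the $\sin\theta$ bound of \cite{pca_dd} reduces the problem to (i) a lower bound $\lambda_r(\frac1\alpha\sum_t\lt\lt{}')\ge(1-\varepsilon_0)\lambda^-$ and (ii) operator-norm bounds on the blocks of $\mathcal H$, combined so that $\SE(\Phat,\P)$ is at most, up to numerical constants, (sum of perturbation bounds)$/\lambda^-$.

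The three $\zt$-free pieces are bounded exactly as in the proof of Theorem~\ref{cor:pcasddn}: using $\wt=\itt\M_{s,t}\lt$, $\|\M_{s,t}\P\|\le q$, the row-sparsity bound $\bz$, Cauchy--Schwarz for matrix sums (where each cross-row contribution acquires a $\sqrt{\bz}$), and matrix Bernstein \cite{tail_bound} / Vershynin's sub-Gaussian bound \cite{vershynin}, one gets $\|\frac1\alpha\sum_t\lt\lt{}'-\P\Lam\P{}'\|\le\varepsilon_0\lambda^-$, $\|\frac1\alpha\sum_t\lt\wt{}'\|\le(1+\varepsilon)\sqrt{\bz}q\lambda^+$, and $\|\frac1\alpha\sum_t\wt\wt{}'\|\le(1+\varepsilon)\sqrt{\bz}q^2\lambda^+$, valid once $\alpha\gtrsim\eta q^2f^2r\log n/\varepsilon_{\text{SE}}^2$ (matrix-Bernstein dimension $r$, using $r+\log n\le r\log n$). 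For the $\zt$ pieces: as $\zt$ is zero-mean and independent of $\lt,\wt$, the cross-terms $\frac1\alpha\sum_t\lt\zt{}'$ and $\frac1\alpha\sum_t\wt\zt{}'$ are mean-zero and concentrate to order $\varepsilon\lambda^-$; and $\E[\frac1\alpha\sum_t\zt\zt{}']=\bm{\Sigma}_z$ with $\|\bm{\Sigma}_z\|=\lambda_z^+$ and $\|\zt\zt{}'\|\le b_z^2=r_z\lambda_z^+$, so the variance proxy $\|\E[(\zt\zt{}')^2]\|\le b_z^2\lambda_z^+$ and matrix Bernstein give $\|\frac1\alpha\sum_t\zt\zt{}'-\bm{\Sigma}_z\|\le\varepsilon\lambda^-$ once $\alpha\gtrsim\frac{b_z^2}{\lambda^-}f\log n/\varepsilon_{\text{SE}}^2$ (effective-rank factor $r_z$, absorbing the $\log n$ exactly as in the hypothesis on $r,n$) --- this is the second term inside the $\max$ defining $\alpha_0$.

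Combining, $\|\mathcal H\|$ and the relevant blocks appearing in the $\sin\theta$ numerator are at most $\lambda_z^++(1+\varepsilon)(2\sqrt{\bz}q+\sqrt{\bz}q^2)\lambda^++(\text{small }\zt\text{-cross terms})$, while the denominator is at least $(1-\varepsilon_0)\lambda^-$ minus the same quantity; dividing, the $\lambda_z^+$ term contributes the irreducible floor $\lambda_z^+/\lambda^-$ and the SDDN terms contribute of order $3\sqrt{\bz}qf$, so under the hypothesis $3\sqrt{\bz}qf+\lambda_z^+/\lambda^-\le 0.9\,\varepsilon_{\text{SE}}/(1+\varepsilon_{\text{SE}})$ and for small enough numerical slacks $\varepsilon_0,\varepsilon$ the ratio is at most $\varepsilon_{\text{SE}}$. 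A union bound over the $O(1)$ concentration events (each failing with probability $\le cn^{-10}$) yields the claimed $1-10n^{-10}$.

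The only genuinely new step --- the rest being a copy of the Theorem~\ref{cor:pcasddn} argument --- is the concentration of $\frac1\alpha\sum_t\zt\zt{}'$ (and of the $\lt\zt{}'$ cross-term) for $\zt$ assumed only \emph{bounded}, not sub-Gaussian. Here one must invoke matrix Bernstein with the variance proxy $b_z^2\lambda_z^+$ and verify that the identity $b_z^2=r_z\lambda_z^+$ is exactly what converts the generic $(r_z+\log n)/\varepsilon^2$-type sample requirement into the stated $\frac{b_z^2}{\lambda^-}f\log n/\varepsilon_{\text{SE}}^2$ form. I expect this, rather than the signal/SDDN bookkeeping, to be the main obstacle.
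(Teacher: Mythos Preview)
Your proposal is correct. The paper itself does not prove Corollary~\ref{cor:noisy_pca_sddn}; it is stated verbatim as Remark~4.18 of \cite{pca_dd} and simply invoked, exactly as you note in your first sentence. Your self-contained sketch---extending the Theorem~\ref{cor:pcasddn} argument by adding the $\zt$ cross- and self-terms and controlling them via matrix Bernstein with the effective-rank parametrization $b_z^2=r_z\lambda_z^+$---is the natural reconstruction of the cited proof and matches how the paper uses the result (cf.\ the last two items of Lemma~\ref{lem:concm}, which handle precisely the $\lt\vt{}'$ and $\vt\vt{}'$ concentration).
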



We illustrate how applying Corollary \ref{cor:noisy_pca_sddn} changes the subspace update step. Consider the first subspace estimate, i.e., we are trying to get an estimate $\Phat_{j,1}$ in the $j$-th subspace change time interval. Define $(\e_{\l})_t = \I_{\Tt}{}' \left( \bpsi_{\Tt}{}' \bpsi_{\Tt}\right)^{-1} \bpsi_{\Tt}{}' \lt$ and $(\e_{\v})_t = \I_{\Tt}{}' \left( \bpsi_{\Tt}{}' \bpsi_{\Tt}\right)^{-1} \bpsi_{\Tt}{}' \vt$. We estimate the new subspace, $\Phat_{j, 1}$ as the top $r$ eigenvectors of $\frac{1}{\alpha}\sum_{t = \that_j}^{\that_j + \alpha - 1} \lhat_t \lhat_t{}'$. In the setting above, $\yt \equiv \lhat_t$, $\wt \equiv (\e_{\l})_t$, $\zt \equiv (\e_{\v})_t$, $\lt \equiv \lt$ and $\M_{s, t} = -\left( \bpsi_{\Tt}{}' \bpsi_{\Tt}\right)^{-1} \bpsi_{\Tt}{}'$ and so $\norm{\M_{s, t} \P} = \| \left( \bpsi_{\Tt}{}' \bpsi_{\Tt}\right)^{-1} \bpsi_{\Tt}{}' \P_j\| \leq \phi^+ (\zz + \SE(\P_{j-1}, \P_j)) := q_0$. Applying Corollary \ref{cor:noisy_pca_sddn} with $q \equiv q_0$, and recalling that the support, $\Tt$ satisfies the assumptions similar to that of the noiseless case and hence $b_0 \equiv \outfracrow^\alpha$. Now, setting $\varepsilon_{\text{SE}, 1} = q_0/4$, observe that we require
\begin{align*}
(i)\ \sqrt{b_0} q_0 f \leq \frac{0.5 \cdot 0.9 \varepsilon_{\text{SE}, 1}}{1 + \varepsilon_{\text{SE}, 1}}, \quad \text{and}, \quad(ii)\  \frac{\lambda_z^+}{\lambda^-} \leq \frac{0.5 \cdot 0.9 \varepsilon_{\text{SE}, 1}}{1 + \varepsilon_{\text{SE}, 1}}.
\end{align*}
which holds if (i) $\sqrt{b_0}f \leq 0.12$, and (ii) is satisfied as follows from using the assumptions on $\vt$ as follows. It is immediate to see that $\lambda_z^+ / \lambda^- \leq \zz^2  \leq .2 \varepsilon_{\text{SE}, 1}$. Furthermore, the sample complexity term remains unchanged due to the choice of $\vt$. To see this, notice that the only extra term in the $\alpha_0$ expression is $b_z^2 f \log n/ (\varepsilon_{\text{SE}}^2 \lambda^-)$ which simplifies to $\zz^2 f^2 r \log n / \varepsilon_{\text{SE}}^2$ which is what was required even in the noiseless case. Thus, from Corollary \ref{cor:noisy_pca_sddn}, with probability at least $1 - 10n^{-10}$, $\SE(\Phat_{j, 1}, \P_j) \leq \varepsilon_{\text{SE}, 1} = q_0/4$. The argument in other subspace update stages will require the same changes and follows without any further differences.

The final difference is in the subspace detection step. Notice that here too, in general, there will be some extra assumption required to provably detect the subspace change. However, due to the bounds assumed on $\|\vt\|$ and the bounds on using $\epsilon_{l,v} = \epsilon_{v,v} = 0.01 \zz$, we see that (i) the extra sample complexity term is the same as that required in the noiseless case. 
\end{proof}

\section{Proof of Lemmas \ref{lem:sumprinang} and \ref{lem:concm}}\label{app:concm}

\begin{proof}[Proof of Lemma \ref{lem:sumprinang}]
The proof follows from triangle inequality as
\begin{align*}
\SE(\Aa, \Ca) &= \norm{(\I - \Aa \Aa{}') \Ca} = \norm{(\I - \Aa \Aa{}')(\I - \Ba \Ba{}' + \Ba \Ba{}') \Ca} \\
&\leq \norm{(\I - \Aa \Aa{}')(\I - \Ba \Ba{}')\Ca} + \norm{(\I - \Aa \Aa{}') \Ba \Ba{}' \Ca} \\
&\leq \norm{(\I - \Aa \Aa{}')} \SE(\Ba, \Ca) + \SE(\Aa, \Ba) \norm{\Ba{}' \Ca} \leq \Delta_1 + \Delta_2
\end{align*}
\end{proof}

We need the following results for proving Lemma \ref{lem:concm}.
\renewcommand{\zt}{\bm{Z}_t}
\begin{theorem}[Cauchy-Schwartz for sums of matrices \cite{rrpcp_perf}] \label{CSmat}
For matrices $\bm{X}$ and $\bm{Y}$ we have
\begin{eqnarray}\label{eq:csmat}
\norm{\frac{1}{\alpha} \sum_t \bm{X}_t \bm{Y}_{t}{}'}^2 \leq \norm{\frac{1}{\alpha} \sum_t \bm{X}_t \bm{X}_t{}'} \norm{\frac{1}{\alpha} \sum_t \bm{Y}_t \bm{Y}_t{}'}
\end{eqnarray}
\end{theorem}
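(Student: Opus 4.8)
The plan is to deduce this matrix statement from the ordinary (vector) Cauchy--Schwartz inequality by using the variational characterization of the induced $\ell_2$ norm. Write $\bm{M} := \frac{1}{\alpha}\sum_t \bm{X}_t \bm{Y}_t{}'$ and recall that $\norm{\bm{M}} = \max_{\norm{\bm{u}}=1,\ \norm{\bm{v}}=1} \bm{u}{}'\bm{M}\bm{v}$, where $\bm{u}$ and $\bm{v}$ range over unit vectors of the appropriate (generally different) dimensions. First I would fix such $\bm{u},\bm{v}$ and expand
\begin{align*}
\bm{u}{}' \bm{M} \bm{v} = \frac{1}{\alpha}\sum_t \bm{u}{}' \bm{X}_t \bm{Y}_t{}' \bm{v} = \frac{1}{\alpha}\sum_t \langle \bm{X}_t{}'\bm{u},\ \bm{Y}_t{}'\bm{v}\rangle,
\end{align*}
so that $\alpha\,\bm{u}{}'\bm{M}\bm{v}$ is just the Euclidean inner product of the two stacked (finite) sequences $(\bm{X}_t{}'\bm{u})_t$ and $(\bm{Y}_t{}'\bm{v})_t$.

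Next I would apply vector Cauchy--Schwartz to this inner product and rewrite the resulting sums of squared norms as quadratic forms:
\begin{align*}
\left| \bm{u}{}'\bm{M}\bm{v} \right|
&\le \sqrt{\frac{1}{\alpha}\sum_t \norm{\bm{X}_t{}'\bm{u}}^2}\ \sqrt{\frac{1}{\alpha}\sum_t \norm{\bm{Y}_t{}'\bm{v}}^2} \\
&= \sqrt{\bm{u}{}'\left(\frac{1}{\alpha}\sum_t \bm{X}_t\bm{X}_t{}'\right)\bm{u}}\ \ \sqrt{\bm{v}{}'\left(\frac{1}{\alpha}\sum_t \bm{Y}_t\bm{Y}_t{}'\right)\bm{v}},
\end{align*}
where I have used $\norm{\bm{X}_t{}'\bm{u}}^2 = \bm{u}{}'\bm{X}_t\bm{X}_t{}'\bm{u}$ and the analogous identity for $\bm{Y}_t$.

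Finally, I would take the maximum over unit $\bm{u}$ and, independently, over unit $\bm{v}$. Since $\frac{1}{\alpha}\sum_t \bm{X}_t\bm{X}_t{}'$ and $\frac{1}{\alpha}\sum_t \bm{Y}_t\bm{Y}_t{}'$ are positive semidefinite, the maxima of the two quadratic forms over the unit spheres equal their largest eigenvalues, which (for p.s.d.\ matrices) equal their operator norms; together with $\norm{\bm{M}}=\max_{\bm u,\bm v}\bm{u}{}'\bm{M}\bm{v}$ this gives $\norm{\bm{M}}\le \sqrt{\norm{\frac{1}{\alpha}\sum_t \bm{X}_t\bm{X}_t{}'}}\,\sqrt{\norm{\frac{1}{\alpha}\sum_t \bm{Y}_t\bm{Y}_t{}'}}$, and squaring yields \eqref{eq:csmat}. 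I do not expect any genuine obstacle here: the only points that deserve a word of care are that $\bm{X}_t$ and $\bm{Y}_t$ must have the same number of columns (so that $\bm{X}_t\bm{Y}_t{}'$ is defined) and that the maximizations over $\bm{u}$ and over $\bm{v}$ are carried out separately, which is precisely what the definition of the induced norm permits; both conditions hold automatically in every application of this lemma made in the paper.
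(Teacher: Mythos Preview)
Your proof is correct. Note, however, that the paper does not actually supply its own proof of this theorem: it is stated with a citation to \cite{rrpcp_perf} and then used as a black box in the proof of Lemma~\ref{lem:concm}. Your argument via the variational characterization of the operator norm together with the scalar Cauchy--Schwartz inequality is the standard proof of this fact and is exactly what one would expect the cited reference to contain.
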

The following theorem is adapted from \cite{tail_bound}.
\begin{theorem}[Matrix Bernstein \cite{tail_bound}]\label{thm:matrix_bern}
Given an $\alpha$-length sequence of $n_1 \times n_2$ dimensional random matrices and a r.v. $X$. Assume the following holds. For all $X \in \mathcal{C}$, (i) conditioned on $X$, the matrices $\Z_t$ are mutually independent, (ii) $\mathbb{P}(\norm{\Z_t} \leq R | X)  = 1$,  and (iii) $\max\left\{\norm{\frac{1}{\alpha}\sum_t \ep{\left[\Z_t{}'\Z_t | X\right]}},\ \norm{\frac{1}{\alpha}\sum_t \ep{\left[\Z_t\Z_t{}' | X\right]}}\right\} \le \sigma^2$. Then, for an $\epsilon > 0$ and for all $X \in \mathcal{C}$,
\begin{align}
\mathbb{P}\left(\norm{\frac{1}{\alpha} \sum_t \Z_t} \leq \norm{\frac{1}{\alpha} \sum_t \ep{\left[\Z_t|X\right]}} + \epsilon\bigg|X\right) \geq 1 - (n_1 + n_2) \exp\left(\frac{-\alpha\epsilon^2}{2\left(\sigma^2 + R \epsilon\right)} \right).
\end{align}
\end{theorem}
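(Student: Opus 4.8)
The plan is to deduce this conditional, non-centered statement from the classical (unconditional, zero-mean) matrix Bernstein inequality of \cite{tail_bound} by first conditioning on $X$ and then centering. Fix any $X \in \mathcal{C}$ and work under $\Pr(\cdot \mid X)$; by hypothesis (i) the $\Z_t$ are then mutually independent. Put $\M_t := \E[\Z_t \mid X]$ and $\bar{\Z}_t := \Z_t - \M_t$, so that the $\bar{\Z}_t$ are conditionally independent and conditionally zero mean. By the triangle inequality,
\begin{align*}
\norm{\frac{1}{\alpha}\sum_t \Z_t} \;\le\; \norm{\frac{1}{\alpha}\sum_t \M_t} + \norm{\frac{1}{\alpha}\sum_t \bar{\Z}_t},
\end{align*}
so it suffices to bound $\Pr\!\left(\norm{\tfrac{1}{\alpha}\sum_t \bar{\Z}_t} > \epsilon \,\middle|\, X\right)$ by $(n_1+n_2)\exp\!\left(\tfrac{-\alpha\epsilon^2}{2(\sigma^2+R\epsilon)}\right)$.

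First I would check the two ingredients needed for the zero-mean bound applied to $\bm{S} := \sum_t \bar{\Z}_t$. For the uniform bound: since $\norm{\Z_t}\le R$ almost surely given $X$, Jensen's inequality gives $\norm{\M_t}\le \E[\norm{\Z_t}\mid X]\le R$, hence $\norm{\bar{\Z}_t}\le 2R$ a.s. For the variance parameter: $\E[\bar{\Z}_t\bar{\Z}_t{}'\mid X] = \E[\Z_t\Z_t{}'\mid X] - \M_t\M_t{}' \preceq \E[\Z_t\Z_t{}'\mid X]$, and similarly for $\bar{\Z}_t{}'\bar{\Z}_t$; summing over $t$ and invoking hypothesis (iii) yields
\begin{align*}
\max\!\left\{\norm{\sum_t \E[\bar{\Z}_t{}'\bar{\Z}_t\mid X]},\; \norm{\sum_t \E[\bar{\Z}_t\bar{\Z}_t{}'\mid X]}\right\} \le \alpha\sigma^2 =: v.
\end{align*}

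Then I would invoke the standard matrix Bernstein inequality \cite{tail_bound} for a sum of $\alpha$ conditionally independent, conditionally zero-mean $n_1\times n_2$ random matrices with a.s. norm bound $L := 2R$ and variance $v := \alpha\sigma^2$: for any $s>0$, $\Pr(\norm{\bm{S}}\ge s \mid X) \le (n_1+n_2)\exp\!\left(\tfrac{-s^2/2}{v + Ls/3}\right)$. Choosing $s = \alpha\epsilon$ (so that $\{\norm{\bm{S}/\alpha}>\epsilon\} = \{\norm{\bm{S}}>\alpha\epsilon\}$) and substituting $v,L$, the exponent becomes
\begin{align*}
\frac{-(\alpha\epsilon)^2/2}{\alpha\sigma^2 + (2R/3)(\alpha\epsilon)} = \frac{-\alpha\epsilon^2/2}{\sigma^2 + \tfrac{2}{3}R\epsilon} \;\le\; \frac{-\alpha\epsilon^2}{2(\sigma^2 + R\epsilon)},
\end{align*}
the last step since $\tfrac{2}{3}R\epsilon \le R\epsilon$. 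As this final bound is free of $X$, it holds for every $X\in\mathcal{C}$, and combined with the triangle inequality above this gives the theorem.

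I do not expect a serious obstacle: the content is entirely the translation of Tropp's inequality into the conditional setting. The only thing to be careful about is the bookkeeping around centering — making sure that replacing $R$ by $2R$ and $\tfrac{1}{3}$ by $\tfrac{2}{3}$ in the Bernstein denominator only weakens the bound, so the clean form $\tfrac{-\alpha\epsilon^2}{2(\sigma^2+R\epsilon)}$ still goes through — and verifying that the hypotheses (conditional mutual independence and conditional second-moment control) are precisely what \cite{tail_bound} requires, so that the cited result applies verbatim under $\Pr(\cdot\mid X)$.
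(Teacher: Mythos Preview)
The paper does not give its own proof of this theorem: it is stated as ``adapted from \cite{tail_bound}'' and then used as a black box in the proof of Lemma \ref{lem:concm}. Your proposal is a correct and standard way to derive the stated conditional, non-centered form from the classical zero-mean matrix Bernstein inequality of \cite{tail_bound}: condition on $X$, center, use $\E[\bar{\Z}_t\bar{\Z}_t{}'\mid X]\preceq \E[\Z_t\Z_t{}'\mid X]$ to keep the variance parameter at $\alpha\sigma^2$, absorb the factor-of-two inflation in the norm bound into the $L/3$ term of Tropp's denominator (since $2R/3\le R$), and finish with the triangle inequality. There is nothing to compare against in the paper itself, and your bookkeeping is sound.
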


The following theorem is adapted from \cite{vershynin}.
\begin{theorem}[Sub-Gaussian Rows \cite{vershynin}]\label{thm:versh}
Given an $N$-length sequence of sub-Gaussian random vectors $\bm{w}_i$ in $\mathbb{R}^{n_w}$, an r.v $X$, and a set $\mathcal{C}$. Assume the following holds. For all $X \in \mathcal{C}$, (i) $\bm{w}_i$ are conditionally independent given $X$; (ii) the sub-Gaussian norm of $\bm{w}_i$ is bounded by $K$ for all $i$. Let $\bm{W}:=[\bm{w}_1, \bm{w}_2, \dots, \bm{w}_N]{}'$.
Then for an $\epsilon \in (0, 1)$ and for all $X \in \mathcal{C}$
\begin{align}
\mathbb{P}\left(\norm{\frac{1}{N}\bm{W}{}'\bm{W} - \frac{1}{N}\ep{\left[\bm{W}{}'\bm{W} | X \right]}} \leq \epsilon \bigg| X\right) \geq 1 - 2\exp\left({n_w} \log 9 - \frac{c \epsilon^2 N}{4K^4}\right).
\end{align}
\end{theorem}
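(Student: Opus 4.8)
The statement is the conditional form of Vershynin's sub-Gaussian covariance concentration bound (the non-isotropic version of \cite{vershynin}); the plan is simply to re-run that argument under the conditional law $\mathbb{P}(\cdot \mid X)$ for an arbitrary fixed $X \in \mathcal{C}$, which is permissible because, by hypothesis, conditioning on $X \in \mathcal{C}$ leaves the $\bm{w}_i$ independent with each having sub-Gaussian norm at most $K$. Fix such an $X$ and write $\bm{M} := \tfrac1N \bm{W}{}'\bm{W} - \tfrac1N \ep[\bm{W}{}'\bm{W} \mid X]$, a symmetric $n_w \times n_w$ matrix.

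First I would reduce the operator-norm control to control of a quadratic form over a finite net. Let $\mathcal{N}$ be a $\tfrac14$-net of the unit sphere $S^{n_w - 1}$ with $|\mathcal{N}| \le 9^{n_w}$. The standard net estimate for symmetric matrices gives $\norm{\bm{M}} \le 2 \max_{x \in \mathcal{N}} | x{}' \bm{M} x |$, and for each unit vector $x$ we have $x{}' \bm{M} x = \tfrac1N \sum_{i=1}^N \big( \langle \bm{w}_i, x \rangle^2 - \ep[\langle \bm{w}_i, x \rangle^2 \mid X] \big)$.

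Next, for each fixed $x \in \mathcal{N}$, I would apply Bernstein's inequality for independent sub-exponential summands under $\mathbb{P}(\cdot \mid X)$. Conditioned on $X \in \mathcal{C}$, the scalars $\langle \bm{w}_i, x \rangle$ are independent and sub-Gaussian with $\psi_2$-norm at most $K$, so $\langle \bm{w}_i, x \rangle^2 - \ep[\langle \bm{w}_i, x \rangle^2 \mid X]$ is centered and sub-exponential with $\psi_1$-norm at most $C K^2$. Bernstein then gives
\[
\mathbb{P}\!\left( | x{}' \bm{M} x | \ge \tfrac{\epsilon}{2} \,\big|\, X \right) \le 2 \exp\!\left( - c N \min\!\Big( \tfrac{\epsilon^2}{c_1 K^4},\ \tfrac{\epsilon}{c_2 K^2} \Big) \right).
\]
Since $\epsilon \in (0,1)$ and $K^2$ is bounded below by an absolute constant, the quadratic term controls the minimum, and after relabelling the constant this is at most $2 \exp(- c \epsilon^2 N / (4 K^4))$. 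Finally I would take a union bound over the at most $9^{n_w}$ points of $\mathcal{N}$: since $\{ \norm{\bm{M}} > \epsilon \} \subseteq \bigcup_{x \in \mathcal{N}} \{ |x{}'\bm{M}x| > \epsilon/2 \}$, this yields $\mathbb{P}( \norm{\bm{M}} > \epsilon \mid X ) \le 9^{n_w} \cdot 2 \exp(-c\epsilon^2 N / (4K^4)) = 2 \exp( n_w \log 9 - c\epsilon^2 N / (4K^4) )$, which is the claim, uniformly over $X \in \mathcal{C}$.

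There is no real obstacle here: the result is essentially known, and the conditioning is handled by observing that every step is valid for each fixed $X \in \mathcal{C}$ (the conditional independence and the uniform sub-Gaussian norm bound are exactly what the unconditional proof uses). The only point requiring care is constant-tracking — one must verify that the restriction $\epsilon \in (0,1)$ is precisely what forces the linear term in the sub-exponential Bernstein bound to be dominated by the quadratic term, so that the final exponent takes the clean single-regime form $-c\epsilon^2 N / (4K^4)$, exactly matching the unconditional statement of \cite{vershynin}.
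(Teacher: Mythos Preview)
The paper does not prove this theorem; it is stated without proof as ``adapted from \cite{vershynin}'' and used as a black-box tool in the proof of Lemma~\ref{lem:concm}. Your proposal correctly reproduces the standard Vershynin covariance-concentration argument (net reduction, Bernstein for sub-exponential summands, union bound), and your handling of the conditioning---fixing $X\in\mathcal{C}$ and running the unconditional proof verbatim under $\mathbb{P}(\cdot\mid X)$---is exactly the right way to obtain the conditional version the paper needs, so there is nothing to compare beyond noting that you have supplied what the paper defers to the reference.

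One small caveat on your final paragraph: the claim that $\epsilon\in(0,1)$ alone forces the quadratic regime in Bernstein is not quite right as written---what is needed is $\epsilon \lesssim K^2$, not $\epsilon<1$. In Vershynin's formulation this is typically absorbed by normalizing so that $K\ge 1$ (or by keeping both regimes in the exponent); since the paper's application has $K=\sqrt{\eta\lambda^+}$ and $\epsilon = \epsilon_0\lambda^-$ with $\epsilon_0$ a small constant and $f=\lambda^+/\lambda^-\ge 1$, the quadratic regime does hold there, but your stated reason for it is slightly off.
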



\begin{proof}[Proof of Lemma \ref{lem:concm}]
The proof  approach is similar to that of \cite[Lemma 7.17]{rrpcp_dynrpca} but the details are different since we use a simpler subspace model.

\emph{Item 1}: Recall that the $(\at)_i$ are bounded r.v.'s satisfying $|(\at)_i| \leq \sqrt{\eta \lambda_i}$. Thus, the vectors, $\at$ are sub-Gaussian with $\norm{\at}_{\psi_2} = \max_i \norm{(\at)_i}_{\psi_2} = \sqrt{\eta \lambda^+}$. We now apply Theorem \ref{thm:versh} with $K \equiv \sqrt{\eta \lambda^+}$, $\epsilon = \epsilon_0 \lambda^-$, $N \equiv \alpha$ and $n_w \equiv r$ to conclude the following: For an $\alpha \geq \alpha_{(0)} := {C (r \log 9 + 10\log n)f^2}$, 
\begin{align*}
\Pr\left( \norm{\frac{1}{\alpha} \sum_t \at \at{}' - \bm{\Lambda}} \leq \epsilon_0 \lambda^-  \right) \geq 1 - 10n^{-10}
\end{align*}
The Lemma statement assumes $\alpha = C f^2 r \log n$. For large $r,n$, this $\alpha > \alpha_{(0)} = C f^2 (r + \log n)$. Thus, the above holds under the Lemma statement.


\emph{Item 2}: For the second term, we proceed as follows. Since $\norm{\bphi} = 1$,
\begin{align*}
\norm{\frac{1}{\alpha} \sum_t \bphi \lt \et{}'\bphi} \leq \norm{\frac{1}{\alpha} \sum_t \bphi \lt \et{}'}.
\end{align*}
To bound the RHS above, we will apply Theorem \ref{thm:matrix_bern} with $\zt =  \bphi \lt \et{}'$. Conditioned on $\{\Phat_*,Z\}$, the $\zt$'s are mutually independent. We first bound obtain a bound on the expected value of the time average of the $\zt$'s and then compute $R$ and $\sigma^2$.
By Cauchy-Schwartz,
\begin{align}
\norm{\ep\left[\frac{1}{\alpha} \sum_t \bphi \lt \et{}'  \right]}^2 &= \norm{\frac{1}{\alpha} \sum_t \bphi \pt \bm{\Lambda} \pt{}' \mot{}' \mtt{}'}^2 \nonumber \\
&\overset{(a)}{\leq} \norm{\frac{1}{\alpha} \sum_t \left(\bphi \pt \bm{\Lambda} \pt{}' \mot{}'\right)\left( \mot \pt\bm{\Lambda}\pt{}'\bphi\right)} \norm{\frac{1}{\alpha} \sum_t \mtt \mtt{}'} \nonumber \\
&\overset{(b)}{\leq}  b_0 \left[ \max_t \norm{\bphi \pt \bm{\Lambda} \pt{}' \mot{}'}^2 \right] \nonumber \\
&\leq b_0 \SE^2(\Phat, \P) q^2 (\lambda^+)^2
\label{lt_wt_bnd}
\end{align}
where (a) follows by Cauchy-Schwartz (Theorem \ref{CSmat}) with $\bm{X}_t = \bphi \pt \bm{\Lambda} \pt{}' \mot{}'$ and $\bm{Y}_t =\mtt$, (b) follows from the assumption on $\M_{2,t}$. To compute $R$
\begin{align*}
\norm{\zt} \leq \norm{\bphi \lt} \norm{\et} &\leq \SE(\Phat, \P) q \eta \rfix \lfp := R
\end{align*}
Next we compute $\sigma^2$. Since $\wt$'s are bounded r.v.'s, we have
\begin{align*}
\norm{\frac{1}{\alpha}\sum_t\ep{ [\zt \zt{}' ]}} &= \norm{\frac{1}{\alpha} \sum_t \ep{\left[\bphi \lt \et{}' \et \lt{}' \bphi  \right]}}
 = \norm{ \frac{1}{\alpha}\ep{[ \norm{\et}^2  \bphi \lt \lt{}' \bphi ]}} \\
&\leq \left( \max_{\et} \norm{\et}^2 \right)\norm{\frac{1}{\alpha}\sum_t\ep{\left[\bphi \lt \lt{}' \bphi \right]}}\\
&\leq q^2 \SE^2(\Phat, \P) \eta r (\lambda^+)^2 \cdot  := \sigma^2
\end{align*}
it can also be seen that $\norm{\frac{1}{\alpha}\sum_t \ep{\left[\zt{}' \zt \right]}}$ evaluates to the same expression. Thus, applying Theorem \ref{thm:matrix_bern}
\begin{align*}
\Pr \left( \norm{\frac{1}{\alpha}\sum_t \bphi \lt \et{}'} \leq \SE(\Phat, \P) \sqrt{b_0} q \lambda^+ + \epsilon  \right) \\
\geq 1- 2n \exp \left( \frac{-\alpha}{4\max\left\{\frac{\sigma^2}{\epsilon^2},\ \frac{R}{\epsilon}\right\} }\right).
\end{align*}
Let $\epsilon = \epsilon_1 \lambda^-$, then, $\sigma^2/\epsilon^2 =c \eta f^2 r$ and $R/\epsilon = c \eta f  r$. Hence, for the probability to be of the form $1 - 2 n^{-10}$ we require that $\alpha \geq \alpha_{(1)} := C \cdot  \eta  f^2  (r \log n)$.

\emph{Item 3}: We use Theorem \ref{thm:matrix_bern} with $\zt:= \bphi \et \et{}' \bphi$. The proof is analogous to the previous item. First we bound the norm of the expectation of the time average of $\zt$:
\begin{align*}
\norm{\ep\left[ \frac{1}{\alpha} \sum \bphi \et \et{}' \bphi  \right]} &= \norm{\frac{1}{\alpha} \sum  \bphi \mtt \mot \pt \bm{\Lambda} \pt{}' \mot{}' \mtt{}' \bphi} \\
&\leq \norm{\frac{1}{\alpha} \sum  \mtt \mot \pt \bm{\Lambda} \pt{}' \mot{}' \mtt{}'} \\
&\overset{(a)}{\leq} \left(\norm{\frac{1}{\alpha} \sum_t \mtt \mtt{}'} \left[ \max_t \norm{\mtt \mot \pt \bm{\Lambda} \pt{}' \mot{} (\cdot){}'}^2\right]\right)^{1/2} \\
&\overset{(b)}{\leq} \sqrt{b_0} \left[\max_t \norm{\mot \pt \bm{\Lambda} \pt{}' \mot{}'\mtt{}'}\right] \leq \sqrt{b_0} q^2 \lambda^+
\end{align*}
where (a) follows from Theorem \ref{CSmat} with $\bm{X}_t = \M_{2,t}$ and $\bm{Y}_t = \mot \pt \bm{\Lambda} \pt{}' \mot{}'\mtt{}'$ and (b) follows from the assumption on $\M_{2,t}$. To obtain $R$,
\begin{align*}
\norm{\zt} = \norm{\bphi \et \et{}' \bphi} &\leq \max_t \norm{\bphi \Mt \P \at}^2 \leq q^2 r \eta \lfp := R
\end{align*}
To obtain $\sigma^2$,
\begin{align*}
\norm{\frac{1}{\alpha}\sum_t\ep{\left[\bphi\et (\bphi\et){}'(\bphi \et) \et{}' \bphi \right]}} &= \norm{\frac{1}{\alpha} \sum_t \ep{\left[\bphi\et \et{}' \bphi \norm{\bphi\et}^2 \right]}} \\
&\leq \left(\max_{\et} \norm{\bphi \et}^2 \right) \norm{\bphi \Mt \pt \bm{\Lambda} \pt{}' \Mt{}' \bphi}  \\
&\leq q^2 r \eta \lfp \cdot q^2 \lfp := \sigma^2
\end{align*}
Applying Theorem \ref{thm:matrix_bern}, we have
\begin{align*}
\Pr\left( \norm{\frac{1}{\alpha} \sum_t \bphi \et \et{}' \bphi} \leq \sqrt{b_0} q^2 \lambda^+ + \epsilon  \right) \geq 1- n \exp \left( \frac{- \alpha \epsilon^2}{2(\sigma^2 + R\epsilon)} \right)
\end{align*}
Letting $\epsilon = \epsilon_2 \lambda^-$ we get $R / \epsilon = c \eta rf$ and $\sigma^2 / \epsilon^2 = c \eta rf^2$. For the success probability to be of the form $1 - 2 n^{-10}$ we require $\alpha \geq \alpha_{(2)} := C \eta \cdot 11 f^2 (r \log n)$.
\end{proof}

The proof of the last two items follow from using \cite[Lemma 7.19]{pca_dd}.

\renewcommand{\zt}{\bm{z}_t}

\section{Proof of Extensions}\label{sec:proof_reprocs-pca-mc}
In this section we present the proof of the extensions stated in Sec. \ref{sec:extensions}.

\subsection{Static Robust PCA}
The proof follows directly from Theorem \ref{thm1} by setting $J=1$. 

\subsection{Subspace Tracking with missing data and dynamic Matrix Completion}
Here we present the proof of the subspace tracking with missing data problem.
The only changes needed for this proof are in the initialization step, i.e., for $j=0$. 
For this we use the following lemma.
\begin{lem}[Lemma 2.1, \cite{matcomp_candes}]\label{lem:candes_rom}
Set $\bar{r} = \max(r, \log n)$. Then there exist constants $C$ and $c$ such that the random orthogonal model with left singular vectors $\Phat_\init$ obeys
$
\Pr\left( \max_i \norm{\I_i{}' \Phat_\init}^2 \leq C \bar{r} / n \right) \geq 1 - c n^{-\beta} \log n.
$
\end{lem}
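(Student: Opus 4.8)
This is Lemma 2.1 of \cite{matcomp_candes}, where the Random Orthogonal Model is the Haar (uniform) measure on the set of $n \times r$ matrices with orthonormal columns, so in the paper it is simply quoted rather than reproved. If one did want to prove it from scratch, the plan is to exploit the Gaussian representation of Haar measure on the Stiefel manifold: write $\Phat_\init = \G (\G{}'\G)^{-1/2}$ where $\G$ is an $n \times r$ matrix with i.i.d. $\mathcal{N}(0,1)$ entries. By rotational invariance of the Gaussian ensemble ($\bm{O}\G \overset{d}{=} \G$ for all $\bm{O} \in O(n)$, which forces the distribution of the polar factor to be left-$O(n)$-invariant, hence Haar) and since $\G$ has full column rank almost surely for $n \ge r$, the matrix $\Phat_\init$ so defined is exactly ROM-distributed. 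Letting $\bm{g}_i \in \Re^r$ denote the $i$-th row of $\G$, the $i$-th row of $\Phat_\init$ is $\bm{g}_i{}'(\G{}'\G)^{-1/2}$, so
\[
\norm{\I_i{}'\Phat_\init}^2 = \bm{g}_i{}'(\G{}'\G)^{-1}\bm{g}_i \le \frac{\norm{\bm{g}_i}^2}{\lambda_{\min}(\G{}'\G)}.
\]

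First I would control the denominator using a standard non-asymptotic bound on the extreme singular values of a Gaussian matrix (e.g. a corollary of the result used as Theorem \ref{thm:versh}, due to \cite{vershynin}, or any equivalent): since here $n \gg r$, one gets $\sigma_{\min}(\G) \ge \sqrt{n} - \sqrt{r} - t$ with probability at least $1 - 2e^{-t^2/2}$, and taking $t = c\sqrt{n}$ yields $\lambda_{\min}(\G{}'\G) \ge 0.9\,n$ except with probability $e^{-cn}$, which is negligible compared with the target failure bound. For the numerator, $\norm{\bm{g}_i}^2 \sim \chi^2_r$, so a chi-square tail bound gives $\Pr(\norm{\bm{g}_i}^2 \ge r + 2\sqrt{rs} + 2s) \le e^{-s}$; choosing $s = (\beta+1)\log n$ makes the probability $n^{-(\beta+1)}$ and the threshold at most $C\max(r,\log n) = C\bar{r}$. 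Combining the two events, for each fixed $i$ we get $\norm{\I_i{}'\Phat_\init}^2 \le C\bar{r}/n$ except with probability $n^{-(\beta+1)} + e^{-cn}$, and a union bound over $i = 1,\dots,n$ gives $\Pr\big(\max_i \norm{\I_i{}'\Phat_\init}^2 \le C\bar{r}/n\big) \ge 1 - n^{-\beta} - n e^{-cn} \ge 1 - c n^{-\beta}\log n$; the extra $\log n$ is pure slack (and also appears naturally if one instead bounds the rows of $\G$ via a cruder Gaussian-maximum estimate).

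There is no real obstacle here beyond bookkeeping: the only point needing a line of care is the justification that $\G(\G{}'\G)^{-1/2}$ genuinely realizes ROM, and after that everything reduces to two routine concentration estimates (the operator-norm bound for a Gaussian matrix and a $\chi^2$ tail) plus a union bound, with the constants $C,c$ fixed at the end so the displayed inequality matches the statement verbatim. For the purposes of the present paper this lemma is used only to initialize the $j=0$ analysis of NORST-Random, where it supplies the $\mu$-incoherence of $\Phat_0 = \Phat_\init$ with $\mu = C\bar{r}/r$, so the only feature that matters downstream is the $\bar{r}/n$ scaling of $\max_i \norm{\I_i{}'\Phat_\init}^2$.
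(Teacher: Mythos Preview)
Your proposal is correct: the paper does not reprove this lemma but simply quotes it from \cite{matcomp_candes}, exactly as you say, and your from-scratch sketch via the Gaussian representation $\Phat_\init = \G(\G{}'\G)^{-1/2}$, an extreme-singular-value bound for $\G$, a $\chi^2$ tail for the row norms, and a union bound is a valid and standard way to establish it. Your remark about how the paper uses it downstream (to supply incoherence of the random initialization in the $j=0$ analysis of NORST-Random) is also accurate.
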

Thus,
\begin{align*}
\Pr\left( \max_i \norm{\I_i{}' \Phat_\init}^2 \leq \mu \bar{r}/n \right) \geq 1 - n^{-10}
\end{align*}
Consider the two scenarios (i) if $r \geq \log n$, then everything discussed before remains true, whereas, if (ii) $r \leq \log n$, we redefine $\mu^2  = C \log n / r$ and thus in the interval $[t_0, t_1]$ we require $\outfraccol \leq \frac{0.01}{\log n}$. Further, using the bound on $\outfraccol$ it follows from triangle inequality that
\begin{align*}
\max_{\T \leq 2s} \norm{\I_{\T}{}' \Phat_\init}^2 \leq 2s \max_i \norm{\I_i{}' \Phat_\init}^2 \leq \frac{2s \mu r}{n} < 0.01
\end{align*}

We only mention the changes needed for Lemma \ref{lem:reprocspcalemone} for when $j=0$ since the initialization is different. The rest of the argument of recursively applying the lemmas hold exactly as before. First, $t_\train=1$ since we use random initialization. Thus from the Algorithm, $\that_0 = t_\train=1$.

\begin{proof}
{\em Proof of item 1. }
Since the support of $\xt$ is known, the LS step gives
\begin{align*}
\shatt = \bm{I}_{\Tt}\left(\bpsi_{\Tt}{}'\bpsi_{\Tt}\right)^{-1}\bpsi_{\Tt}{}'(\bpsi \lt + \bpsi \st) = \bm{I}_{\Tt}\left(\bpsi_{\Tt}{}'\bpsi_{\Tt}\right)^{-1} \I_{\Tt}{}' \bpsi\lt + \st
\end{align*}
Thus $\et = \shatt-\st$ satisfies
\begin{align*}
\et &= \bm{I}_{\Tt}\left(\bpsi_{\Tt}{}'\bpsi_{\Tt}\right)^{-1} \I_{\Tt}{}' \bpsi\lt  
\end{align*}
Now, from the incoherence assumption on $\Phat_\init$, Lemma \ref{kappadelta}, the bound on $\outfraccol$, and recalling that in this interval, $\bpsi = \I - \Phat_\init \Phat_\init{}'$ we have
\begin{align*}
\max_{|\T| \leq 2s} \norm{\I_{\T}{}' \Phat_\init}^2 \leq \frac{2s \mu r}{n} \leq 0.09 \implies \delta_{2s}(\bpsi) \leq 0.3^2 < 0.15, \\
\norm{\left(\bpsi_{\Tt}{}'\bpsi_{\Tt}\right)^{-1}} \leq \frac{1}{1 - \delta_s(\bpsi)} \leq \frac{1}{1 - \delta_{2s}(\bpsi)} \leq \frac{1}{1- 0.15} < 1.2= \phi^+.
\end{align*}
Secondly,
\begin{align}
\norm{\itt {}' \bpsi \P_0} & \leq \left(\norm{\itt{}' \P_0} + \norm{\itt{}' \Phat_\init}\right) \leq 0.3 + 0.3 = 0.6
\label{use_Mstbnd}
\end{align}
Thus, $\norm{\itt {}' \bpsi \lt} \le 0.6 \sqrt{\eta r \lambda^+}$.

\emph{Proof of Item 2}:
Since $\lhat_t = \lt - \et$ with $\et$ satisfying the above equation, updating $\Phat_{(t)}$ from the $\lhatt$'s is a problem of PCA in sparse data-dependent noise (SDDN), $\et$. To analyze this, we use the PCA-SDDN result of Theorem \ref{cor:pcasddn} (this is taken from \cite{pca_dd}). Recall from above that, for $t \in [\that_0, \that_0 + \alpha]$, $\That_t = \T_t$, and $\lhat_t = \lt - \et$. Recall from the algorithm that we compute the first estimate of the $j$-th subspace, $\Phat_{j, 1}$, as the top $r$ eigenvectors of $\frac{1}{\alpha}\sum_{t = t_0}^{t_0 + \alpha - 1} \lhat_t \lhat_t{}'$. In the notation of  Theorem \ref{cor:pcasddn}, $\yt \equiv \lhat_t$, $\wt \equiv \et$, $\lt \equiv \lt$ and $\M_{s, t} = -\left( \bpsi_{\T_t}{}' \bpsi_{\T_t}\right)^{-1} \bpsi_{\T_t}{}'$ and so $\norm{\M_{s, t} \P} = \| \left( \bpsi_{\T_t}{}' \bpsi_{\T_t}\right)^{-1} \bpsi_{\T_t}{}' \P_0\| \leq \phi^+ \cdot 0.6 = 0.72 := q_0$. This follows using \eqref{use_Mstbnd}.
Also, $b_0 \equiv \outfracrow^\alpha$.

Applying Theorem \ref{cor:pcasddn} with $q \equiv q_0$, $b_0 \equiv \outfracrow^\alpha$ and setting $\varepsilon_{\text{SE}} = q_0/4$, observe that we require
\begin{align*}
\sqrt{b_0} q_0 f \leq \frac{0.9 (q_0/4)}{1 + (q_0/4)}.
\end{align*}
This holds if $\sqrt{b_0}f \leq 0.12$ as provided by Theorem \ref{thm1}. Thus, from Corollary \ref{cor:pcasddn}, with probability at least $1 - 10n^{-10}$, $\SE(\Phat_{j, 1}, \P_0) \leq q_0/4$. 
\end{proof}

\subsection{Fewer than $r$ directions change}
\begin{proof}[Proof of Corollary \ref{cor:rch}]
The proof is very similar to that of Theorem \ref{thm1}. The only changes occur in the \\
\emph{(a) Projected CS step}. With the subspace change model, we define
$
\lt = \P_{(t)} \at :=
\begin{bmatrix} \P_{j-1, \fx} & \P_{j, \chd} \end{bmatrix}
\begin{bmatrix}
\atf \\
\atr
\end{bmatrix}
$
where $\atf$ is a $(r - r_\ch) \times 1$ dimensional vector and $\atr$ is a $r_\ch \times 1$ dimensional vector. 
In the first $\alpha$ frames after the $j$-th subspace changes (or, the $j$-th subspace change is detected in the automatic case), recall that $\Phat_{(t)} = \Phat_{j-1}$. Thus, $\SE(\Phat_{(t)}, \P_{j-1, \fx}) =
\SE(\Phat_{j-1}, \P_{j-1, \fx}) \leq \SE(\Phat_{j-1}, \P_{j-1}) \leq \zz$ and so, the error can be bounded as
\begin{align*}
\norm{\bpsi \lt} \leq \norm{\bpsi \P_{j-1, \fx} \atf} + \norm{\bpsi \P_{j, \chd} \atr} \leq \zz \sqrt{\eta (r - r_\ch) \lambda^+} + (\zz + \SE(\P_{j-1}, \P_j)) \sqrt{\eta r_\ch \lambda_\ch^+}
\end{align*}
In the second $\alpha$ frames, have $\Phat_{(t)} = \basis(\Phat_{j-1}, \Phat_{j,1})$. Thus $\SE(\Phat_{(t)}, \P_{j-1,\fx}) \le
\SE(\Phat_{j-1}, \P_{j-1, \fx}) \leq \SE(\Phat_{j-1}, \P_{j-1}) \leq \zz$ and $\SE(\Phat_{(t)},\P_{j, \chd}) \le  \SE(\Phat_{j,1}, \P_{j, \chd}) \leq \SE(\Phat_{j,1}, \P_{j}) \leq 0.3 \cdot (\zz + \SE(\P_{j-1}, \P_j))$. Thus, the error in the sparse recovery step in the interval after the  first subspace update is  performed is given as
\begin{align*}
\norm{\bpsi \lt} \leq \zz \sqrt{\eta (r - r_\ch) \lambda^+} + 0.3 \cdot (\zz + \SE(\P_{j-1}, \P_j)) \sqrt{\eta r_\ch \lambda_\ch^+}
\end{align*}
The rest of the proof follows as before.
The error after the $k$-th subspace update is also bounded using the above idea.


\emph{(b) Subspace Detection step}:
The proof of the subspace detection step follows exactly analogous to Lemma \ref{lem:sschangedet}. One minor observation is noting that $\SE(\P_{j-1}, \P_J) = \SE(\P_{j-1, \ch}, \P_{j, \chd})$ in the proof of part (a) of Lemma \ref{lem:sschangedet}. The rest of the argument is exactly the same.
\end{proof}

\end{document}